\newtheorem{lemma}{Lemma}[section]
\newtheorem{theorem}[lemma]{Theorem}
\newtheorem{proposition}[lemma]{Proposition}
\newtheorem{corollary}[lemma]{Corollary}
\newtheorem{definition}[lemma]{Definition}
\newtheorem{remark}[lemma]{Remark}
\numberwithin{equation}{section}
\def\theequation{\arabic{section}.\arabic{equation}}
\def\mD{{\mathbb D}}
\def\mE{{\mathbb E}}
\def\mF{{\mathbb F}}
\def\mG{{\mathbb G}}
\def\mM{{\mathbb M}}
\def\mS{{\mathbb S}}
\def\mT{{\mathbb T}}
\def\mV{{\mathbb V}}
\def\tr{\mathrm {tr}}
\def\no{\nonumber}
\def\cA{{\mathcal A}}
\def\cD{{\mathcal D}}
\def\cF{{\mathcal F}}
\def\cL{{\mathcal L}}
\def\ee{\boldsymbol{\epsilon}}
\newcommand{\mfg}{\mathfrak{g}}
\def\bt{\begin{theorem}}
	\def\et{\end{theorem}}
\def\bl{\begin{lemma}}
	\def\el{\end{lemma}}
\def\br{\begin{remark}}
	\def\er{\end{remark}}
\def\bx{\begin{Examples}}
	\def\ex{\end{Examples}}
\def\bd{\begin{definition}}
	\def\ed{\end{definition}}
\def\bp{\begin{proposition}}
	\def\ep{\end{proposition}}
\def\bc{\begin{corollary}}
	\def\ec{\end{corollary}}
\def\<{{\langle}}
\def\>{{\rangle}}
\def\p{\partial}
\def\dif{{\mathord{{\rm d}}}}
\def\Tr{\mathrm{Tr}}
\def\hol{\textnormal{hol}}
\def\neg{\textnormal{\tiny neg}}
\def\pos{\textnormal{\tiny pos}}
\definecolor{codegreen}{rgb}{0,0.6,0}
\definecolor{codegray}{rgb}{0.5,0.5,0.5}
\definecolor{codepurple}{rgb}{0.58,0,0.82}
\definecolor{backcolour}{rgb}{0.95,0.95,0.92}
\tikzset{
    midarrow/.style={postaction={decorate,decoration={markings,mark=at position 0.6 with {\arrow{>}}}}},
    midarrow1/.style={postaction={decorate,decoration={markings,mark=at position 0.45 with {\arrow{>}}}}},
    midarrow2/.style={postaction={decorate,decoration={markings,mark=at position 0.75 with {\arrow{>}}}}},
}
\colorlet{darkblue}{blue!90!black}
\colorlet{darkred}{red!90!black}
\colorlet{darkgreen}{green!70!black}
\title{Makeenko-Migdal equations for 2D Yang--Mills: \\ from lattice to continuum}
	\author{Hao Shen}
	\address[H. Shen]{Department of Mathematics, University of Wisconsin - Madison, USA}
	\email{pkushenhao@gmail.com}
	\author{Scott A. Smith}
	\address[S. A. Smith]{Academy of Mathematics and Systems Sciences, Chinese Academy of Sciences, Beijing, China
	}
	\email{ssmith@amss.ac.cn}
	\author{Rongchan Zhu}
	\address[R. Zhu]{Department of Mathematics, Beijing Institute of Technology, Beijing 100081, China 
	}
	\email{zhurongchan@126.com}
\begin{document}

\maketitle

\markboth{}{} 

\begin{abstract}
In this paper, we prove the convergence of 
the
 discrete Makeenko--Migdal equations for Yang--Mills model on $(\eps \Z)^{2}$ to their continuum counterparts on the plane, in an appropriate sense.
  The key step in the proof is identifying the limits of the contributions from deformations as the 
  area derivatives of the Wilson loop expectations.
\end{abstract}

\setcounter{tocdepth}{2}
\tableofcontents

\section{Introduction}
\label{sec:intro}
The goal of the paper is to prove that for the  two dimensional Yang--Mills model, the lattice master loop equations converge to the continuum ones. These equations are also known as Makeenko--Migdal equations or Dyson--Schwinger equations. We first recall the basic settings for the lattice and the continuum Yang--Mills models.

For $\eps>0$,
let $\Lambda$ be a finite subset of $(\eps\Z)^d$.
Let $E^+$ and $E^-$ be the set of positively and negatively oriented bonds of $(\eps\Z)^d$,
and  denote by $E_\Lambda^+$, $E_\Lambda^-$ the corresponding subsets
of bonds with both beginning and ending points in $\Lambda$. Define $E\eqdef E^+\cup E^-$. 
For $e\in E$, let $u(e)$ and $v(e)$ denote its starting point and ending point respectively. 
A path  in  $(\eps\Z)^d$ is defined to be a sequence of bonds $e_1e_2\cdots e_n$ with $e_i\in E$ and $v(e_i)=u(e_{i+1})$ for $1\le i \le  n-1$. It is called a closed path if $v(e_n)=u(e_1)$.  

A plaquette  is a closed path of length $4\eps$ which traces out the boundary of a square.  
Denote $\CP^+_\Lambda$ for the set of plaquettes 
$p=e_1e_2e_3e_4$ such that all the vertices of $p$ are in $\Lambda$ and
$u(e_1)$ is lexicographically the smallest vertex  and  $v(e_1)$ is the second smallest.

We fix a Lie group $G=U(N)$, and write $\mfg$ for its Lie algebra with the inner product given by
\begin{align}\label{def:inn}
	\<X,Y\>\eqdef N\Tr(XY^*),\qquad X,Y\in \mfg.
\end{align}
Throughout the paper 
we will write $\Tr$ for the usual trace and $\tr=\frac1N\Tr$.

The lattice Yang-Mills theory
on $\Lambda$  with $\beta\in\R$\footnote{Typically, one takes $\beta \geq 0$ although the measure is well-defined even for $\beta<0$.} the inverse coupling constant, is the
probability measure $\mu^\eps_\Lambda$  on the set of all collections $Q = (Q_e)_{e\in E_\Lambda^+}$ of $G$-matrices, defined as 
\begin{equation}\label{measure}
	\dif\mu^\eps_\Lambda(Q)
	:= \frac{1}{Z_{\Lambda}^{\eps} }
	\exp\biggl(-N\beta \, \Re \sum_{p\in \CP^+_\Lambda} \Tr(I-Q_p)\biggr)
	 \prod_{e\in E^+_\Lambda} \dif Q_e\, ,
\end{equation}
where $Z_{\Lambda}^{\eps} $ is the normalizing constant,  $Q_p \eqdef Q_{e_1}Q_{e_2}Q_{e_3}Q_{e_4}$ for a plaquette $p=e_1e_2e_3e_4$, and $\dif Q_e$ is the Haar measure on $G$.
Note that for $p\in \CP^+_\Lambda$ the bonds $e_{3}$ and $e_{4}$ are negatively oriented;
throughout the paper we define $Q_{e}\eqdef Q_{e^{-1}}^{-1}$ for $e \in E^{-}$, where $e^{-1}\in E^+$ denotes the bond with orientation reversed.

The measure $\mu^\eps_\Lambda$ is invariant under any gauge transformation 
$g:\Lambda\to G$
defined by 
$Q_e\mapsto g_{u(e)} Q_e g_{v(e)}^{-1}$  for all $e\in E^+_\Lambda$.
Functions of $Q$ which are invariant under all gauge transformations are called gauge invariant observables.
The gauge invariant observables which satisfy the master loop equations
are the {\it Wilson loops}. 

Given a loop
$ l  = e_1 e_2 \cdots e_n$, meaning that $l$ is a closed path without any backtracking modulo cyclic permutation equivalence, 
 the Wilson loop variable $W_l^\eps$ is defined as
\begin{equ}[e:def-Wl]
W_l^\eps = \frac1N\Tr (Q_{e_1}Q_{e_2}\cdots Q_{e_n})
=\tr (Q_{e_1}Q_{e_2}\cdots Q_{e_n}) \;.
\end{equ}
The notion of Wilson loops can be generalized to a collection of loops, see Section~\ref{sec:ext1}.

\begin{remark}\label{rem:loop}
The well-definedness of a loop (especially the ``no backtracking'' condition) is slightly more subtle; also we will talk about the {\it location} of a bond in a loop which is understood in the natural way. 
Since these subtleties are of minimal importance in this paper we refer to \cite[Section~2.1]{Cha}. Also for simplicity we will not introduce any additional notation (such as $l=[e_1 e_2 \cdots e_n]$) to distinguish a closed path and its cyclic equivalence class since it will be clear from the context.
\end{remark}

\begin{remark}\label{rem:Wilson-model}
The above model \eqref{measure} is called the Wilson model. The Wilson model can be defined more generally: for an abstract Lie group $G$ (not necessarily a matrix group), together with a unitary representation of $G$, one can replace $\Tr$ in \eqref{measure} and in the definition of Wilson loops  by  the character  $\chi$ of the representation (see e.g. \cite[Def.~8.4]{Driver89}) or general class functions on $G$. Here we just focus on the concrete case where $G$ is simply a  group of $N\times N$ matrices naturally representing on $\C^N$,
so that  the character is just $\Tr$.
\end{remark}

We will investigate the scaling limit problem in $d=2$, namely the limit of the loop equations 
on $(\eps \Z)^2$ as $\eps \to 0$.
We consider the well-known scaling (e.g. \cite[Section~3]{Chatterjee18}):  
\begin{equ}[e:scale-beta]
\beta=\eps^{-2} \;. 
\end{equ}

To state the lattice master loop equation, we need some additional notation,
including the set of deformations $\mathbb{D}^\pm(l)$ and splittings $\mS^\pm (l)$
of a given loop $l$.  In words, the set $\mathbb{D}^\pm(l)$ corresponds to all possible loops obtained by adjoining a plaquette to some bond of $l$, where the bond is either removed (negative deformation) or repeated (positive deformation) in the process.  It can be expressed as a union of the smaller sets $\mathbb{D}_e^\pm(l)$ containing only the deformations along a specific bond $e$ in $l$.  The set $\mS^\pm (l)$ consists rather of \textit{pairs} of loops obtained from splitting a single loop along a repeated bond \footnote{In particular, this set is empty for simple loops, where each bond appears only once.}, and the set $\mS_e^\pm (l)$ is understood in a similar way.  We will recall the precise definitions in Section~\ref{sec:tools}. 

Fixing a bond $e$ in the loop $l$, the (single location) master loop equation for $G=U(N)$ is given by
\begin{equs}[eq:wl]
\E	W_l^\eps 
& =\frac{1}{2\eps^2} \!\!\! \sum_{l'\in \mathbb{D}^-_e(l)}\!\!\E W_{l'}^\eps
	-\frac{1}{2\eps^2} \!\!\! \sum_{l'\in \mathbb{D}^+_e(l)} \!\!\E W_{l'}^\eps
	+\!\!\!\sum_{l'\in \mS^{-}_e(l)}\!\!\E W_{l'}^\eps
	-\!\!\! \sum_{l'\in \mS^{+}_e(l)}\!\!\E W_{l'}^\eps.
\end{equs}
We note that if $e$ appears several times in $l$, the deformations $\mathbb{D}_e^\pm(l)$, splittings $\mS_e^\pm (l)$ also depend on the location of $e$,
and this location is also fixed here,  see Section~\ref{sec:notation}.

Here \eqref{eq:wl} is the version of 
the equation derived in \cite[Theorem~5.7]{CPS2023}
where we replaced $\beta$ by $\eps^{-2}$ as in \eqref{e:scale-beta}. 
It was first derived by \cite{Cha} by Stein's method (and then by \cite{SSZloop} using the Langevin dynamics and It\^o formula and by \cite{OmarRon} using the integration by parts formula) where $e$ is summed instead of fixed, which is slightly weaker.

We will focus on the case $d=2$ and consider the limit of the master loop equations in the double limit where first $\Lambda$ approaches $(\eps\Z)^2$, then $\eps \to 0$. 
Note that for $\eps$ fixed, the joint law of Wilson loops does not depend on how $\Lambda$ approaches 
 $(\eps\Z)^2$ (see e.g.  \cite[Theorems~7.2 and 7.4]{Driver89}),
and the above master loop equations hold on $(\eps\Z)^2$.
For the remainder of the paper, we will focus on the second limit and work on the entire plane and  the infinite volume limit $\mu^\eps$ instead of $\mu^\eps_\Lambda$.


In continuum, for $\beta \geq 0$ the Yang--Mills measure is formally defined by  
\begin{equ}[e:cont-YM]
\exp\Big(-\frac{\beta}{2}\sum_{i<j}\int_{\R^d} |F_A^{ij}(x)|^{2} \dif x\Big) \dif A , 
\end{equ}
where  $A$ is a $\mfg$-valued 1-form called a connection and $F_A$ is its curvature two form given by 
$F_A^{ij} = \partial_i A_j - \partial_j A_i + [A_i,A_j]$, $i,j=1,\dots,d$. 

In the setting where $A$ is smooth, the {\it Wilson loops} are defined as $\tr\,\hol (A,l)$,
namely the (normalized) trace of the holonomy of $A$ around the loop $l:[0,1]\to \R^d$.
Recall that the holonomy is defined by $\hol (A,l)\eqdef h(1)\in G$ where 
$h$ solves the parallel translation ODE $\dif h(s)=h(s)\langle A(l(s)),\dif l(s)\rangle$.
In $d=2$, it is well-known that even the Gaussian free field lives in the space of distributions $\mathcal C^\alpha$ for $\alpha<0$, and there is not any classical meaning for holonomies of {\it generic} distributions $A$. 
However, the notion of  Wilson loops was rigorously defined for 2D Yang--Mills, simultaneously by \cite{MR1015789,Driver89}, in which
the authors use an axial gauge fixing to represent the continuum Yang--Mills measure on the plane as a Gaussian measure, and define  the parallel translation of a connection $A$ along a loop $l$ by the solution to a stochastic differential equation. 
This also 
allows for rigorous computations of expectations of Wilson loops in continuum,
where the heat kernel on the Lie group plays a central role.
Driver \cite{Driver89} also gave a formula  for the joint law of a general class of  self-intersecting Wilson loops in the plane. 
In the work \cite{MR2667871} and \cite{Levy11}, L\'evy 
takes Driver's  formula as the definition of the Yang--Mills measure on a graph and then uses the consistency of this measure under subdivision to construct a continuous theory. More recently \cite[Section~3]{CCHS2d} constructed a 
space of connections (which can be embedded into $\mathcal C^\alpha$ for $\alpha<0$)
where every connection in this space has a deterministic notion of holonomies and corresponding Wilson loops,
and there is a random field $A$ (see  \cite{Chevyrev19YM,Chevyrev2023}) taking values in this space,
whose  holonomies coincide with the previous literature i.e. \cite{MR2667871}.
In this paper we will take these equivalent definitions of Wilson loops for 2D Yang--Mills theory,
and we write $W_l = \tr(\hol (A,l))$,
but in terms of calculations 
the main tool for us is Driver's formula \cite{Driver89} and
we review it in Section~\ref{sec:Dri}. 

The Wilson loops in continuum satisfy the continuum master loop equations.
These equations were proposed by physicists \cite{MM1979}  in a heuristic way, and then rigorously formulated and 
derived by \cite{Levy11}, with various proofs and simplifications in the later works  in \cite{MR3554890,Driver17,MR3982691,PPSY2023}. 
Let us first recall the formulation in \cite{Driver17}.
Suppose that $l$ is a loop in $\R^2$ with a {\it simple} crossing
 at point $v$. 
(We refer to \cite[below (1.2)]{Driver17} for the precise meaning of simple crossing,
in particular this means that $l$ passes through $v$ exactly twice, each time transversely.) 
 We parametrize $l$ on $[0,1]$ with $l(0)=l(1)=v$ and there exists a unique $s_0\in (0,1)$ with $l(s_0)=v$. We label the outgoing edges $e_1,\dots,e_4$ at $v$ in cyclic order, and  the loop $l$ starts from $e_1$ and ends through $e_3^{-1}$, 
see Figure~\ref{fig:DHKcase}  for example. (We also refer to Section \ref{sec:Dri} for precise definition of edges in graphs.) 
 We then denote the faces $F_1,\dots,F_4$ adjacent to $v$ in cyclic order such that $e_1$ lies between $F_4$ and $F_1$, $e_2$ lies between $F_1$ and $F_2$, etc. 
 We set $t_i=|F_i|$, the area of the face $F_i$. Then, for $G=U(N)$, \cite[Theorem~1.1 eq.(1.3)]{Driver17} states that 
\begin{equ}\label{eq:wlc}
	\big(\p_{t_1}-\p_{t_2}+\p_{t_3}-\p_{t_4}\big)
	\E W_l 
	=\E \big( W_{l_1} W_{l_2}\big)
\end{equ}
where $l_1$ and $l_2$ are the restrictions of $l$ to $[0,s_0]$ and $[s_0,1]$, respectively.  Note that the loop $l$ starts from $v$ at time $0$ and proceeds across $e_{1}$, then visits various unspecified edges until passing through $e_{4}^{-1}$ and reaching $v$ again at time $s_{0}$ (forming $l_1$), then proceeds from $v$ across $e_{2}$, following some additional unspecifed edges until crossing $e_{3}^{-1}$ and arriving back at $v$ at time $1$ (forming $l_2$), see \eqref{e:genl} for the precise formulation. Interestingly, equation \eqref{eq:wlc} only depends on the local topological behavior around the vertex $v$ and the area derivatives, not on the whole structure of the loop.

\begin{figure}[h] 
  \centering
\begin{tikzpicture}[scale=1.5]
\filldraw [black] (0,0) circle (1pt); \node at (0, -0.2) {$v$};
\draw[thick,->] (0,0) -- (1,1) node[midway,right] {$e_1$};
\draw[thick,->] (0,0) -- (1,-1) node[midway,right] {$e_4$};
\draw[thick,->] (0,0) -- (-1,1) node[midway,left] {$e_2$};
\draw[thick,->] (0,0) -- (-1,-1) node[midway,left] {$e_3$};
\node at (0, 0.7) {$F_1$};\node at (0, -0.7) {$F_3$};
\node at (0.7,0) {$F_4$};\node at (-0.7,0) {$F_2$};
\end{tikzpicture}
\qquad\qquad
\includegraphics[scale=0.6]{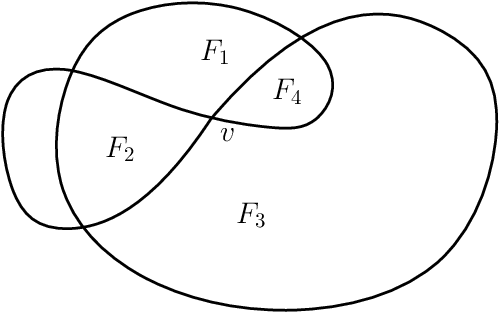}
  \caption{An example of a  loop  and a neighborhood around a vertex $v$.}  
  \label{fig:DHKcase}
\end{figure}

Both the lattice and the continuum loop equations \eqref{eq:wl} 
and \eqref{eq:wlc} carry significant conceptual and practical information about the
 lattice or continuum Yang--Mills measure.  Most existing applications of \eqref{eq:wlc} are in the limit $N \to \infty$, known as large $N$ problems, where the RHS of the equation factorizes into a product of expectations and the utility of \eqref{eq:wlc} becomes particularly apparent.  Rigorous work in the continuum begins with \cite{Levy11} on the plane and remains an ongoing area of investigation for compact surfaces, the sphere case being studied in \cite{hall2018large,dahlqvist2020yang} and more recently, progress on general surfaces in \cite{dahlqvist2023large, Dah2022II}.  Similarly, the master loop equation in the discrete setting is used to study large $N$ problems on the lattice in \cite{Cha,MR3861073,borga2024surface}. They are also closely related with surface summations (c.f.  \cite{Cha,ChatterjeeJafar,CPS2023}). We also mention that Dyson--Schwinger equations are useful for large $N$ problems of other models (see e.g. \cite{SZZ23}). 
  
{\it Heuristically}, the lattice loop equations \eqref{eq:wl} 
and the continuum  loop equations \eqref{eq:wlc}  have some obvious similarities.
For instance $\{l_1, l_2\}$ in  \eqref{eq:wlc} is a splitting of $l$, and the area derivatives 
in  \eqref{eq:wlc} are reminiscent of deformations 
 in a certain infinitesimal sense. 
 Some proofs of \eqref{eq:wl} and \eqref{eq:wlc} also share similar philosophies. For instance, \cite{Driver17} in continuum and 
 \cite{Cha} on the lattice both use integration by parts;
 \cite{MR3982691}  in continuum and \cite{SSZloop} on lattice both apply 
It\^o formula, to the SDE of parallel translations in the former paper, and  to the  SDE of Langevin dynamic in the latter paper.
In fact, the early physics paper  \cite{MM1979} also describes their heuristics in a mix of discrete and continuum settings.

However, despite the conceptual  similarities, 
the  loop equations \eqref{eq:wl} and \eqref{eq:wlc} have far-apart mathematical forms (see  remarks along this line in the end of \cite[Section~1]{Driver17} and  \cite[Section~1]{MR3982691}).
In particular, the lattice loop equation \eqref{eq:wl} is parametrized by the choice of a local bond $e$, whereas the continuum loop equation only involves the variation of a ``global'' quantity which is the area.
Moreover, we  emphasize that while a negative deformation appearing in the lattice loop equation is very intuitively related to ``discrete area derivatives'', it is not obvious for  positive deformations  (see \eqref{eq:def} below). 
The natural questions we are interested in are thus (see Fig.~\ref{fig:DHKcase}):
\begin{enumerate}
\item Suppose we have a sequence of lattice loops approximating $l$ in a proper sense, then, what happens in the limit as $\eps\searrow 0$, if we consider the lattice loop equation \eqref{eq:wl} where we deform at a bond $e$ inside the (lattice approximations of) edges $e_1,\cdots,e_4$?
\item Suppose in the above  approximations, $v$ is replaced by a lattice bond $e$, then what happens in the limit as $\eps\searrow 0$ for the lattice loop equation \eqref{eq:wl} with respect to this bond $e$?
\item How, or in what sense, do the loop equations \eqref{eq:wl} converge to \eqref{eq:wlc}? Which bond(s) does one need to choose (or how to combine them) in \eqref{eq:wl} for this convergence to occur? What happens for each term in \eqref{eq:wl} and what kind of cancellations occur during this limiting procedure?
\end{enumerate}
In this paper we address these questions.
We will also include the cases $G\in \{SU(N), SO(N)\}$, and extend the convergence result to a string of loops. 
The proof of this convergence is more subtle than one might first expect.
For instance, we will need to find certain cancellations  in  \eqref{eq:wl}; and,
 in the discrete equation \eqref{eq:wl} one needs to pick a bond $e$,
and we will actually  combine  \eqref{eq:wl} with several choices of 
the bond $e$ in a suitable way in order to see the limiting equation
 \eqref{eq:wlc}. 
 
Finally, although this was not the main motivation for this paper, we note that our convergence results  in particular provide {\it one more} proof of the master loop equation in continuum, in addition to the earlier proofs \cite{Levy11,MR3554890,Driver17,MR3982691,PPSY2023}. 
 
\subsection{Main results}  

We state the main result of this paper, with more precise formulation postponed in the later sections.

\bt\label{th:main} 
Suppose $d=2$ and $G=U(N)$. Let $l$ be a loop in $\R^2$ with a simple crossing at the point $v$ as described above \eqref{eq:wlc}. 

Then for a general class of 
lattice approximations $\{l^\eps\}_\eps$ of $l$,  
as well as a general rule of selecting bonds in $l^\eps$, 
 suitable linear combinations
of the discrete master loop equations \eqref{eq:wl} for $l^\eps$ with these selected bonds converge
 to the continuum master loop equation \eqref{eq:wlc} as $\eps\to0$. 
\et

The more precise version of Theorem~\ref{th:main} is stated in Theorem~\ref{th:g1} and Corollary~\ref{cor:lin-comb}.
In particular, 
the lattice approximation $\{l^\eps\}_\eps$ of $l$ is defined in Section \ref{sec:Dri} and Section \ref{sec:gen}.   
Regarding the selection of bonds in the above statement, 
we will consider a rule described as in Definition~\ref{def:rule} to prove our convergence result, and then extend it to more general ways of selecting the bonds as described in Corollary~\ref{cor:lin-comb}.
The phrase ``suitable linear combinations'' is an important point in the theorem.
We note that a short soft argument for Theorem \ref{th:main}  can be given if one appeals to the corresponding continuum result \eqref{eq:wlc}.
We prefer to avoid using the continuum result, as this forces us to carry out a more refined analysis of the deformation terms which we believe is insightful in its own right. For a more technical discussion on this point, we refer the reader to the last paragraph at the end of Section \ref{sec:gen}.

\br \label{rem:conv-eq}
Throughout the paper,  when we say that  ``the equation $(A_\eps)$ converges  to equation $(A)$ as $\eps\to 0$'', we  mean that as we take $\eps\to0$ on both sides of the equation $(A_\eps)$, the resulting limit is equation $(A)$. Here we allow moving terms to either side of the limiting equation. 
\er 

We  extend  Theorem \ref{th:main} to a collection of loops in Section~\ref{sec:ext}.
In this case, we also obtain the splitting term as in the RHS of \eqref{eq:wlc}, which arises from 
splitting terms $\mS(l)$ in the lattice loop equations. 
See \eqref{ms:s}. 
  If we consider a collection of loops $l_1,\dots,l_n$,
where $l_1, l_2$ intersect at a point $v$,  the merger term may also appear. 
   More precisely, we can also find a suitable lattice approximation $\{l_1^\eps,\dots,l_n^\eps\}$  such that suitable linear combinations of the discrete master loop equation \eqref{eq:wl} converge to
\begin{align*}
	\Big(\frac{\p}{\p t_1}-\frac{\p}{\p t_2}+\frac{\p}{\p t_3}-\frac{\p}{\p t_4}\Big)
	\E \Big(\prod_{i=1}^n W_{l_i}\Big)
	=
	\frac1{N^2} 
	\E \Big( W_{l_{12}}  \prod_{i=3}^n W_{l_i}\Big)
\end{align*}
 where the new loop $l_{12}$  is the merger of $l_1$ and $l_2$, see Theorem \ref{th:g2} below for the precise statement.

We can also extend the result to $G=SU(N)$ and $G=SO(N)$. For the case of $SO(N)$, a new action called twisting will appear (see Section \ref{sec:e2} for more details).

\medskip

Let us preview some main ingredients of our proof.
The key step in the proof is to choose appropriate bonds $e$ and identify the limit of the following terms arising from deformations:
\begin{align}\label{dis:def}
	\frac1{2\eps^2}\sum_{l'\in \mathbb{D}^-_e(l)}\!\!\E W_{l'}^\eps
	-\frac{1}{2\eps^2} \!\!\! \sum_{l'\in \mathbb{D}^+_e(l)} \!\!\E W_{l'}^\eps
\end{align}
from \eqref{eq:wl}, interpreting them as the appropriate area derivative of the expectation of the Wilson loop. To achieve this, we express the deformation term \eqref{dis:def} in terms of integrals over the Lie group $G$ with respect to the ``Wilson action'' (see \eqref{ac:wi} below) using Driver's formula (see \eqref{dri:dis} below). We then convert these integrals on $G$ into corresponding Gaussian integrals on the Lie algebra $\mfg$ by dividing the integral into `small field' region and the `large field' region -- see Section~\ref{sec:Gaussian}.
 This allows us to identify the limit of \eqref{dis:def} as the appropriate gradient of the heat kernel, which yields the area derivatives by applying the integration by parts formula on the Lie group. 
Another important tool we exploit is the Peter--Weyl spectral theory for $L^2$ class functions on Lie groups,
 allowing us to 
decompose the heat kernel  and Wilson action  in terms of the irreducible characters, which we review in Section~\ref{sec:Lie}. 
Finally, the suitable choice of bonds as mentioned in Theorem~\ref{th:main} is important to derive \eqref{eq:wlc}: it is natural to approximate the crossing vertex $v$ by  a small edge $e^\eps$ and apply \eqref{eq:wl} for a bond in $e^\eps$ to obtain a splitting term;
however, in doing so, certain correction terms arise from integration by parts.
To cancel these, we further select  bonds from edges adjacent to $e^\eps$, yielding the continuum loop equation \eqref{eq:wlc}.
In all these calculations, fixing axial gauge in a suitable way turns out to be helpful.

\medskip

We conclude by mentioning some questions for possible future studies. 
Based on ideas in \cite{Driver17}, the paper 
\cite{MR3631396} extended the proof of master loop equations from the plane to compact surfaces, and it would be interesting to derive a version of discrete  loop equations on surfaces and show their continuum limits.  We also remark that the lattice  loop equations hold in {\it all} dimensions, and it would not be hard to derive them for lattice Yang--Mills--Higgs models (the  Langevin dynamics were derived in \cite{SZZ2024Higgs}) as well, but the continuum analogues  of such loop equations in $d\ge 3$ or with Higgs are far from being understood.  Our present paper relies on Driver's formula in $d=2$, which is unavailable for these problems, and a natural question is whether our results can be proved in a way that relies less heavily on the exact integrability properties of the model.  This would be interesting already for the case of simple loops as in Theorem \ref{th:1}, as this connects naturally to understanding the relationship between the master loop equations and Wilson's area law, particularly in the continuum limit.  
Finally, due to the close relation between the loop equations and surfaces summations  (c.f.  \cite{Cha,ChatterjeeJafar,CPS2023})
it would be great if our  analysis of the limiting behavior  of each term in the lattice loop equation (and certain cancelations among these terms) would shed some light on the continuum limit problems of random surfaces (see such open problems in \cite[Section~7]{CPS2023}, for instance  Problems 5, 7, 10, 13) at least in 2D. 

\medskip


{\bf Structure of the paper.}
This paper is organized as follows. In Section \ref{sec:tools} we review notations and basic tools from literature. In Section \ref{sec:sim} we explain how the discrete loop equations converge to their continuum counterparts when the loop is simple.
Section \ref{sec:com} studies general loops
 as stated in Theorem \ref{th:main}, whose proof is given in  Section \ref{sec:gen}. 
Sections \ref{sec:Gaussian} and \ref{sec:4.1} provide 
some general approximation results on integrals against Wilson action over Lie groups.
 In Section \ref{sec:ext1} we extend the result to loop sequences,
and  in Section \ref{sec:e2} we  also extend to the cases of $G=SO(N)$ and $G=SU(N)$.

\medskip


{\bf Acknowledgments.}
H.S. gratefully acknowledges financial support from an NSF grant (CAREER DMS-2044415), and a Simons Fellowship from the Simons Foundation.  S.S. and R.Z. are grateful to the financial supports by National Key R\&D Program of China (No. 2022YFA1006300).
R.Z. is grateful to the financial supports of the NSFC (No. 12426205, 12271030), and BIT Science and Technology Innovation Program Project 2022CX01001 and the financial supports  by the Deutsche Forschungsgemeinschaft (DFG, German Research Foundation) – Project-ID 317210226--SFB 1283.

\section{Notation and some basic tools}
\label{sec:tools}

\subsection{Notation}
\label{sec:notation}

{\it Wilson Action and its $k$-fold convolution.} Define the action \footnote{Although we used the notation $Q$ to denote a configuration in the definition of \eqref{measure}, here and occasionally in other places of the text we will also use $Q$ to denote a single element of the Lie group.  The distinction should be clear from the context.} 
\begin{equation}\label{ac:wi}
	G\ni Q 
	 \mapsto S^{\eps}(Q)=\frac{1}{Z^{\eps}}e^{-\eps^{-2}N\text{Re Tr}(I-Q)}, \quad Z^{\eps}=\int e^{-\eps^{-2}N\text{Re Tr}(I-Q)}\dif Q
\end{equation}
which can be viewed as the transition probability from the identity $I$ to $Q$ of a random walk on $G$.
Following \cite[Section 8]{Driver89}  we will call $S^\eps$  the ``Wilson action'' (although in some other references ``Wilson action'' just means the exponent in $S^\eps$).
We recall that for two functions $f,g : G \mapsto \R$, their convolution is defined by 
\begin{equation}
f*g(a)=\int f(b)g(b^{-1}a)\dif b \nonumber, \qquad a\in G,\label{eq:defconv}
\end{equation} 
whenever the above integral is finite.
We write $S^\eps_k$ for its $k$-fold convolution of the action with itself.  

Recall that a function $f$ on $G$ is called a {\it class function} if 
it only depends on the conjugation class, namely
$f(aba^{-1})=f(b)$ for all $a,b\in G$.
We will often use the fact that 
$\Tr$ and $S^\eps$ are class functions on $G$,
in particular $S^\eps (ab) =S^\eps (ba)$.
We will often use $S^\eps (a)=S^\eps (a^{-1})$.

{\it Operations on loops in the lattice.} In the master loop equations, besides single loops we will often 
need a {\it string}  $s=(l_{1},\dots,l_{m})$, which means a collection of loops (also called a loop sequence in \cite{Cha}). 
Now we define the loop operations, including splitting, merger, and deformation. 
In these definitions we will write $a,b,c,d$ for paths and $e$ for a bond. We also recall  that (see Remark~\ref{rem:loop}) the loops are defined modulo cyclic permutations, e.g. $l=aebec$ and $l=becae$ are the same loop.
We refer to \cite[Fig.~4 -- Fig.~11]{Cha} or  \cite[Fig.~2 -- Fig.~9]{Jafar} for pictures
of these operations, and we will also provide more pictures along the proofs in Sections~\ref{sec:sim}  and \ref{sec:com} 
(c.f. Fig.~\ref{fig:deform-e}, \ref{fig:deform-ee1}).

{\it Splitting.} Given a loop $l$ of the form $l=aebec$ (where $e$ is a bond appearing twice at locations $x$ and $y$), the {\it positive splitting} of $l$ is a pair of loops 
\begin{equ}[e:pos-split]
 l_1 \eqdef aec \;,\qquad  l_2 \eqdef be\;.
\end{equ}
For $l=aebe^{-1}c$ (where $e$ and $e^{-1}$ appear at locations $x$ and $y$ respectively), the {\it  negative splitting} of $l$ is a pair of loops
\[
l_1 \eqdef ac \;,\qquad  l_2 \eqdef b\;.
\]
We say that a string $s'$ is obtained from splitting $s$ if exactly two loops in $s'$ arise from splitting a single loop in $s$.  
We write  $\mS^{+}_e((x,y);s)$ (resp. $\mS^{-}_e((x,y);s)$) for the set of strings
obtained from positive (resp. negative) splitting of $s$ with respect to the bond $e$ at locations $(x,y)$. In fact, since the loops are defined modulo cyclic equivalence by Remark~\ref{rem:loop}, once we fix $e$ and the locations $x,y$,  then each of
$\mS^{+}_e((x,y);s)$  and $\mS^{-}_e((x,y);s)$ only has one possible string $s'$.

{\it Merger.}
For two loops $l=aeb$ and $l'=ced$,
where $e$ appears at location $x$ in $l$ and at location $y$ in $l'$,
the {\it positive and negative mergers} of $l$ and $l'$
at locations $x,y$ are the loops
\begin{equ}[e:merger1]
l\oplus_{x,y} l' = aedceb \;,\qquad l\ominus_{x,y} l'=ac^{-1}d^{-1}b\;.
\end{equ}
For $l=aeb$ and $l'=ce^{-1}d$, 
where $e$ appears at location $x$ in $l$ and $e^{-1}$ at location $y$ in $l'$,
the {\it positive and negative mergers} of $l$ and $l'$
at locations $x,y$ are the loops
\begin{equ}[e:merger2]
l\oplus_{x,y} l' = aec^{-1}d^{-1}eb\;,
\qquad
l\ominus_{x,y} l'=adcb\;.
\end{equ}
 We say that a string $s'$ is obtained from merging $s$ if exactly one component of $s'$ arises from merging two loops in $s$.  
 The sets $\mM^{+}_e((x,y);s)$ and $\mM^{-}_e((x,y);s)$ denote all the strings obtained from either positive mergers or negative mergers of $s$ with respect to $e$ at locations $(x,y)$.  
 
 Furthermore, we define two more sets $\mM^{+}_{U,e}((x,y);s) \subset \mM^{+}((x,y);s)$ and $\mM^{-}_{U,e}((x,y);s) \subset \mM^{-}((x,y);s)$; the first consists of positive mergers resulting from \eqref{e:merger1}, namely a bond $e$ appearing in both of the two merged loops; the second consists of negative mergers 
resulting from \eqref{e:merger2}, i.e. where a bond $e$ occurs in one loop and $e^{-1}$ in the other. 

The notion of merger will be only used in Section~\ref{sec:ext} in its general form,
and in this section we only need it to define deformations as follows.

{\it Deformation.} 
For a loop $l$ where the bond $e$ occurs at location $x$,  and a plaquette $p$ where  
$e$ or $e^{-1}$   occurs at location $y$ (in this case $y$ is the unique such location), 
we write 
\[
l \oplus_x p \qquad \mbox{and} \qquad l\ominus_x p
\]
for the positive or negative deformations which means the positive or negative mergers 
of $l$ and $p$ at locations $x$ and $y$. \footnote{Spelled out more explicitly, for $l=aeb$ where the edge $e$ starts at location $x$ and a plaquette $p=ec$, positive and negative deformation at $x$ map $l$ to $apeb$ and $ac^{-1}b$ respectively; adjoining a plaquette while either repeating or removing the edge $e$.}

	 We say that a string $s'$ is obtained from deformations of $s$ if exactly one component of $s'$ arises from deformation of one loop in $s$.  
	The sets $\mD^{+}_e(x;s)$ and $\mD^{-}_e(x;s)$ consist of all strings obtained from positive or negative deformations of $s$ with respect to $e$ at location $x$, respectively.

\medskip

We emphasize that these sets 
depend on the bond $e$ as well as its {\it locations},
so they are  the same as \cite{CPS2023}
but smaller than  \cite{Cha,Jafar}. 
For the rest of the paper, for simplicity of notation we will just write 
\begin{equ}[e:ignore-loc]
\mS^{\pm}_e(s) = \mS^{\pm}_e((x,y);s)\;,
\quad
\mM^{\pm}_e(s) = \mM^{\pm}_e((x,y);s)\;,
\quad
\mD^{\pm}_e(s)  = \mD^{\pm}_e(x;s)\;,
\end{equ}
implicitly keeping in mind that we fix the locations. Also when $s$ is a single loop $l$,
we will write 
$\mS^{\pm}_e(l)$, 
$\mD^{\pm}_e(l)$.

\subsection{Analysis on Lie groups}
\label{sec:Lie}

Let $G$ be a compact Lie group with a given bi-invariant metric.
Let $\mfg$ be its Lie algebra, and $d(\mfg)$ be its dimension. 
Let $\{L_j\}_{j=1}^{d(\mfg)}$ be an orthonormal basis for $\mathfrak{g}$.
We write the derivative of a function $f$ on $G$ in the direction $X\in \mfg$ as 
\begin{equ}[e:CL]
\mathcal L_X f(a) 
\eqdef \frac{\dif}{\dif t} \Big|_{t=0} f(e^{tX} a) 
=\<\nabla_{a}f (a), X a\>\;,
\qquad
\mathcal L_j \eqdef \mathcal L_{L_j}
\end{equ}
for $a\in G$, where $Xa$ denotes the translation of $X$ to 
the tangent space of $G$ at $a$ 
via right multiplication,
and $\nabla_a f$ denotes the gradient of $f$ (and we write a subscript $a$
to emphasize that  it is as a function of $a$).
		
 Recall that a finite dimensional representation $\tau$ is a homomorphism $g \in G \mapsto \tau(g) \in GL(V_{\tau})$, where $V_{\tau}$ is a (complex) vector space  of dimension $d_\tau=\dim V_\tau$, and $GL(V_{\tau})$ is the general linear group of automorphisms of $V_{\tau}$.
 For any finite dimensional representation $\tau$ of a Lie group $G$, the corresponding character is defined by  $\chi_\tau(g)=\Tr(\tau(g) )$.  Although the mapping $g \mapsto \chi_{\tau}(g)$ is generally non-linear, it linearizes on the Lie algebra $\mathfrak{g}$ as follows.  We write the representation on the Lie algebra as $\tau(A):=\frac{\dif}{\dif t}|_{t=0}\tau(e^{t A})$ for $A\in \mfg$, which is linear w.r.t. $A$. 
Recall that there always exists a suitable inner product on $V_\tau$ such that $\tau$ is equivalent with a unitary representation, i.e. $\tau(g^{-1})=\tau(g)^{-1}$, and in this case $\tau(A)$ is skew-Hermitian for all $A\in\mfg$. For the rest of the paper, we will just assume that $\tau$ is unitary. (This is sufficient for our purpose of applying spectral theory on Lie groups, c.f.  \cite[Section~3.1]{Folland}.) 
We also recall that
the Casimir operator $C_\tau$ on $V_\tau$ is given by 
\begin{align}\label{e:Ctau}
	C_\tau=\sum_{j=1}^{d(\mathfrak{g})}(\tau(L_j))^2 \;.
\end{align}
If $\tau$ is irreducible then by Schur's lemma $C_\tau=c_\tau I$ 
where $c_\tau\le 0$ is called the Casimir constant and $I$ is the identity matrix on $V_{\tau}$, so that $\text{Tr}C_{\tau}=c_{\tau}d_{\tau}$.
We have the probabilistic characterization of the Casimir:
with a standard Gaussian measure proportional to $e^{-\frac12 |A|^2}$ on $\mfg$,
one has 
\begin{equ}[e:Cas]
\E [(\tau (A))^2] = C_\tau\;.
\end{equ}

We will always assume that the Haar measure on $G$ is normalized
so that $G$ has unit volume. Recall that the characters of all the 
irreducible representations form an orthonormal basis 
for the Hilbert space of  square integrable class functions. 
We use $p_t:G\to \R^+,t\in\R^+$,  to denote the heat kernel of $\frac12\Delta$ with Laplace-Beltrami operator $\Delta$ relative to the metric \eqref{def:inn},  i.e. the unique positive solution of the heat equation $(\p_t-\frac12\Delta)p=0$ with initial condition $p(t,Q)\dif Q\to \delta_{I}$ as $t\to0$.
It is important for us to have the following spectral decomposition
\begin{equ}[e:spec-S]
S^\eps =\sum_\tau d_\tau a_\tau(\eps) \chi_\tau\;,
\qquad
p_t = \sum_\tau d_\tau e^{\frac12 t c_\tau} \chi_\tau\;,
\end{equ}
where $\tau$ is over all the irreducible representations of $G$, $d_\tau$ is the dimension of $\tau$, and $a_\tau(\eps)\in \R^+$. By \cite[(8.2), (8.3)]{Driver89} (or \cite[Appendix~A]{BS83}) 
\begin{equ}[e:approx-a]
\Big| a_\tau(\eps) - e^{\frac1{2} c_\tau \eps^2} \Big| \lesssim c_\tau^2 \eps^4\;.
\end{equ}
Here and in the sequel, we use the notation $a\lesssim b$ if there exists a constant $c>0$ such that $a\leq cb$.
The $k$-fold convolution is 
simply given by $S^\eps_k =\sum_\tau d_\tau a_\tau(\eps)^k \chi_\tau$. 

By \cite[Theorem 8.8]{Driver89} (or \cite[Appendix~A]{BS83}), we have that for $t>0$ and $t(\eps)$ satisfying $t(\eps)/\eps^2\in\Z^+$, $|t(\eps)-t|\lesssim \eps$ 
\begin{equation}\label{eq:a-cons}
S^\eps_{t(\eps)/\eps^2}
\to p_{t} \text{ uniformly },\qquad  \mbox{as }\eps\to0.
\end{equation}

\begin{remark}\label{rem}
We will often work with the {\it standard representation} 
\footnote{It is also called ``fundamental representation'' in e.g. \cite{BS83}.}
(following the terminology of \cite[Section~4.2]{Hall15})
$\tau$ of $G=U(N)$ on $\C^N$
where $\tau:U(N)\to U(N)$ is just the identity map.
For instance this is how we set up our model \eqref{rem:Wilson-model}, see also Remark~\ref{rem:Wilson-model}.
In this case we have $d_\tau=N$ and $c_\tau = -1$ 
\footnote{See also \cite[(2.4)]{SSZloop} where $c_\tau = -N$ shows up. 
Note that our convention for the  inner product on $\mfg$ differs from 
\cite{SSZloop} by a factor $N$ and thus    $c_\tau = -N$ therein.}.
\end{remark}

%
%

\subsection{Some earlier results of Driver}\label{sec:Dri}


We first recall {\it Driver's formula} for Wilson loop expectations on the plane
 in the continuum setting.  
 To this end, we define an {\it edge} to be a continuous map $e:[0,1]\to\R^2$, which is assumed to be injective except possibly that $e(0)=e(1)$. \footnote{Allowing for $e(0)=e(1)$ can be useful, for instance a simple loop can be considered as only a single edge.} 
 The inverse of $e$, denoted by $e^{-1}$, 
 is the edge traced backwards: $e^{-1}(s)=e(1-s)$.
 A {\it graph}  is a finite set of edges (and their inverses), that meet only at their endpoints. The vertices of a graph are the endpoints of its edges. The faces of a graph are the connected components of the complement in $\R^2$ of the union of its edges. A graph is then described as a triple $\mG=(\mV,\mE,\mF)$ consisting of a set $\mV$ of vertices, a set $\mE$ of edges and a set $\mF$ of faces. Note that $\mG$ is determined by the set of edges $\mE$. We choose an {\it orientation} of $\mG$, i.e. a subset $\mE^+\subset \mE$ containing exactly one element in each pair $\{e,e^{-1}\}$. We then refer to the edges in $\mE^+$ as the positively oriented edges.

 We associate to each positively oriented  edge $e\in \mE$ an {\it edge variable} $Q_e\in G$, and correspondingly associate $Q_e^{-1}$ to $e^{-1}$. We view the edge variable as the {\it parallel transport} of a connection along the edge, which is the solution to a stochastic differential equation in the sense of \cite{MR1015789,Driver89} as mentioned below \eqref{e:cont-YM}.

A discrete gauge transformation is a map $g:\mV\to G$.  It induces a transformation $\Psi_g$ of the edge variables given by 
$$\Psi_g(Q_e)=g(v_2)^{-1}Q_eg(v_1),$$
where $v_1=e(0), v_2=e(1)$. For a function $f$ of the edge variables, we say that $f$ is {\emph{gauge invariant}} if $f\circ \Psi_g=f$ for every discrete gauge transformation $g$. 

 {\it Driver's formula} in \cite[Theorem 6.4]{Driver89} then says that for $\mu$ being the Yang--Mills measure on the plane, the expectation of any gauge-invariant function $f$ of the parallel transport along the edges of $\mG$ (denoted as $\E[f(P|_\mE)]$ therein) can be computed via the following integral: 
\begin{align}\label{eq:Driver-c}
	\mu(f)=\int_{G^{\mE^+}} f(Q)  \prod_{F\in \mF}p_{|F|}(h_{\p F}(Q))\dif Q,
\end{align}
where $|F|$ is the area of the face $F$, 
$\p F$ denotes a path that goes once around the face $F$ in the positive direction,
and $h_{\p F}$ denotes the holonomy around $F$, that is, the product of edge variables  going around $\p F$, i.e. for $\p F=e_1\dots e_k$, $h_{\p F}=Q_{e_1}\dots Q_{e_k}$.  Here $\dif Q$ denotes the product of normalized Haar measures in all the edge variables.  
We also refer to \cite{Sengupta97,Levy03} for discussion on this integral formula.

Given a loop  $l$ in $\R^2$, i.e. a path, which is a continuous map  $l:[0,1]\to \R^2$, with $l(0)=l(1)$,  
assuming that all the self-crossing points of $l$ are simple, then 
it naturally defines a graph $\mG$.
Note that since in a graph the edges must meet only at their endpoints,
for each simple self-crossing $v$ of $l$, 
and each edge $e$ of the graph $\mG$ induced by $l$,
either $v$ does not belong to $e$, or $v$ is an endpoint of $e$.
The orientations of these edges of $\mG$ are given by the  orientation of $l$.
We define  the Wilson loop $W_l\eqdef \tr Q_l = \frac1N \Tr Q_l$  with $Q_l$ being the holonomy along $l$. It is easy to check that $W_l$ is a gauge invariant function, so \eqref{eq:Driver-c} gives a useful formula to calculate the expectations of the Wilson loops.  

\begin{remark}\label{rem:finer}
Obviously, we can break an edge $e$ in a graph into several concatenating edges.
This will not change the Wilson loop expectations, because holonomies are multiplicative along these smaller edges and the faces remain the same.
\end{remark}

Similarly we  also introduce graphs $\mG^\eps=(\mV^\eps,\mE^\eps,\mF^\eps)$
on $(\eps\Z)^2$.
More precisely, we write $B(\eps)$ for the infinite graph in $\R^2$ 
whose edges are the bonds of $(\eps \Z)^2$ (we view these bonds as line segments in $\R^2$). $B(\eps)$ has the topology induced from $\R^2$.
An edge  \footnote{Note the change in terminology in comparison to \cite{SSZloop}. A single edge in the present paper corresponds to a sequence of edges in \cite{SSZloop}, which is why we use the terminology bond to distinguish the two.} $e\in \mE^\eps$ is a continuous map $: [0,1]\to B(\eps)$,
which is assumed to be injective except possibly that $e(0)=e(1)$.
By definition, we can write  such an edge as $e=e_1\dots e_n$ where $e_1,\dots ,e_n$ are bonds of $(\eps \Z)^2$.
Similarly as in the continuum setting, the vertices of a graph are the endpoints of its edges, and the faces of a graph are the connected components of the complement in $\R^2$ of the union of its edges.  As in the continuum, we assume the graph consists of edges which only meet at their endpoints.

For each $e\in \mE^\eps$, we also introduce an edge variable $Q_e:\mE^\eps\to G$, and correspondingly associate $Q_e^{-1}$ to the inverse of $e$. More precisely,
recalling the field 
$Q = (Q_e)_{e\in E_\Lambda^+}$  defined around
\eqref{measure},
 for $e=e_1\dots e_n$ 
 where $e_1,\dots ,e_n$ are bonds of $(\eps \Z)^2$,
we set  $Q_e\eqdef Q_{e_1}\dots Q_{e_n}$. 
Gauge invariant functions can be introduced in exactly the same way as the continuous setting above, and it is clearly consistent with the notion of gauge invariance mentioned below \eqref{measure}.
 In particular, a Wilson loop $W_l^\eps$ defined in \eqref{e:def-Wl} is a gauge invariant function.
Here, the lattice loop $l$ can be again viewed as a lattice graph $\mG^\eps$. More precisely,
recall that $l$ does not have backtracking, but it may pass through some bonds of $(\eps \Z)^2$ more than once; in other words if we view $l\subset B(\eps)\subset \R^2$ as a continuum loop it will have non-simple self-crossings. 
 Whenever  $l$ has the form $l=aebe$ or $l=aebe^{-1}$ where $a,b,e$ are sequences of bonds, we assume that the starting and the ending vertices of $e$ belong to $\mV^\eps$,
and $e,e^{-1}$ belong to $\mE^\eps$.
 
We will frequently use the following formula (\cite[Theorem~7.5]{Driver89}) for  a gauge invariant function $f$ on edge variables
which we will refer to as the {\it discrete Driver's formula}:
\begin{align}\label{dri:dis}
\E f
=\mu^\eps(f)
\eqdef
\int_{G^{(\mE^\eps)^+}} f(Q) 
\prod_{F^\eps\in \mF^\eps} 
S^\eps_{\frac{| F^\eps|}{\eps^2}} (h_{\p F^\eps}(Q))\dif Q,
\end{align}
where we follow the same notation as in \eqref{eq:Driver-c},
with $|F^\eps|\in \eps^2\Z^+$ being the area of $F^\eps$ as a set in $\R^2$,
(so $| F^\eps| / \eps^2$ is the number of  plaquettes enclosed by $F^\eps$).

\medskip

Furthermore, it will be convenient for us to fix an {\it axial gauge} in the calculations of the expectation of Wilson loop. More precisely,  for any gauge invariant function $f$ we can choose  a tree $T$ (i.e. a subgraph which does not include any closed path), such that 
\begin{align}\label{dri:dis1}
	\E f
	=\mu^\eps(f)=
	\int_{G^{(\mE^\eps)^+}} f(Q) \prod_{F^\eps\in \mF^\eps} S^\eps_{\frac{| F^\eps|}{\eps^2}} (h_{\p F^\eps}(Q))\dif_T Q,
\end{align}
where $\dif_T Q=\prod_{e\in T} \delta_{\{Q_e=I\}}\prod_{e\notin T}\dif Q_e$. 

\medskip

Suppose that $\mG=(\mV,\mE,\mF)$ is a graph on $\R^2$. 
As in \cite[Definition~8.1]{Driver89}, a lattice approximation to $\mG$ is a collection of graphs 
$\{\mG^\eps=(\mV^\eps,\mE^\eps,\mF^\eps)\}_{\eps_0>\eps>0}$ on $(\eps\Z)^2$ with 
maps
\begin{equ}[e:ij]
i_\eps:\;
\mE\to \mE^\eps\quad  e \mapsto e^\eps\;, 
\qquad \mbox{and} \qquad
j_\eps:\,
\mF\to \mF^\eps \quad F\mapsto F^\eps
\end{equ}
satisfying the following conditions:
\begin{itemize}
	\item $j_\eps$ is a bijection, and $i_\eps$ is an injection. Moreover, the edges in $\mE^\eps\backslash  i_\eps(\mE)$ do not meet each other and all have lengths of $O(\eps)$. 
	\item The area $|F\backslash F^\eps|+|F^\eps\backslash F|$ is of  $O(\eps)$. 
	\item 
	For every $F\in \mF$, one has $i_\eps(\p F)=\p F^\eps \cap i_\eps(\mE)$, 
	where $F^\eps=j_\eps(F)$. 
\end{itemize}
When $i_\eps(e)=e^\eps$ we will sometimes say that the edge $e^\eps$ approximates $e$.

\br\label{re:app}
Note that we do not require $i_\eps$ to be surjective as in \cite[Definition~8.1]{Driver89}. 
The reason that we may have more edges in $\mE^\eps$ (rather than requiring a one-to-one correspondence between $\mE$ and $\mE^\eps$)
is that 
in Section \ref{sec:gen}  we will approximate a crossing vertex $v\in \R^2$
 by a ``tiny'' edge $e^\eps$ with $|e^\eps|=O(\eps)$,
 and exploit the splitting that occurs at (a bond of) $e^\eps$. 
 \er
 
It is easy to find  
that given a graph $\mG=(\mV,\mE,\mF)$, there exists a lattice approximation $\{\mG^\eps=(\mV^\eps,\mE^\eps,\mF^\eps)\}_{\eps_0>\eps>0}$ (c.f. \cite[Lemma 8.2]{Driver89}).  
\begin{lemma}\label{lem:Driver}
For any gauge invariant function $f$ of the edge variables on the graph $\mG$, one has
\begin{equ}[e:lattice-approx]
	\lim_{\eps\to0}\mu^\eps(f\circ i_\eps^{-1})=\mu(f),
\end{equ}
where $i_\eps^{-1}:G^{\mE^\eps}\to G^{\mE}$ is the pull-back map.
\end{lemma}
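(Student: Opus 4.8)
\textbf{Proof proposal for Lemma~\ref{lem:Driver}.}

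The plan is to compare the two Driver formulas \eqref{eq:Driver-c} and \eqref{dri:dis} term by term, and show that the discrete integrand converges uniformly to the continuum one. First I would write out the right-hand side of \eqref{dri:dis} applied to $f\circ i_\eps^{-1}$. Because $f$ is a gauge-invariant function of only the edge variables indexed by $\mE$, and the pull-back $i_\eps^{-1}$ rewrites these in terms of the approximating edges $e^\eps=i_\eps(e)$, the dependence on the ``extra'' edges in $\mE^\eps\backslash i_\eps(\mE)$ enters only through the heat-kernel (Wilson-action) factors $S^\eps_{|F^\eps|/\eps^2}(h_{\p F^\eps})$, not through $f$ itself. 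The bijection $j_\eps:\mF\to\mF^\eps$ lets me match each continuum face $F$ with its discrete counterpart $F^\eps=j_\eps(F)$, so the product over faces in \eqref{dri:dis} and the product over faces in \eqref{eq:Driver-c} have the same index set and can be compared factor by factor.

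The key analytic input is the convergence \eqref{eq:a-cons} of the iterated Wilson action to the heat kernel: since $|F^\eps|\in\eps^2\Z^+$ and $|F\backslash F^\eps|+|F^\eps\backslash F|=O(\eps)$, we have $|F^\eps|/\eps^2\in\Z^+$ with $\big||F^\eps|-|F|\big|=O(\eps)$, which is exactly the hypothesis $t(\eps)=|F^\eps|$ needs for \eqref{eq:a-cons} to give $S^\eps_{|F^\eps|/\eps^2}\to p_{|F|}$ uniformly on $G$. Thus I would argue that for each fixed face the discrete factor converges uniformly to $p_{|F|}(h_{\p F})$. The holonomy arguments $h_{\p F^\eps}(Q)$ must also be reconciled with $h_{\p F}(Q)$: by the third approximation condition $i_\eps(\p F)=\p F^\eps\cap i_\eps(\mE)$, the boundary $\p F^\eps$ consists of the images of the edges bounding $F$ together with extra short edges; after pulling back and integrating, the holonomy around $\p F^\eps$ expressed through the $i_\eps(e)$ variables matches $h_{\p F}$. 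Since $G$ is compact and all the factors $p_t$ and $S^\eps_k$ are uniformly bounded class functions, a telescoping estimate on the finite product of faces upgrades the per-face uniform convergence to uniform convergence of the full integrand, and then dominated convergence (indeed uniform convergence of bounded continuous functions on the compact space $G^{\mE^+}$ against the normalized Haar measure) passes the limit through the integral to yield $\mu^\eps(f\circ i_\eps^{-1})\to\mu(f)$.

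The main obstacle I anticipate is the careful bookkeeping around the extra edges and the axial-gauge reduction. Because $i_\eps$ need not be surjective (see Remark~\ref{re:app}), the discrete integral \eqref{dri:dis} is over $G^{(\mE^\eps)^+}$, a larger product than $G^{\mE^+}$, and I must integrate out the variables attached to the $O(\eps)$-length extra edges before comparing with \eqref{eq:Driver-c}. Here I would either fix a tree $T$ as in \eqref{dri:dis1} chosen to contain those extra edges (setting their variables to $I$, which is legitimate by gauge invariance and does not affect $f$ since $f$ depends only on $\mE$), or else explicitly perform the Haar integration over them using that convolution of Wilson actions reproduces $S^\eps_{k+k'}$ — this is precisely where the class-function and convolution structure $S^\eps_k=S^\eps*\cdots*S^\eps$ is used to collapse a subdivided face back to a single heat-kernel factor of the correct total area. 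Making this reduction rigorous, and checking that the resulting boundary holonomies and face areas line up under $i_\eps$ and $j_\eps$ exactly as required, is the delicate part; the analytic convergence itself is then a routine consequence of \eqref{eq:a-cons} and compactness of $G$.
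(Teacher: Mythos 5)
Your first option for handling the extra edges --- choosing a tree as in \eqref{dri:dis1} containing all of $\mE^\eps\backslash i_\eps(\mE)$, setting those variables to the identity, and then comparing face by face via \eqref{eq:a-cons} --- is precisely the paper's proof: the paper picks a tree $T$ in $\mG$ so that $T(\eps)\eqdef(\mE^\eps\backslash i_\eps(\mE))\cup i_\eps(T)$ is still a tree in $\mG^\eps$ (always possible, since the extra edges are disjoint by definition, so even $T=\emptyset$ works), gauge-fixes on $T(\eps)$, changes variables so that the discrete integral runs over $G^{\mE^+}$ with one factor $S^\eps_{|j_\eps(F)|/\eps^2}(h_{\p F}(Q))$ per continuum face $F$, and concludes with \eqref{eq:a-cons}; your per-face matching, holonomy bookkeeping and uniform-convergence step reproduce exactly this.

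Your second option, however, is not a correct alternative, and the two routes are not interchangeable. An extra edge such as $e^\eps$ in Fig.~\ref{fig1eps} does not subdivide a face of $\mG^\eps$: it \emph{separates} two faces $F_2^\eps$ and $F_4^\eps$ which approximate two \emph{distinct} faces $F_2$, $F_4$ of $\mG$ meeting at the crossing $v$. Haar-integrating out $Q_{e^\eps}$ therefore convolves the two corresponding Wilson-action factors into a single factor $S^\eps_{(t_2(\eps)+t_4(\eps))/\eps^2}$ of the merged boundary holonomy, whose limit is $p_{t_2+t_4}$ --- not the product $p_{t_2}\,p_{t_4}$ that appears in \eqref{eq:Driver-c}. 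The deeper reason the two computations differ is that $f\circ i_\eps^{-1}$ is in general \emph{not} gauge invariant on $\mG^\eps$: when the loop traverses $e^\eps$ twice in the same direction, as in \eqref{e:genloop}, both endpoints of $e^\eps$ correspond to the single continuum vertex $v$, and a lattice gauge transformation taking different values at these two endpoints changes $f\circ i_\eps^{-1}$. (Concretely, for $G=U(N)$ a gauge transformation by the central element $e^{i\theta}I$ at one endpoint of $e^\eps$ multiplies $f\circ i_\eps^{-1}$ by $e^{2i\theta}$ while leaving the density in \eqref{dri:dis} invariant, so the un-gauge-fixed integral of $f\circ i_\eps^{-1}$ vanishes identically, whereas the gauge-fixed integral converges to $\mu(f)$.) Hence integrating the extra variables out (your second option) and fixing them to $I$ on a tree (your first option) produce genuinely different quantities; the lemma is proved --- and is applied later through the honestly gauge-invariant observables $W_{l^\eps}$, which agree with $f\circ i_\eps^{-1}$ on the gauge slice $\{Q|_{T(\eps)}=I\}$ --- only in the gauge-fixed reading. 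You must therefore commit to the tree/axial-gauge route and drop the convolution alternative.
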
 

\begin{proof}
This is essentially \cite[Theorem 8.10]{Driver89}, with the only difference here being that $i_\eps$ is not required to be surjective.
However, for the proof of \cite[Theorem 8.10]{Driver89}, which relies on \eqref{dri:dis1}, it suffices for $i_\eps:\mE\to \mE^\eps$ to be surjective outside a 
tree in $\mG^\eps$.
We can always find a tree $T$ in $\mG$ such that 
 $T(\eps)\eqdef (\mE^\eps\backslash i_\eps(\mE)) \cup i_\eps(T)$ is also a tree. 
 (For instance if $T$ is empty, $\mE^\eps\backslash i_\eps(\mE)$ is a disjoint collection of edges which is a tree.)
 We then fix an axial gauge 
 by  choosing the edge variables on $T(\eps)$ to be identity.
Since $f$ is a gauge invariant function
\begin{align*}
	\mu^\eps(f\circ i_\eps^{-1})&\stackrel{\eqref{dri:dis1}}{=}
	\int_{G^{(\mE^\eps)^+}} f(i_\eps^{-1}(Q)) \prod_{F^\eps\in \mF^\eps} S^\eps_{\frac{| F^\eps|}{\eps^2}} (h_{\p F^\eps}(Q))\dif_{T(\eps)} Q
	\\&=
	\int_{G^{i_\eps(\mE^+)}} f(i_\eps^{-1}(Q)) \prod_{F^\eps\in \mF^\eps} S^\eps_{\frac{| F^\eps|}{\eps^2}} (h_{\p F^\eps}(Q))\dif_{i_\eps(T)} Q
	\\&=
	\int_{G^{\mE^+}} f(Q) \prod_{F\in \mF} S^\eps_{\frac{| j_\eps(F)|}{\eps^2}} (h_{\p F}(Q))\dif_{T} Q
	\\	&\stackrel{\eps\to 0}{\longrightarrow} \int_{G^{\mE^+}} f(Q)  \prod_{F\in \mF}p_{|F|}(h_{\p F}(Q))\dif_T Q = \mu(f),
\end{align*}
where we used $Q\big|_{T(\eps)}=I$ in the second step,
and change of variable and 
$(i_\eps^{-1}Q)_{\p F}=Q_{\p F_\eps}$ 
(which follows from $Q\big|_{\mE^\eps\backslash i_\eps(\mE)}=I$)
 in the third step,
and the fourth line follows from \eqref{eq:a-cons} and the requirement that $|F^\eps|$ and $|F|$ differ by $O(\eps)$. The last step again uses gauge invariance of $f$.
\end{proof}
 
 \br\label{re:ax} 
 When we apply Lemma \ref{lem:Driver} in Section \ref{sec:sim} and Section \ref{sec:gen} below, we will   fix an axial gauge with different choices of the tree $T$ in $\mG$, 
and we will ensure that  $T(\eps)\eqdef (\mE^\eps\backslash i_\eps(\mE)) \cup i_\eps(T)$ is also a tree in $\mG^\eps$ as required in 
 the proof of Lemma \ref{lem:Driver}. 
 \er

We say that the set of lattice loops $\{l^\eps\}_{\eps>0}$ is a lattice approximation of the continuum loop $l$ if  $\{\mG^\eps\}_{\eps>0}$ is lattice approximation of $\mG$,
where $\mG$ and $\mG^\eps$ are the graphs induced by the loops $l$ and $l^\eps$ respectively.  
For simplicity we will always assume that $l$ is smooth, so that a lattice approximation $\{l^\eps\}_{\eps>0}$ exists. 
By \eqref{e:lattice-approx}, for any lattice approximation $\{l^\eps\}_{\eps>0}$  of a  loop $l$, 
\begin{equ}[e:conv-loop]
\lim_{\eps \to 0 }\E W_{l^\eps} = \E W_l \;.
\end{equ}
Furthermore, for a sequence of loops $l_1,\cdots,l_m$, 
and any lattice approximation $l_1^\eps,\cdots,l_m^\eps$, we have 
\begin{equ}\label{e:con:loops}
	\lim_{\eps \to 0}\E \Big(\prod_{j=1}^m W_{l_j^\eps}\Big) = \E  \Big(\prod_{j=1}^m W_{l_j}\Big) \;.
\end{equ}

\br \label{rem:eps}
Throughout the paper, to simplify the notation, when we write expressions for lattice Wilson loop expectations
such as \eqref{eq:wl-si}, \eqref{eqc:WLd}, \eqref{eq:loop:g}, \eqref{eq:loop:g1}
(and already in e.g. \eqref{eq:wl}),
we often omit the $\eps$ in the notation for loops, edges, and edge variables 
(whose dependence on $\eps$ is clear from the context),
and only write a  superscript for the Wilson loop, e.g. $W_l^\eps= W_{l^\eps}$
to remind the dependence on $\eps$.
\er

\section{Gaussian approximation lemma}
\label{sec:Gaussian}


Given a smooth function $f: G\mapsto \R$, we are interested in the asymptotics as $\eps \to 0$ of the integral
\begin{equation}
	\int f(Q)S^{\eps}(Q)\dif Q. \nonumber
\end{equation}
We will argue that
\begin{equation} \label{eee1}
	\int f(Q)S^{\eps}(Q)\dif Q=f(I)+\frac{1}{2}\eps^{2} \Delta f(I)+O(\eps^{4}) . \
\end{equation}
We are inspired in this section by \cite{BS83} (see also \cite{Gawedzki1982}), which establishes \eqref{e:approx-a}.  Notice that for the functions $Q \mapsto \frac{1}{d_{\tau}}\chi_{\tau}(Q)$ , the asymptotics \eqref{eee1} is easily implied by \eqref{e:approx-a} upon Taylor expansion of the exponential, taking into account that $\Delta \chi_{\tau}(I)=c_{\tau}d_{\tau}$, which is not surprising since $\chi_{\tau}$ are precisely the eigenfunctions of the Laplace-Beltrami operator and $c_{\tau}$ the corresponding eigenvalues.

In the remainder of the article, we will use a few other choices of the function $f$, so it is useful to have a general result in this direction.

\medskip

To quantify the $O(\eps^{4})$ error term in \eqref{eee1}, we define a norm to measure $f$.  Let $\delta_{0}$ be such that the map $A \in \mathfrak{g} \mapsto e^{A} \in Q$ is invertible for $|A|< \delta_{0}$, where $|A|$ denotes the Hilbert-Schmidt norm defined through the inner product \eqref{def:inn}.
We then define
\begin{equs}[e:norm-f]
	\|f\|&\eqdef \sup_{Q \in G }|f(Q)|  \\
	&+\sup_{0<|A|<\delta_{0}} \bigg ( |A|^{-2}|\mathcal{L}_{A}^{2}f(I) |+\sup_{t \in [-1,1]}|A|^{-4} \bigg |\frac{\dif^{4}}{\dif t^{4}}f(e^{tA}) \bigg | \bigg ) \;.
\end{equs}

\begin{lemma}\label{lem:ga}
	Let $f: G\mapsto \R$ be a smooth function such that $f(I)=0$.  There exist constants $C=C(\mathfrak{g}),\eps_{0}=\eps_{0}(\mathfrak{g})$ such that for all $\eps<\eps_{0}$ 
\begin{equation}\label{eq:ap-f}
	\bigg | \int f(Q)S^{\eps}(Q)\dif Q-\frac{1}{2}\eps^{2} \Delta f(I)  \bigg | \leq  C \|f\| \eps^{4} .
\end{equation} 
\end{lemma}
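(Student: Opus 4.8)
The plan is to localise the integral near the identity, pass to exponential coordinates $Q=e^{A}$ with $A\in\mfg$, and rescale $A=\eps B$ so that the Wilson action becomes a standard Gaussian in $B$. First I would fix some $\delta\in(0,\delta_{0})$ and split $G$ into the small-field region $U\eqdef\exp(\{|A|<\delta\})$ and its complement. On $G\setminus U$ compactness gives $N\Re\Tr(I-Q)\ge\kappa>0$, so the unnormalised integrand is bounded there by $e^{-\kappa/\eps^{2}}$; combined with the matching lower bound $\int_{G}e^{-\eps^{-2}N\Re\Tr(I-Q)}\dif Q\gtrsim\eps^{d(\mfg)}$ coming from $U$, the large-field part contributes only $O(\|f\|e^{-\kappa'/\eps^{2}})$, which is negligible against $\eps^{4}$.

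On $U$ I would write $\dif Q=J(A)\dif A$, where the Jacobian $J$ is smooth, \emph{even} in $A$, with $J(0)=1$ and $J(A)=1+O(|A|^{2})$. The key algebraic input is that, for skew-Hermitian $A$, $\Re\Tr(A^{k})=0$ for every odd $k$; hence the exponent contains only even powers of $A$, the cubic term vanishes, and
\[
	N\Re\Tr(I-e^{A})=\tfrac12|A|^{2}+R(A),\qquad |R(A)|\lesssim|A|^{4}\quad(|A|<\delta),
\]
with $R$ even. For $\delta$ small this yields the a priori domination $N\Re\Tr(I-e^{A})\ge\tfrac14|A|^{2}$, i.e.\ $e^{-\eps^{-2}N\Re\Tr(I-e^{\eps B})}\le e^{-|B|^{2}/4}$ on $\{|B|<\delta/\eps\}$ after rescaling, while $\eps^{-2}R(\eps B)=O(\eps^{2}|B|^{4})$.

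The decisive simplification is symmetry: the density $e^{-\eps^{-2}N\Re\Tr(I-e^{A})}J(A)$ is even in $A$ and $\{|A|<\delta\}$ is symmetric, so every odd function of $A$ integrates to zero. Taylor expanding with $f(I)=0$,
\[
	f(e^{A})=\mathcal L_{A}f(I)+\tfrac12\mathcal L_{A}^{2}f(I)+\tfrac16\mathcal L_{A}^{3}f(I)+\tfrac1{24}\tfrac{\dif^{4}}{\dif s^{4}}f(e^{sA})\Big|_{s=\xi},
\]
the odd degree-one and degree-three terms drop out --- precisely why $\|f\|$ need only control $\mathcal L_{A}^{2}f(I)$ and the fourth derivative. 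After rescaling, the small-field integral equals $\eps^{d(\mfg)}$ times $\int_{|B|<\delta/\eps}f(e^{\eps B})e^{-|B|^{2}/2}e^{-\eps^{-2}R(\eps B)}J(\eps B)\dif B$, whose quadratic Taylor term contributes $\tfrac{\eps^{2}}2\int\mathcal L_{B}^{2}f(I)e^{-|B|^{2}/2}\dif B$; using $\Delta=\sum_{j}\mathcal L_{j}^{2}$ with $\E[\mathcal L_{B}^{2}f(I)]=\Delta f(I)$ for a standard Gaussian $B$ (as in \eqref{e:Cas}) this equals $\tfrac12\eps^{2}(2\pi)^{d(\mfg)/2}\Delta f(I)$, and the remainder, bounded by $\tfrac1{24}\|f\|\eps^{4}|B|^{4}$, integrates to $O(\eps^{4}\|f\|)$. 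The factor $\eps^{d(\mfg)}$ cancels in the ratio $\int fS^{\eps}\dif Q=I^{\eps}/Z^{\eps}$; treating $Z^{\eps}$ by the same expansion with $f\equiv1$ gives $(2\pi)^{d(\mfg)/2}(1+O(\eps^{2}))$ for the rescaled denominator, and since $|\Delta f(I)|\lesssim\|f\|$ its $O(\eps^{2})$ relative correction only perturbs the answer at order $\eps^{4}\|f\|$.

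The main obstacle is \emph{uniform} error control in the rescaled integral: as $|B|$ runs up to $\delta/\eps$, the weight correction $e^{-\eps^{-2}R(\eps B)}J(\eps B)-1$ is not uniformly small. I would resolve this with the mean value estimate $|e^{a}-e^{b}|\le|a-b|e^{\max(a,b)}$: taking $a=-\tfrac12|B|^{2}-\eps^{-2}R(\eps B)$ and $b=-\tfrac12|B|^{2}$, the domination gives $\max(a,b)\le-\tfrac14|B|^{2}$, hence $|e^{a}-e^{b}|\lesssim\eps^{2}|B|^{4}e^{-|B|^{2}/4}$; multiplying by $|\mathcal L_{B}^{2}f(I)|\le\|f\||B|^{2}$ (valid for all $B$ by homogeneity of the quadratic form) and integrating yields $O(\eps^{2}\|f\|)$, which against the $\eps^{2}/2$ prefactor gives the desired $O(\eps^{4}\|f\|)$. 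The Jacobian correction $J(\eps B)-1=O(\eps^{2}|B|^{2})$ is handled identically, and replacing $\{|B|<\delta/\eps\}$ by all of $\mfg$ costs only $O(\|f\|e^{-c/\eps^{2}})$ by Gaussian tails.
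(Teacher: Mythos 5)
Your proof is correct and follows essentially the same route as the paper's: a small-field/large-field decomposition, passage to exponential coordinates with the Taylor expansion $N\,\Re\Tr(I-e^{A})=\tfrac12|A|^{2}+O(|A|^{4})$ (odd traces vanishing for skew-Hermitian $A$), cancellation of the odd Taylor terms of $f$ by the $A\mapsto -A$ symmetry, the Gaussian identity producing $\tfrac12\eps^{2}\Delta f(I)$, and control of the normalization $Z^{\eps}$ up to a relative $O(\eps^{2})$ error. The only cosmetic differences are that the paper implements the symmetry by replacing $f(Q)$ with $\tfrac12\bigl(f(Q)+f(Q^{-1})\bigr)$ (using $S^{\eps}(Q)=S^{\eps}(Q^{-1})$) rather than by evenness of the density and Jacobian, and it works with the unrescaled Gaussian measure $\nu_{\eps}$ on $\mfg$ instead of your rescaled variable $B=A/\eps$; your mean-value bound $|e^{a}-e^{b}|\le |a-b|e^{\max(a,b)}$ plays exactly the role of the paper's estimate on $H(A)=e^{Y(A)}-1$ absorbed into a wider Gaussian.
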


\begin{proof}  Note that for $\eps$ sufficiently small, $S^{\eps}$ concentrates near $I$, so it's natural to apply Laplace's method. 
	Define $\tilde{K}_{\delta}$  to be a neighborhood of $0$ in $\mathfrak{g}$ for some $\delta<\delta_{0}$ to be selected below. For $A\in \tilde{K}_{\delta}$, $|A|\leq \delta$.  Let ${K}_{\delta}$ be the image of $\tilde K_{\delta}$ under the exponential map. Since $S^{\eps}(Q)=S^\eps(Q^{-1})$, we may symmetrize $f$ to find
	\begin{equation}
		\int f(Q)S^{\eps}(Q)\dif Q=\frac{1}{2}\int \big (  f(Q)+f(Q^{-1}) \big )S^{\eps}(Q)\dif Q\;.
	\end{equation}
We divide the integral into the `small field' region $K_{\delta}$ and the `large field' region $K_{\delta}^{c}$.  In the large field region it holds $\text{Re Tr}(I-Q) \geq c$ for some $c=c(\delta)$ and hence
	\begin{equation}
		Z^{\eps}\bigg | \int_{K_{\delta}^{c}} \big (  f(Q)+f(Q^{-1}) \big ) S^{\eps}(Q)\dif Q \bigg | \leq 2 \sup_{Q \in G} |f(Q)| e^{-cN\eps^{-2}} \leq  2\|f\| e^{-cN\eps^{-2}}, \nonumber
	\end{equation}
	where $Z^\eps$ is the normalization defined by \eqref{ac:wi}.
	To analyze the small field region, we write
	\begin{equation}
		\int_{K_{\delta}} \big ( f(Q)+f(Q^{-1}) \big ) S^{\eps}(Q)\dif Q=\int_{\tilde{K}_{\delta} } \big (f(e^{A})+f(e^{-A}) \big ) J(A)S^{\eps}(e^A)\dif A \nonumber\;,
	\end{equation}
	where $J(A)$ denotes the Jacobian from changing variables to the Lie algebra.  We now Taylor expand the action and claim that  
	$$-\eps^{-2}N\Re\text{Tr}(I-e^{A})=-\frac1{2\eps^2}|A|^2+Y(A)\;,$$
	with $|Y(A)|\leq \frac1{N\eps^24!}|A|^4$.  Indeed, consider the function $t \mapsto g(t):=-\text{Re Tr}(I-e^{tA})$ and note that $g^{(k)}(t)=\text{Re Tr} A^{k}e^{tA}$  for all $k \in \N$.  In particular, since $A^{*}=-A$, it holds $g(0)=g^{(k)}(0)=0$ for all odd $k$.  Furthermore, $g^{(2)}(0)=-\text{Re}\text{Tr}AA^{*}=-N^{-1}|A|^{2}$ and $|g^{(4)}(t) | \leq N^{-2}|A|^{4}$ by the spectral theorem.
	 Hence, the claim follows by Taylor expansion.   In addition, we have
	\begin{equs}[e:GaussH]
		S^{\eps}(e^A)=\frac1{Z^\eps}e^{-\frac{1}{2\eps^{2} }|A|^{2}}\big (1+H(A) \big ),
	\end{equs}
	with 
	\begin{equs}[bd:H]|H(A)|=|e^{Y(A)}-1|=\Big|Y(A)\int_0^1e^{rY(A)}\dif r\Big| \leq \eps^{-2}N^{-1} |A|^{4}e^{\frac{1}{2\eps^{2}}\delta^{2} |A|^{2} },\end{equs} for $A \in \tilde{K}_{\delta}$.
	Inserting the above, we find that
	\begin{align}
		&\int_{K_{\delta}} \big [  f(Q)+f(Q^{-1}) \big ]S^{\eps}(Q)\dif Q \nonumber \\
		&=\frac1{Z^\eps}\int_{\tilde{K}_{\delta}}\big [ f(e^{A})+f(e^{-A}) \big ](1+H(A))J(A)e^{-\frac{1}{2\eps^{2} }|A|^{2}} \dif A  \nonumber\\
		&= \frac1{Z^\eps}\int_{\tilde{K}_{\delta}}\big [ f(e^{A})+f(e^{-A}) \big ]e^{-\frac{1}{2\eps^{2} }|A|^{2}}\dif A +\frac1{Z^\eps}R, \label{e:GaussR}
	\end{align}
	where
	\begin{equ}
		R=N_{0} \int_{\tilde{K}_{\delta} } \Big ( f(e^{A})+f(e^{-A}) \Big ) \Big ( H(A)+ (J(A)-1)(1+H(A)) \Big )\dif \nu_{\eps}(A)
	\end{equ}
	and
	\begin{equ}[e:def-N01]
		\dif\nu_\eps=
		\frac{1}{\int e^{-\frac{|A|^2}{2\eps^2}}\dif A}
		e^{-\frac{|A|^2}{2\eps^2}}\dif A\;,
		\qquad
		N_0=
		\int e^{-\frac{1}{2\eps^2}|A|^2}\dif A \;.
	\end{equ}
	
	To estimate the remainder, we need a few auxilliary estimates.  One estimate, which we quote directly from \cite[(A.31)]{BS83} is that for some $C=C(\mathfrak{g})$,  it holds 
	\begin{align}\label{bd:JA}
		|J(A)-1| \leq C |A|^{2},
		\end{align}
	 for $A \in \tilde{K}_{\delta}$. This also follows from the fact that the Haar measure is invariant under the transformation $A\to -A$. We also need the following estimate 
	\begin{equation}
		\big |f(e^{A})+f(e^{-A})-\mathcal{L}_{A}^{2}f(I) \big | \leq 2 \|f\||A|^{4} \label{EE50},
	\end{equation}
	together with the simple corollary
	\begin{equation}
		\big |f(e^{A})+f(e^{-A}) \big | \leq C \|f\||A|^{2} \label{EE51},
	\end{equation}
	both of which hold for $A \in \tilde{K}_{\delta}$.  To argue \eqref{EE50}, we apply Taylor's theorem to the function $t \mapsto f(e^{tA})$ to obtain the bound
	\begin{align}
		\big | f(e^{A})-\mathcal{L}_{A} f(I)-\frac{1}{2} \mathcal{L}_{A}^{2} f(I)-\frac{1}{3!} \mathcal{L}_{A}^{3}f(I) \big | \leq \|f\| |A|^{4} \nonumber, 
	\end{align}
	where we used the centering $f(I)=0$.  The same estimate also holds with $A \mapsto -A$, and combining these two bounds, taking into account the linearity of $A \mapsto \mathcal{L}_{A}f(I)$, the odd order derivatives cancel and we obtain \eqref{EE50}. The bound \eqref{EE51} follows from \eqref{EE50} by the triangle inequality and the definition of $\|f\|$.
	
	\medskip
	We now apply \eqref{EE51} and \eqref{bd:H}, \eqref{bd:JA} to obtain 
	\begin{align}
		|R|& \lesssim N_{0} \|f\|  \int_{\tilde{K}_{\delta} } |A|^{2} \big [ \eps^{-2}|A|^{4}+ |A|^{2}(1+\eps^{-2}|A|^{4} ) \big ]e^{\frac{1}{2\eps^{2}}\delta^{2} |A|^{2}}\dif \nu_{\eps}(A) \nonumber \\
		&\lesssim   \|f\| \int_{\tilde{K}_{\delta} } \big ( \eps^{-2}|A|^{6}+|A|^{4} \big ) e^{-\frac{1-\delta^2}{2\eps^{2}} |A|^{2}}\dif A \nonumber \\
		&\lesssim N_0\|f\| \eps^4,  \nonumber
	\end{align}
	where we choose $\delta^2<\text{min}(1/2,\delta_0^2)$.  In the last line, we change to a Gaussian with density proportional to $e^{-\frac{1}{4\eps^2}|A|^2}$ and used that $N_{0}^{-1} \int e^{-\frac{1}{4\eps^2}|A|^2}\dif A \lesssim 1$. 
	
	Finally, concerning the first term on the RHS of \eqref{e:GaussR}, we need the error bound 
	\begin{equation}
		\bigg | \int_{\tilde{K}_{\delta}}\big [f(e^{A})+f(e^{-A}) \big]\dif \nu_{\eps}(A) -\int_{\mathfrak{g}}\mathcal{L}^{2}_{A}f(I)\dif \nu_{\eps}(A) \bigg | \lesssim \eps^{4}\|f\|,\label{EE52}
	\end{equation}
	which follows from \eqref{EE50} and the following tail estimate for the Gaussian measure $\nu^{\eps}$.
	$$\Big|\int_{\mathfrak{g}\backslash \tilde K_\delta}\mathcal{L}^{2}_{A}f(I)\dif \nu_{\eps}(A)\Big|\lesssim\|f\| e^{-\frac{\delta^2}{4\eps^2}} \Big|\int_{\mathfrak{g}\backslash \tilde K_\delta}|A|^2e^{-\frac1{4\eps^2}|A|^2}\dif A\Big|\lesssim \eps^{4}\|f\|.$$
	Here we used that $|A|\geq\delta$ for $A\in \mathfrak{g}\backslash \tilde K_\delta$. 
	
Note also that (recalling the notation \eqref{e:CL})
\begin{equ}
\int_{\mathfrak{g}}  \mathcal{L}_{A}^{2}f(I) \dif\nu_{\eps}(A)
=\sum_{i,j}\frac1{N_0}\mathcal{L}_i\mathcal{L}_j f(I)\int_{\R^{d(\mathfrak{g})}} s_{i}s_{j}  e^{-\frac{1}{2 \eps^{2}}|s|^{2} }\dif s 
=\Delta f(I) \eps^{2} \;.
\end{equ}
The proof is completed by combining the above with the bound $|\frac{Z^{\eps} }{N_{0}}-1 | \lesssim \eps^{2}$, which follows from a similar line of argument. 
Indeed, using \eqref{bd:JA} and \eqref{bd:H}
\begin{align}
Z^{\eps}=\int e^{-\eps^{-2}N\text{ReTr}(I-Q)}\dif Q&=O(e^{-c\eps^{2}}) +\int_{\tilde{K}_{\delta}} (1+H(A))J(A)e^{ -\frac{1}{2 \eps^{2}}|A|^{2}}\dif A \nonumber. \\
	&= \int_{\mfg}e^{ -\frac{1}{2 \eps^{2}}|A|^{2}}\dif A \Big (1+O(\eps^{2}) \Big ), \nonumber 
\end{align}
where the last line is argued similar to the bound for $R$.  To conclude, choose $\eps$ sufficiently small that $Z^{\eps} \geq \frac{1}{2}N_{0}$ and $|\frac{N_{0}}{Z^{\eps}}-1| \lesssim \eps^{2}$.  The large field contribution is then bounded by
\begin{equation}
\bigg | \int_{K_{\delta}^{c}} \big (  f(Q)+f(Q^{-1}) \big ) S^{\eps}(Q)\dif Q \bigg | \lesssim \frac{\|f\|}{N_{0}}e^{-cN\eps^{-2}} \lesssim \|f\|\eps^{4},
\end{equation}
for $\eps$ sufficiently small.  For the small field contribution, we use \eqref{e:GaussR}, \eqref{EE52} and the above estimates  to write
\begin{equation}
\frac{1}{2}\int_{K_{\delta}} \big [  f(Q)+f(Q^{-1}) \big ]S^{\eps}(Q)\dif Q=\frac{N_{0}}{Z^{\eps}}\frac{1}{2}\eps^{2}\Delta f(I)+\frac{N_{0}}{Z^{\eps}}O(\eps^{4}\|f\|) \nonumber,  
\end{equation}
then use that $|\Delta f(I)| \lesssim \|f\|$. 
\end{proof}

 \section{Simple loops}\label{sec:sim}

 We start with  simple loops, which are loops without any self-intersection.
In this case the proof is easier, but demonstrates some of our ideas in the general cases. 
In particular we aim to recover the loop equation in \cite[Section~2.2, Theorem~2.3]{Driver17}, which was first derived in \cite[Proposition 6.4]{Levy11}. 

For a simple loop $l$ on $\R^2$, which encloses an area $t$, we can easily find a lattice approximation $\{l^\eps\}$, consisting of simple loops on $(\eps \Z)^2$,  
and each $l^\eps$ encloses area $t_\eps$.
Obviously $t_\eps/\eps^2$ is a positive integer. 
 By the definition of lattice approximation we know that $|t_\eps-t|\lesssim \eps$.
 We then have the following discrete master loop equation  for the loop $l^\eps$, i.e. \eqref{eq:wl}   can be written as 
\begin{align}\label{eq:wl-si}
\E	W_l^\eps=	\frac{1}{2\eps^2} \!\!\! \sum_{l'\in \mathbb{D}^-_e(l) }\!\!\E W_{l'}^\eps
	-\frac{1}{2\eps^2} \!\!\! \sum_{l'\in \mathbb{D}^+_e(l)} \!\!\E W_{l'}^\eps
\end{align}
where $e=e_\eps$ with $|e_\eps|=\eps$ 
and we recall our convention Remark~\ref{rem:eps}.


\begin{theorem}\label{th:1} 
As $\eps\to0$,  one has
\begin{equ}[e:th:1]
\frac{1}{2\eps^2} \!\!\! \sum_{l'\in \mathbb{D}^-_e(l)}\!\!\E W_{l'}^\eps
-\frac{1}{2\eps^2} \!\!\! \sum_{l'\in \mathbb{D}^+_e(l)} \!\!\E W_{l'}^\eps
\to -2\frac{\dif }{\dif t}\E W_l\;.
\end{equ}
In particular,
the discrete master loop equation \eqref{eq:wl-si} converges to 
\begin{equ}\label{eq:mqs}
\frac{\dif }{\dif t} \E W_l=-\frac12  \E W_l\;.
\end{equ} 
\end{theorem}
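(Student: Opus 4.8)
\emph{Setup and reduction.} The plan is to evaluate the four expectations entering \eqref{e:th:1} explicitly through the discrete Driver formula \eqref{dri:dis} and to expand each to order $\eps^{2}$, so that the nominally divergent $O(\eps^{-2})$ prefactor is tamed by cancellations. Orient $l^\eps$ counterclockwise and fix a bond $e$ on it. The two plaquettes adjacent to $e$ generate all deformations along $e$: the interior plaquette $p_N$ (which contains $e$) yields a negative deformation $l_N^-\in\mathbb{D}^-_e(l)$ that is again simple and encloses area $t_\eps-\eps^2$, together with a positive deformation $l_N^+\in\mathbb{D}^+_e(l)$ repeating $e$ and sending a thin finger around $p_N$; the exterior plaquette $p_S$ (which contains $e^{-1}$) yields a simple negative deformation $l_S^-$ enclosing $t_\eps+\eps^2$ and a positive deformation $l_S^+$ with a finger around $p_S$ (compare \eqref{e:merger1}--\eqref{e:merger2}). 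Throughout I write $a\eqdef a_{\tau_0}(\eps)$ for the standard representation $\tau_0$ and $n\eqdef t_\eps/\eps^2\in\Z^+$.

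\emph{The two simple (negative) loops.} Applying \eqref{dri:dis} with a single bounded face and the spectral decomposition \eqref{e:spec-S}, together with the fact that $\int_G\tr(Q)\chi_\tau(Q)\dif Q$ is nonzero only for $\tau$ dual to $\tau_0$, gives at once $\E W_{l_N^-}^\eps=a^{n-1}$ and $\E W_{l_S^-}^\eps=a^{n+1}$.

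\emph{The two positive loops.} Here I use \eqref{dri:dis1} on the refined three-face graph (the plaquette of area $\eps^2$, the remaining bounded region, and the unbounded face), fix an axial gauge trivialising the long edge, and change variables to the small-plaquette holonomy $R$. For $l_S^+$ this reduces the expectation to $\frac1N\int_{G^2}\Tr(Q_eR^{-1})\,S^\eps(R)\,S^\eps_{n}(Q_e)\dif Q_e\dif R$; carrying out the inner $Q_e$-integral by the orthogonality relations for irreducible characters collapses it to $a^{n}\tr(R^{-1})$, and since $\int_G\tr(R^{-1})S^\eps(R)\dif R=a$ we obtain the exact identity $\E W_{l_S^+}^\eps=a^{n+1}$. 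For $l_N^+$ the same procedure gives $\frac1N\int_{G^2}\Tr(Q_e^2Q_f)\,S^\eps(Q_eQ_f)\,S^\eps_{n-1}(Q_f)\dif Q_e\dif Q_f$; substituting $R=Q_eQ_f$ and integrating out $Q_e$ by character orthogonality leaves $a^{n-1}\tr(R^2)$, so that $\E W_{l_N^+}^\eps=a^{n-1}\int_G\tr(R^2)\,S^\eps(R)\dif R$. This last integral is exactly of the type controlled by Lemma~\ref{lem:ga}: with $f(R)=\tr(R^2)$ (note $f(I)=1$, so the lemma is applied to $f-1$) and $\Delta\tr(R^2)(I)=4\,\tr(C_{\tau_0})=4c_{\tau_0}=-4$ by \eqref{e:Ctau} and Remark~\ref{rem}, we get $\int_G\tr(R^2)S^\eps(R)\dif R=1-2\eps^2+O(\eps^4)$.

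\emph{Cancellation and conclusion.} Inserting these into the left-hand side of \eqref{e:th:1}, the two terms equal to $a^{n+1}$ cancel exactly, and the two $a^{n-1}$ contributions combine:
\[
\frac{1}{2\eps^2}\Big(a^{n-1}+a^{n+1}-a^{n-1}\!\!\int_G\tr(R^2)S^\eps(R)\dif R-a^{n+1}\Big)=\frac{a^{n-1}}{2\eps^2}\big(1-(1-2\eps^2+O(\eps^4))\big)=a^{n-1}\big(1+O(\eps^2)\big).
\]
This is precisely where the nominal $O(\eps^{-2})$ blow-up disappears. As $\eps\to0$ one has $a^{n-1}\to\E W_l$ (the simple loop of area $t$) by \eqref{e:approx-a} and \eqref{e:conv-loop}, while the heat equation $\p_t p_t=\tfrac12\Delta p_t$ together with $\Delta\tr=c_{\tau_0}\tr=-\tr$ gives $\tfrac{\dif}{\dif t}\E W_l=\tfrac12 c_{\tau_0}\E W_l=-\tfrac12\E W_l$, hence $-2\tfrac{\dif}{\dif t}\E W_l=\E W_l$. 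Therefore the displayed limit equals $-2\tfrac{\dif}{\dif t}\E W_l$, which is \eqref{e:th:1}; feeding this and $\E W_l^\eps\to\E W_l$ into \eqref{eq:wl-si} yields \eqref{eq:mqs}. The main obstacle is organizing the geometry and signs of the four deformations correctly and recognizing the two cancellations that remove the divergence; the error control in Lemma~\ref{lem:ga} is harmless here because $\tr(R^2)$ is $\eps$-independent and the prefactor $a^{n-1}$ is bounded, though exactly this uniformity is the point that will require genuine work in the general self-intersecting case.
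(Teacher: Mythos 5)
Your proposal is correct, and its skeleton coincides with the paper's: both express the four deformations $l_{i,\pm}$, $l_{o,\pm}$ via Driver's formula, both observe that the two outer deformations cancel exactly, and both extract the $O(\eps^2)$ difference of the inner pair from the Gaussian approximation Lemma~\ref{lem:ga}. The execution, however, is genuinely different in two places, and the comparison is instructive. First, you collapse every integral to a scalar power of $a=a_{\tau_0}(\eps)$ by character orthogonality (Schur/Peter--Weyl), so the outer cancellation becomes the identity $a^{n+1}=a^{n+1}$ and the inner difference becomes $a^{n-1}\bigl(1-\int\tr(R^2)S^\eps(R)\,\dif R\bigr)$ with Lemma~\ref{lem:ga} applied to the single fixed function $\tr(R^2)-1$; the paper instead proves the outer cancellation representation-free (Lemma~\ref{lem:out}, translation invariance plus the convolution identity) and applies Lemma~\ref{lem:ga} to $f(Q)=\tr\bigl(Q_1(I-Q^2)\bigr)$ with $Q_1$ kept as a spectator variable, which requires the uniform bound $\sup_{Q_1}\|f\|<\infty$. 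Second, you identify the limit with the area derivative through the eigenvalue relation $\Delta\tr=c_{\tau_0}\tr=-\tr$, i.e.\ effectively through the closed form $\E W_l=e^{-t/2}$, whereas the paper never computes $\E W_l$: it converts the Laplacian into $2\,\partial_t$ by integrating by parts against the heat kernel and invoking $\partial_t p_t=\tfrac12\Delta p_t$. What each route buys: yours is shorter and fully explicit, but it leans on the exact integrability of the simple-loop case (everything reduces to the single representation $\tau_0$), while the paper's parameter-dependent formulation is precisely the one that survives in Section~\ref{sec:gen}, where the loop variable cannot be integrated out and the analogues of your scalar identities become the quantities $I_m$ and the integration-by-parts corrections. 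Your closing remark correctly anticipates this limitation.
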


This recovers the master loop equation for simple loops in \cite[Theorem~2.3]{Driver17}. 

Note that while a simple application of Driver's formula  will directly yield \eqref{eq:mqs} ($\E W_l$ is just a function of a single variable $t$ for a simple loop $l$), the point of Theorem~\ref{th:1} is to demonstrate 
how various terms in  \eqref{eq:wl-si} combine  (or cancel) to give the area derivative.
To this end, we examine the terms  \eqref{eq:wl-si}  more carefully as follows.
Since $d=2$, there is a natural notion 
of {\it inner and outer} deformations for a simple loop $l$, and we write 
 $l_{i,-}$, $l_{i,+}$, $l_{o,-}$, $l_{o,+}$ for 
inner negative, inner positive, outer negative and outer positive deformations respectively.  With the above notations, we then have $ \mathbb{D}^-_e(l)=\{l_{i,-}, l_{o,-}\}$ and $\mathbb{D}^+_e(l)=\{l_{i,+}, l_{o,+}\}$.   
\[
\begin{tikzpicture}[scale=.9]
\draw[thick] (0,0) -- (0,2) -- (1,2) -- (1,1.5);
\draw[thick,midarrow]   (1,1.5) to (1.5,1.5) ;
\draw[thick] (1.5,1.5) -- (1.5,2) -- (2.5,2) -- (2.5,0)  -- (0,0);
\node at (1.3,-0.5) {$l_{i,-}$};
\end{tikzpicture}
\qquad
\begin{tikzpicture}[scale=.9]
\draw[thick] (0,0) -- (0,2) -- (1,2) -- (1,2.5);
\draw[thick,midarrow]   (1,2.5) to (1.5,2.5) ;
\draw[thick] (1.5,2.5) -- (1.5,2) -- (2.5,2) -- (2.5,0)  -- (0,0);
\node at (1.3,-0.5) {$l_{o,-}$};
\end{tikzpicture}
\qquad
\begin{tikzpicture}[scale=.9]
\draw[thick,red] (0,0) -- (0,2) node[midway,left] {$e_1$} -- (1,2);
\draw[thick,midarrow1] (1,2) -- (1.5,2) node[midway,above] {$e$};
\draw[thick] (1.5,2) -- (1.5,1.5);
\draw[thick,midarrow] (1.5,1.5) -- (1,1.5) node[midway,below] {$e_2$};
\draw[thick] (1,1.5) -- (1,2);
\draw[thick,midarrow2]  (1,2) to (1.5,2);
\draw[thick,red] (1.5,2) -- (2.5,2) -- (2.5,0);  \draw[thick,midarrow,red] (2.5,0) -- (0,0);
\node at (1.3,-0.5) {$l_{i,+}$};
\end{tikzpicture}
\quad
\begin{tikzpicture}[scale=.9]
\draw[thick,red] (0,0) -- (0,2) node[midway,left] {$e_1$} -- (1,2);
\draw[thick,midarrow1] (1,2) -- (1.5,2) node[midway,below] {$e$};
\draw[thick] (1.5,2) -- (1.5,2.5);
\draw[thick,midarrow] (1.5,2.5) -- (1,2.5) node[midway,above] {$e_3$};
\draw[thick] (1,2.5) -- (1,2);
\draw[thick,midarrow2]  (1,2) to (1.5,2);
\draw[thick,red] (1.5,2) -- (2.5,2) -- (2.5,0);  \draw[thick,midarrow,red] (2.5,0) -- (0,0);
\node at (1.3,-0.5) {$l_{o,+}$};
\end{tikzpicture}
\]

 \begin{equs}[eq:def]
 	\E W_{l_{i,-}}^\eps 
	& =\int \tr(Q)S^\eps_{\frac{t_\eps-\eps^2}{\eps^2}}(Q)\dif Q\;, 
\\
 	\E W_{l_{o,-}}^\eps
	& =\int \tr(Q)S^\eps_{\frac{t_\eps+\eps^2}{\eps^2}}(Q)\dif Q\;, 
\\
 	\E W_{l_{i,+}}^\eps
	&=\int \tr(Q_1Q_eQ_2Q_e)S^\eps_{\frac{t_\eps-\eps^2}{\eps^2}}(Q_{1}Q_2^{-1})S^\eps(Q_eQ_2)\dif Q_1\dif Q_2\dif Q_e\;,
\\
 	\E W_{l_{o,+}}^\eps
	&=\int \tr(Q_1Q_eQ_3Q_e)S^\eps_{\frac{t_\eps}{\eps^2}}(Q_{1}Q_e)S^\eps(Q_eQ_3)\dif Q_1\dif Q_3\dif Q_e\;.
 \end{equs}
Indeed, this follows from \eqref{dri:dis} by taking different choices of the graph.  For the negative deformations, we consider the graph built from a single edge which traverses the whole loop. 
 For the positive deformations, we take a graph consisting of three edges, as indicated in the above figure.  For example, to calculate $\E W_{l_{i,+}}^\eps$ we have three edges $e$, $e_{1}$, $e_{2}$ (where $e_1$ is colored red for clarity)
and corresponding edge variables $Q_{e}$, $Q_{1}$, $Q_{2}$.  There are two bounded faces $F_{1}$ and $F_{2}$ with $|F_{1}|=t_\eps-\eps^2$, $\partial F_{1}=e_{1}e_{2}^{-1}$,  and $|F_{2}|=\eps^{2}$, $\partial F_{2}=ee_{2}$, so that \eqref{dri:dis} yields the identity above. Similarly, to calculate $\E W_{l_{o,+}}^\eps$, we have edges $e$, $e_{1}$, $e_{3}$ which generate two bounded faces $\tilde{F}_{1},\tilde{F}_{2}$ with $|\tilde{F}_{1}|=t_{\eps}$, $\partial \tilde{F}_{1}=e_{1}e$ and $|\tilde{F}_{2}|=\eps^2$, $\partial \tilde{F}_{2}=e e_{3}$. 
We remark that 
each negative deformation here yields a function which only depends on a single area variable and obviously  approximates $\E W_l$, this is not the case for the positive deformations.

\medskip 

We now illustrate the role of axial gauge fixing.  Namely, in the above expressions for $\E W_{l_{i,+}}^\eps$ and $\E W_{l_{o,+}}^\eps$ we can reduce from three integration variables to two and effectively `fix' $Q_{e}=I$ by applying \eqref{dri:dis1} with $T=e$ to obtain
\begin{align}
\E W_{l_{i,+}}^\eps
	&=\int \tr(Q_1Q_2)S^\eps_{\frac{t_\eps-\eps^2}{\eps^2}}(Q_{1}Q_2^{-1})S^\eps(Q_2)\dif Q_1\dif Q_2, \label{eq:innerPos}\\
 \E W_{l_{o,+}}^\eps
	&=\int \tr(Q_1Q_3)S^\eps_{\frac{t_\eps}{\eps^2}}(Q_{1})S^\eps(Q_3) \dif Q_1\dif Q_3.  \label{eq:outerPos}
\end{align}
We now turn to the analysis of the deformations in \eqref{eq:wl-si}, which we may write as
\begin{align}
&\frac{1}{2\eps^2} \!\!\! \sum_{l'\in \mathbb{D}^-_e(l) }\!\!\E W_{l'}^\eps
	-\frac{1}{2\eps^2} \!\!\! \sum_{l'\in \mathbb{D}^+_e(l)} \!\!\E W_{l'}^\eps \nonumber \\
&=\frac{1}{2\eps^{2}} \E\big (W_{l_{o,-}}^\eps-W_{l_{o,+}}^\eps \big ) + \frac{1}{2\eps^{2}} \E\big (W_{l_{i,-}}^\eps-W_{l_{i,+}}^\eps \big ) \nonumber.
\end{align}
We first argue that the outer deformations cancel.
\begin{lemma}\label{lem:out} 
It holds that 
 $$\E W_{l_{o,+}}^\eps=\E W_{l_{o,-}}^\eps.$$
\end{lemma}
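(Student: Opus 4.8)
The plan is to start from the axial-gauge-fixed formula \eqref{eq:outerPos} for $\E W_{l_{o,+}}^\eps$ and to expose a convolution hidden inside the two-fold integral. Concretely, I would change variables from $(Q_1,Q_3)$ to $(Q_1,a)$ with $a = Q_1 Q_3$. By left-invariance of the Haar measure, for each fixed $Q_1$ the substitution $Q_3 = Q_1^{-1}a$ satisfies $\dif Q_3 = \dif a$, while the observable $\tr(Q_1 Q_3)$ becomes simply $\tr(a)$. After swapping the order of integration (legitimate since all integrands are bounded on the compact group $G$), the inner integral over $Q_1$ becomes $\int S^\eps_{t_\eps/\eps^2}(Q_1)\,S^\eps(Q_1^{-1}a)\dif Q_1$, which is exactly the convolution $\big(S^\eps_{t_\eps/\eps^2} * S^\eps\big)(a)$ in the sense of \eqref{eq:defconv}.

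Next I would invoke the semigroup identity $S^\eps_k * S^\eps = S^\eps_{k+1}$, which is immediate from the definition of $S^\eps_k$ as the $k$-fold convolution of $S^\eps$ with itself. Taking $k = t_\eps/\eps^2$ (a positive integer, as noted above \eqref{eq:wl-si}) gives $\big(S^\eps_{t_\eps/\eps^2} * S^\eps\big)(a) = S^\eps_{(t_\eps+\eps^2)/\eps^2}(a)$, so that $\E W_{l_{o,+}}^\eps = \int \tr(a)\,S^\eps_{(t_\eps+\eps^2)/\eps^2}(a)\dif a$. Renaming $a$ to $Q$, this is precisely the expression for $\E W_{l_{o,-}}^\eps$ recorded in \eqref{eq:def}, which completes the argument.

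There is essentially no analytic obstacle: for each fixed $\eps$ this is an exact algebraic identity on the lattice, reflecting that an outer positive deformation merely glues one extra plaquette onto the outside of the loop and thereby raises the area parameter by exactly $\eps^2$ — the same effect as an outer negative deformation, so the two coincide. The only points needing minor care are the Haar-measure change of variables and the bookkeeping of which face absorbs the extra plaquette; it is the axial gauge choice $Q_e = I$ used to pass from the three-variable formula in \eqref{eq:def} to \eqref{eq:outerPos} that makes the convolution structure manifest. As a sanity check one could instead expand both sides in the character basis via \eqref{e:spec-S}, where $S^\eps_k = \sum_\tau d_\tau a_\tau(\eps)^k \chi_\tau$ renders the identity $a_\tau(\eps)^{t_\eps/\eps^2}\cdot a_\tau(\eps) = a_\tau(\eps)^{(t_\eps+\eps^2)/\eps^2}$ transparent term by term.
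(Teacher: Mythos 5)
Your proof is correct and takes essentially the same route as the paper: both arguments expose the convolution structure hidden in \eqref{eq:outerPos} via a Haar-invariance change of variables and then apply the semigroup identity $S^\eps_{k} * S^\eps = S^\eps_{k+1}$. The only cosmetic difference is that you substitute in $Q_3$ (setting $a = Q_1 Q_3$) and convolve over $Q_1$, whereas the paper shifts $Q_1$ by $Q_3^{-1}$ and convolves over $Q_3$.
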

\begin{proof}
The proof follows from changing variables in $Q_{1}$ then using the definition of convolution in $Q_{3}$.  Indeed, by translation invariance of the Haar measure, for all $Q_{3}$ it holds
\begin{equation}
\int \tr(Q_1Q_3)S^\eps_{\frac{t_\eps}{\eps^2}}(Q_{1})\dif Q_1=\int \tr(Q_1)S^\eps_{\frac{t_\eps}{\eps^2}}(Q_{1}Q_{3}^{-1} )\dif Q_1 \nonumber,
\end{equation}
and for all $Q_{1}$ it holds, c.f. \eqref{eq:defconv}
\begin{equation}
\int  S^\eps_{\frac{t_\eps}{\eps^2}} (Q_{1}Q_{3}^{-1} )S^\eps(Q_3)\dif Q_3=S^{\eps}_{\frac{t_\eps+\eps^{2} }{\eps^2}}(Q_{1}) \nonumber,  
\end{equation}
so combining the two with \eqref{eq:outerPos} gives the claim.
\end{proof}

We proceed to the inner deformations. 
\begin{proof}[Proof of Theorem~\ref{th:1}] 
We change variables in $Q_{1}$ to have
\begin{align*}
\E W_{l_{i,+} }^\eps
	&=\int \tr(Q_1Q_2^{2} )S^\eps_{\frac{t_\eps-\eps^2}{\eps^2}}(Q_{1})S^\eps(Q_2)\dif Q_1\dif Q_2.
\end{align*}
Hence, 
\begin{align*}
	\E W_{l_i,-}^\eps-	\E W_{l_i,+}^\eps
	&=\int \tr\Big(Q_1(I-Q_2^{2}) \Big)S^\eps_{\frac{t_\eps-\eps^2}{\eps^2}}(Q_{1})S^\eps(Q_2)\dif Q_1\dif Q_2.
\end{align*}

We can now apply the Gaussian approximation Lemma \ref{lem:ga} in the $Q_{2}$ integral to the function $f(Q)= \text{tr}\big(Q_{1}(I-Q^{2})\big)$.
Indeed, noting that
\begin{equation*}
	\mathcal{L}_{j}f(Q)=-2\text{tr}(Q_1L_{j}Q^{2}), \quad \mathcal{L}_{j}^{2}f(Q)=-4\text{tr}(Q_1L_{j}^{2}Q^{2}),
\end{equation*}
we obtain
 \begin{align*}
\E W_{l_i,-}^\eps-	\E W_{l_i,+}^\eps
	=&-2\eps^2\sum_j\int \tr(Q_1L_j^2 )S^\eps_{\frac{t_\eps-\eps^2}{\eps^2}}(Q_{1})\dif Q_1+O(\eps^4)
	\\=&-2\eps^2\int \Delta\tr(Q_1 )S^\eps_{\frac{t_\eps-\eps^2}{\eps^2}}(Q_{1})\dif Q_1+O(\eps^4),
\end{align*}
where we use $\sup_{Q_1}\|f\|<\infty$ by an elementary calculation and the compactness of Lie group $G$. 
Hence, we use \eqref{eq:a-cons} to obtain
\begin{align*}
	&\lim_{\eps \to 0} \frac{1}{2\eps^{2}} \E\big (W_{l_{i,-}}^\eps-W_{l_{i,+}}^\eps \big ) \nonumber \\
	&=-\int \Delta \text{tr}(Q) p_{t}(Q )  \dif Q
	=-2\frac{\dif}{\dif t}\int \text{tr}(Q) p_{t}(Q )  \dif Q \nonumber,
\end{align*}
where we used  integration by parts, and finally $\partial_{t}p_{t}=\frac{1}{2}\Delta p_t$ . 
\end{proof}

We make some remarks on the above proof before moving on to the general case. 
Note that passing $\eps\to 0$ on both sides of \eqref{eq:wl-si} together with \eqref{eq:mqs} would yield \eqref{e:th:1} (so in particular \eqref{eq:wl-si}, \eqref{e:th:1}, \eqref{eq:mqs} are ``consistent''). Of course, it is not our purpose to prove \eqref{e:th:1} using  \eqref{eq:mqs}.
More importantly, our goal here is to analyze the limiting behavior of {\it each} deformation term in the lattice loop equation.
In fact, even the existence of limit for each positive deformation term is not a priori guaranteed by the general result \eqref{e:conv-loop}, or  even if we take for granted that $\E W_{l_{i,+}},\E W_{l_{o,+}}$ both converge to $\E W_{l}$, it is not clear how that helps us pass to the limit in $\frac{1}{2\eps^2} \! \sum_{l'\in \mathbb{D}^-_e(l) }\E W_{l'}^\eps
-\frac{1}{2\eps^2} \! \sum_{l'\in \mathbb{D}^+_e(l)} \E W_{l'}^\eps$, due to the factor of $\eps^{-2}$. 
Our analysis above 
 shows that two of the deformation terms cancel and the other two of them yields a Laplacian $\Delta$ that is turned into an area derivative, 
which is stronger than convergence of the sum of the four terms.

 \section{General loops}\label{sec:com}
 In this section we consider a loop  given by a closed curve with simple crossings 
 and let $v$ be a crossing. We prove that suitable linear combination of  the discrete master loop equations converge to the Makeenko-Migdal equation derived in \cite[Proposition 6.22]{Levy11} and later in \cite[Theorem 1.1]{Driver17}.   In Section \ref{sec:4.1} we first prove two useful convergence lemmas. In Section \ref{sec:gen} we prove Theorem \ref{th:main}.  In Section \ref{sec:deg} we consider some degenerate case. 
 
 \subsection{Useful convergence lemmas}\label{sec:4.1}
 We first prove two useful lemmas. Recall our notation \eqref{e:CL} for $L_j$ and $\cL_j$.

 \bl\label{lem:J} 
 For $a_1, a_2\in G$ and every representation $\tau$, 
 \begin{equs}[eq:trq-1]
 	&\frac{1}{2\eps^{2}}\int \tr\Big((Q-Q^{-1})a_1 \Big)\chi_{\tau}(Qa_2)S^\eps(Q) \dif Q
 \\&	=\sum_{j=1}^{d(\mfg)}\cL_j\tr(a_1 )\cL_j\chi_{\tau}(  a_2)+O\Big((|c_\tau|^{2}+1)d_\tau\eps^2\Big).
 \end{equs}
The proportional constant in the error term is independent of
$\tau$, $a_1$ and $a_2$.
 \el

 \begin{proof} We also apply Lemma \ref{lem:ga}. 
 	Let us now consider  the function 
 	\begin{equation}
 		f(Q)\eqdef  f_{1}(Q)f_{2}(Q), \qquad f_{1}(Q)\eqdef\text{tr}( (Q-Q^{-1})a_{1} ), \quad f_{2}(Q)\eqdef\text{Tr}[\tau(Q)\tau(a_{2}) ] \nonumber.
 	\end{equation}
 	By direct calculation, we find that for $X\in \mfg$ 
 	\begin{align*}
 		\mathcal{L}_{X}f_{1}(Q)&=\text{tr} \Big [ (XQ-Q^{-1}X^*)a_{1} \Big ], 
 		\\  
		 \mathcal{L}_{X}^2f_{1} (Q)&=\text{tr} \Big [ (X^2Q- Q^{-1}(X^*)^2)a_{1} \Big ] \nonumber,
 	\end{align*}
 	and in particular we notice that $f_{1}(I)= \mathcal{L}_{X}^2f_{1}(I)=0$  and $\mathcal{L}_{X}f_{1}(I)=2\text{tr}[Xa_{1}] $.  Hence, by Leibniz rule we find that
 	\begin{align*}
 		 \mathcal{L}_{X}^2 f (I)
 		&=   \mathcal{L}_{X}^2 f_{1} (I) f_{2}(I)+  f_{1} (I) \mathcal{L}_{X}^2f_{2}(I)+ 2\mathcal{L}_{X}f_{1}(I)\mathcal{L}_{X}f_{2}(I) \nonumber \\
 		&= 2\mathcal{L}_{X}f_{1}(I)\mathcal{L}_{X}f_{2}(I) \nonumber.
 	\end{align*}
 	Noting that $\mathcal{L}_{X}f_{2}(Q)=\text{Tr}[\tau(X)\tau(Q)\tau(a_2) ]$, we find that
 	\begin{equation}
 		\frac{1}{2}\Delta f(I)= 2\sum_{j=1}^{d(\mfg)}\text{tr}[L_ja_{1}]\Tr\big(\tau(L_j)\tau(a_{2}) \big)=2\sum_{j=1}^{d(\mfg)}\cL_j\tr(a_1 )\cL_j\chi_{\tau}(  a_2) \nonumber. 
 	\end{equation}
 	It remains to estimate $\|f\|$ given by \eqref{e:norm-f}.  
 	We have 
 	\begin{align}\label{bd:fQ}
 		|f(Q)|\leq 2|f_2(Q)|\leq2 \chi_\tau(I)=2d_\tau\;.
 	\end{align}
 	For  $A\in\mfg$
 	$$\cL_A^2 f(I)=4\text{tr}[ Aa_{1}]\Tr\big(\tau(A)\tau(a_2)\big)\;,$$ 
 	we use \eqref{e:Ctau} to  have
 	\begin{equs}[bd:LA2]
 		|\cL_A^2 f(I)|&\leq\frac4N |A|\Big|\sum_j\<A,L_j\>\Tr(\tau(L_j)\tau(a_2))\Big|
 		\\&\leq\frac4N |A|\Big(\sum_j|\<A,L_j\>|^2\Big)^{1/2}\Big|\sum_j\Tr\big(\tau(L_j)^2\big)\Tr(\tau(a_2a_2^*))\Big|^{1/2}
 		\\&\leq\frac4N |A|^2|c_\tau|^{1/2}d_\tau.
 	\end{equs}
 	 It remains to consider  $\frac{\dif^{4}}{\dif t^{4}}f(e^{tA})$ for $t\in [-1,1]$. 
 	 Define  $g(t):=f(e^{At}), g_{i}(t)=f_{i}(e^{At}), i=1,2$. We use $g^{(k)}_i(t)$ to denote $k$-th derivative of $g_i$.  Let us estimate $g^{(4)}(t)$ for $t \in [-1,1]$ as required by the above norm. Recall that $\tau(L_j^*)=\tau(L_j)^*=-\tau(L_j)$.  We have
 	\begin{align}
 		&|g_{2}^{(4)}(t)|=|\text{Tr}[\tau(A)^{4}\tau(e^{At})\tau(a_2)]| \nonumber \\
 		&=\Big|\sum_{\substack{i_k=1\\ k=1,\dots,4}}^{d(\mfg)} \prod_{k=1}^4\langle A, L_{i_k} \rangle \text{Tr}\Big( \prod_{k=1}^4\tau(L_{i_k})\tau(e^{At})\tau(a_2) \Big) \Big| \nonumber \\
		& \leq \sum_{i_{1},i_{2} }\langle A,L_{i_{1}} \rangle \langle A,L_{i_{2}} \rangle \text{Tr}\Big(\tau (L_{i_1})^{2} \tau (L_{i_2})^{2} \Big)^{\frac{1}{2} }  \nonumber \\
		&\quad  \times \sum_{i_{3},i_{4}} \langle A,L_{i_{3}} \rangle \langle A,L_{i_{4}} \rangle \text{Tr}\Big(\tau (L_{i_3})^{2} \tau (L_{i_4})^{2}\Big)^{\frac{1}{2} }  \nonumber \\   
 		& \leq |A|^{4} \bigg ( \sum_{\substack{i_k=1\\ k=1,\dots,4}}^{d(\mfg)}\text{Tr}\Big(\tau (L_{i_1})^{2} \tau (L_{i_2})^{2} \Big)  \text{Tr}\Big(\tau (L_{i_3})^{2} \tau (L_{i_4})^{2} \Big ) \bigg)^ {\frac{1}{2}} \nonumber \\
 		&\leq |A|^{4}c_{\tau}^{2}d_{\tau} \nonumber.  
 	\end{align}
 	 Note that in the first inequality we used the following H\"{o}lder's inequality for the trace
 	$$\Tr(B_1B_2)\leq \Tr(B_1B_1^*)^{1/2}\Tr(B_2B_2^*)^{1/2},$$ with $B_1=\tau (L_{i_1})\tau (L_{i_2})$ and $B_2=\tau (L_{i_3}) \tau (L_{i_4})\tau(e^{At})\tau(a_2)$.  In the second inequality, we used H\"{o}lder's inequality for each double summation.  In the third, we used  the definition of $C_{\tau}$ from \eqref{e:Ctau}.
 	
 	  Similarly, we have
 	\begin{align}
 		&|g_1^{(1)}(t)g_{2}^{(3)}(t)|=|\text{tr}  [ (Ae^{At}-e^{-At}A^*)a_{1}  ]\text{Tr}[\tau(A)^{3}\tau(e^{At})\tau(a_2)]| \nonumber \\
 		&\leq\frac2N|A|\Big|\sum_{i,j,k} \langle A, L_{i} \rangle \<A,L_j\> \langle A, L_{k} \rangle \text{Tr}[ \tau(L_{i})\tau(L_j)\tau(L_{k})\tau(e^{At}) \tau(a_2)\ ] \Big| \nonumber \\
 		& \leq\frac2N|A|^{4}\Big|\sum_{i,j,k} \text{Tr}[\tau (L_{i})^{2} \tau (L_{j})^{2} ]   \text{Tr}[\tau (L_{k})^{2} ]\Big|^{\frac{1}{2}} \leq\frac2N |A|^{4}|c_{\tau}|^{3/2}d_{\tau} , \nonumber 
 	\end{align}
 	and 
 	\begin{align}
 		&|g_1^{(2)}(t)g_{2}^{(2)}(t)|=\Big|\text{tr}\Big( (A^2e^{At}-e^{-At}(A^*)^2)a_{1}  \Big)\text{Tr}[\tau(A)^{2}\tau(e^{At})\tau(a_2)]\Big| \nonumber \\
 		&\leq\frac2{N^2}|A|^2\Big|\sum_{i,j} \langle A, L_{i} \rangle \<A,L_j\>  \text{Tr}[ \tau(L_{i})\tau(L_j)\tau(e^{At})\tau(a_2) \ ] \Big| \nonumber \\
 		& \leq\frac2{N^2}|A|^{4}\Big(\sum_{i,j} \text{Tr}[\tau (L_{i})^{2}  ]   \text{Tr}[\tau (L_{j})^{2} ]\Big)^{\frac{1}{2}} 
 		\leq\frac2{N^2} |A|^{4}|c_{\tau}|d_{\tau} \nonumber.  
 	\end{align}
 	We further have
 		\begin{align}
 		&|g_1^{(3)}(t)g_{2}^{(1)}(t)|=\Big|\text{tr}\Big( (A^3e^{At}-e^{-At}(A^*)^3)a_{1}  \Big)\text{Tr}\Big(\tau(A)\tau(e^{At})\tau(a_2)\Big)\Big| \nonumber,
 	\end{align}
 	and
 		\begin{align}
 		&|g_1^{(4)}(t)g_{2}(t)|=\Big|\text{tr}\Big( (A^4e^{At}-e^{-At}(A^*)^4)a_{1}  \Big)\text{Tr}\Big(\tau(e^{At})\tau(a_2)\Big)\Big| \nonumber,
 	\end{align}
 	We write $A=\sum_{j=1}^{d(\mfg)}\<A,L_j\>L_j$ as above and obtain
 		\begin{align*}
 		&|g_1^{(3)}(t)g_{2}^{(1)}(t)| \leq  2|A|^{4}|c_{\tau}|^{1/2}d_{\tau}\;,
 	\end{align*}
 	and
 	\begin{align}
 		&|g_1^{(4)}(t)g_{2}(t)| \leq 2|A|^{4}d_\tau \nonumber.
 	\end{align}
Summarizing  the above calculation we obtain
$$ \sup_{t \in [-1,1]}|A|^{-4} \bigg |\frac{\dif^{4}}{\dif t^{4}}f(e^{tA}) \bigg |\leq O\Big((|c_\tau|^{2}+1)d_\tau\Big),$$
which combined with \eqref{bd:fQ} and \eqref{bd:LA2} implies  the result  by Lemma \ref{lem:ga}. 
 \end{proof}

  \bl\label{lem:J1} Suppose that $\frac{t(\eps)}{\eps^2}\in \Z^+$,  $|t(\eps)-t|\lesssim \eps$. It holds that for $a_1, a_2\in G$ 
 \begin{equ}[eq:trq-2]
 	\lim_{\eps\to 0}\frac1{2\eps^2} \int  \tr\Big((Q-Q^{-1})a_1 \Big)S^\eps_{\frac{t(\eps)}{\eps^2}}(Qa_2)S^\eps(Q) \dif Q
 		= \sum_{j=1}^{d(\mfg)}\cL_j\tr(a_1 )\cL_jp_{t}( a_2)
 \end{equ}
 where  
  the convergence holds uniformly in $a_1, a_2$.
 \el 
 \begin{proof}
 	We use \eqref{e:spec-S} to write the both sides of \eqref{eq:trq-2} as 
 	\begin{align*}
 	\sum_\tau d_\tau a_\tau(\eps)^{\frac{t(\eps)}{\eps^2}}J_\tau^\eps,\qquad \sum_\tau d_\tau e^{\frac{c_\tau t}{2}}J_\tau
 	\end{align*}
 	with  the sum running over all the irreducible representations of $G$ and 
 	$$ 	J_\tau^\eps\eqdef \frac1{2\eps^2}\int \tr\Big((Q-Q^{-1})a_1 \Big)\chi_\tau(Qa_2)S^\eps(Q) \dif Q,$$
 	and 
 	$$ 	J_\tau\eqdef \sum_{j=1}^{d(\mfg)}\cL_j\tr(a_1 )\cL_j\chi_\tau( a_2).$$
 	We need to prove that for $\eps\to0$,
 	\begin{align*}
 		\cA\eqdef \Big|\sum_\tau d_\tau \Big(a_\tau(\eps)^{\frac{t(\eps)}{\eps^2}}J_\tau^\eps- e^{\frac{c_\tau t}{2}}J_\tau\Big)\Big|\to 0.
 	\end{align*}
 		We decompose the sum as 
 		\begin{align*}
 			\cA &\leq \Big|\sum_{\tau\in \cF} d_\tau \Big(a_\tau(\eps)^{\frac{t(\eps)}{\eps^2}}J_\tau^\eps- e^{\frac{c_\tau t}{2}}J_\tau\Big)\Big|
 			\\&\quad+\Big|\sum_{\tau\notin \cF} d_\tau a_\tau(\eps)^{\frac{t(\eps)}{\eps^2}}J_\tau^\eps\Big|+\Big|\sum_{\tau\notin \cF} d_\tau e^{\frac{c_\tau t}{2}}J_\tau\Big|
 			\\&=\cA_1+\cA_2+\cA_3,
 		\end{align*}
 		with $\cF$ being a suitable finite set. 
 		
 		Using Lemma \ref{lem:J} and \eqref{e:approx-a} we know that for any finite set $\cF$, $\cA_1\to0$, as $\eps\to0$. It remains to  choose suitable $\cF$ such that $\cA_2$ and $\cA_3$ close to zero. Since $\cA_3$ is the remainder of the  spectral decomposition of the smooth function
 		 $$\sum_{j=1}^{d(\mfg)}\cL_j\tr(a_1 )\cL_jp_{t}(  a_2)=\sum_\tau d_\tau e^{\frac{c_\tau t}{2}}J_\tau\;,$$
 	we can easily find $\cF$ such that $\cA_3$ is small enough. 
 	
 	In the following we consider $\cA_2$. Using Lemma \ref{lem:J} and \eqref{e:Ctau} we have
 	\begin{align*}
 				|J_\tau^\eps|&\lesssim  \Big(\sum_{j}|\tr(L_ja_1 )|^2\Big)^{1/2}\Big(\sum_{j}|\cL_j\chi_{\tau}( a_2)|^2\Big)^{1/2}+(|c_\tau|^{2}+1)d_\tau\eps^2
 				\\&\lesssim\Big|\sum_{j}\Tr\Big(\tau(L_j)^2 \Big)\chi_{\tau} (a_2a_2^*)\Big|^{1/2}+(|c_\tau|^{2}+1)d_\tau\eps^2
 				\\&\lesssim d_\tau(|c_\tau|^{1/2}+(1+|c_\tau|^{2})\eps^2)\;,
 	\end{align*}
 	where the proportional constant is independent of $\tau$. 
 	Hence, we obtain
 	\begin{align*}
 		\cA_2\lesssim \Big|\sum_{\tau\notin \cF} d_\tau^2 a_\tau(\eps)^{\frac{t(\eps)}{\eps^2}}(|c_\tau|^{1/2}+(1+|c_\tau|^{2})\eps^2)\Big|\;.
 	\end{align*}
 	In \cite[Appendix A]{BS83} it is proved that  for any $\kappa>0$ there exists a finite set $\cF$ such that
 	\begin{align*}
 		\sum_{\tau\notin\cF}d_\tau^2a_{\tau}(\eps)^{\frac{t}{\eps^2}}<\kappa.
 	\end{align*}
 	We then follow the calculation there to prove that there exists a finite set $\cF$ and $\eps_0>0$ such that for $\eps\leq \eps_0$
 	 \begin{align}\label{eq:A2}
 	 	\cA_2<\kappa.
 	 	\end{align}
 	 	We put the proof of \eqref{eq:A2} in appendix. Combining the above calculation, we obtain that $\cA\to0$, $\eps\to0$. Since $G$ is compact, we know that the convergence is uniform in $a_1, a_2$.  The result follows. 
 		\end{proof}
 
 \subsection{Convergence of master loop equations}\label{sec:gen}
 
 In this section we consider general loops of the following form:
\begin{equ}[e:genl]
l=e_1Ae_4^{-1}e_2Be_3^{-1},
\end{equ}
 where $A$ and $B$ are sequences of edges not belonging to $\{e_1,e_2,e_3,e_4\}$. 
 As shown in Fig.~\ref{fig:DHKcase} (which is essentially \cite[Fig. 3]{Driver17}), we consider the crossing point denoted by $v$, at which  $e_1,e_2,e_3,e_4$ meet,
 and we write 
\begin{equ}[e:genl12]
l_1=e_1 Ae_4^{-1},\qquad l_2=e_2Be_3^{-1}.
\end{equ}

We consider 
 $$W_l=\tr(a_{1}\alpha a_{4}^{-1}a_{2}\beta a_{3}^{-1}),$$
where $a_i=Q_{e_i}$ and $\alpha=Q_A$, $\beta=Q_B$,
 and
  $$W_{l_1}=\tr(a_{1}\alpha a_{4}^{-1})\;,
  \qquad W_{l_2}=\tr(a_{2}\beta a_{3}^{-1}) \;.$$

{\bf Notation.} 
Below we will often have products 
of the form $\prod_{i=1}^4 c_{i,i+1}$ where $i+1$ is understood as $1$ when
$i=4$.

\medskip

  We then write the boundary of each face as 
 $$\partial F_1=e_1 A_1e_2^{-1},\quad \p F_2=e_2A_2e_3^{-1}, \quad \p F_3=e_3A_3 e_4^{-1},\quad \p F_4=e_4 A_4 e_1^{-1},$$
 where $A_1,\cdots,A_4$ are certain sets of edges (which might be empty, see, e.g. $F_4$ in Fig.~\ref{fig:DHKcase}).
 Set $t_i=|F_i|$, $i=1,2,3, 4$, and write $\alpha_i=Q_{A_i}$, which is understood as identity
 if $A_i$ is empty.
 By Driver's formula \eqref{eq:Driver-c} we have 
 \begin{equ}[eqc:WL]
 	\E W_{l}=\int \tr(a_1\alpha a_4^{-1}  a_2\beta a_3^{-1})
	\prod_{i=1}^4 p_{t_i}(a_i\alpha_i a_{i+1}^{-1})
	\,\nu(\mathbf{b}) \, \dif \mathbf{a} \, \dif \mathbf{b},
 \end{equ}
where $\mathbf{a}=(a_1,a_2,a_3,a_4)\in G^4$, and $\mathbf{b}$ denotes the edge variables 
for the edges in $A,B$.
Here $\nu(\mathbf{b})$ denotes the product of heat kernels arising from the faces other than $F_1,\cdots,F_4$, which depends only on the $\mathbf{b}$ variables;
 for example in the right picture of Fig.~\ref{fig:DHKcase} it contains a heat kernel arising from the face to the left of $F_2$.
 
In Section~\ref{sec:Dri} we defined general lattice approximation in terms of the maps.  Here we impose an additional condition that the crossing point $v$ should be ``approximated'' by an edge, as follows. 
Let $l$ be
given by the general form \eqref{e:genl}, then
we say that $\{l^\eps\}_{\eps>0}$ is a lattice approximation of $l$
{\it with respect to $v$} if   $\{l^\eps\}_{\eps>0}$ is a lattice approximation of  the following form
\begin{equ}[e:genloop]
 l^\eps=e^\eps e_1^\eps A^\eps (e_4^{\eps})^{-1}e^\eps e_2^\eps B^\eps (e_3^{\eps})^{-1},
 \end{equ}
where, recalling the injection $i_\eps$ in \eqref{e:ij},
\[
i_\eps : (e_1,e_2,e_3,e_4,A,B) \mapsto (e_1^\eps,e_2^\eps,e_3^\eps,e_4^\eps,A^\eps,B^\eps)
\]
and $e^\eps \notin \mbox{Image}(i_\eps)$ (recall Remark \ref{re:app}),
with $|e^\eps|=O(\eps)$.
Here $e^\eps$ can be viewed as the lattice approximation to the crossing point $v$. 
Since the
  edges in $A,B$ do not belong to $\{e_1,e_2,e_3,e_4\}$, and $i_\eps$ is injective,
the
  edges in $A^\eps,B^\eps$ also do not belong to  $\{e_1^\eps,e_2^\eps,e_3^\eps,e_4^\eps\}$.

Remark that besides \eqref{e:genloop} one may also consider lattice approximations where  an edge $e^\eps$ and  its inverse $(e^\eps)^{-1}$ appear, see Remark~\ref{re:ee}. 

Below we take any such  lattice approximation $\{l^\eps\}_{\eps>0}$ with respect to $v$.
 We then have  lattice approximations $\{l^\eps_1\}_{\eps>0}$ and $\{l^\eps_2\}_{\eps>0}$ to $l_1$ and $l_2$ given by 
\begin{equ}[e:genloop12]
l^\eps_1=e^\eps e_1^\eps A^\eps (e_4^{\eps})^{-1},\qquad l_2^\eps=e^\eps e_2^\eps B^\eps (e_3^{\eps})^{-1}.
\end{equ}
By \eqref{e:pos-split}, $l^\eps_1$, $l^\eps_2$ are positive splitting of $l^\eps$.
Here  it is clear that $\{l^\eps_k\}_{\eps>0}$  is a lattice approximation of $l_k$ for each $k=1,2$. Indeed, $l^\eps_k$ consists of a subset of the edges of the graph given by $l^\eps$, 
and correspondingly $l_k$ consists of a subset of the edges of the graph given by $l$, 
so there is a natural injection $i_\eps^{(k)}$ as required in \eqref{e:ij} from the edges in $l_k$ to the edges in $l^\eps_k$, which is just the restriction of $i_\eps$.
Also, since each face enclosed by $l_k^\eps$ is a union of  (typically more than one) faces enclosed
by $l^\eps$ (recalling Fig.~\ref{fig:DHKcase} for example),
and similarly for $l_k$ and $l$,
 there is also a natural bijection $j_\eps^{(k)}$ as required in \eqref{e:ij}
from the faces  enclosed by $l_k$ to the  faces  enclosed by $l^\eps_k$,
which is just applying $j_\eps$ to a union of faces.
 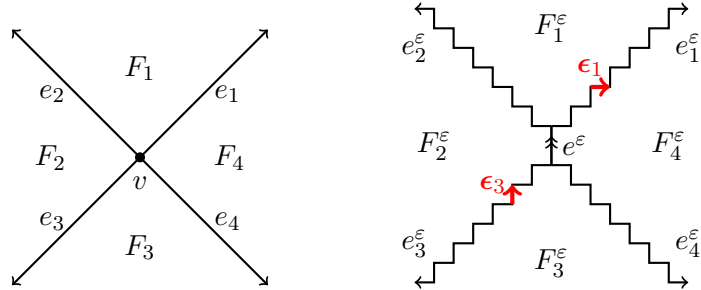
\begin{figure}[h] 
  \centering
  \begin{tikzpicture}[scale=1.7]
\filldraw [black] (0,0) circle (1pt); \node at (0, -0.2) {$v$};
\draw[thick,->] (0,0) -- (1,1) node[midway,right] {$e_1$};
\draw[thick,->] (0,0) -- (1,-1) node[midway,right] {$e_4$};
\draw[thick,->] (0,0) -- (-1,1) node[midway,left] {$e_2$};
\draw[thick,->] (0,0) -- (-1,-1) node[midway,left] {$e_3$};
\node at (0, 0.7) {$F_1$};\node at (0, -0.7) {$F_3$};
\node at (0.7,0) {$F_4$};\node at (-0.7,0) {$F_2$};
\end{tikzpicture}
\qquad
\qquad
\begin{tikzpicture}[scale=2.6]
\draw[thick,midarrow] (0,-0.1) -- (0,0.1)   node[midway,right] {$e^\eps$};
\draw[thick,midarrow2] (0,-0.1) -- (0,0.1);
\draw[thick,->] (0,0.1)
\foreach \i in {1, 2, ..., 13} {-- ++({(0.5-0.5*(-1)^\i)*0.1}, {(0.5+0.5*(-1)^\i)*0.1})};
\node at (0.7,0.5) {$e_1^\eps$};
\draw[ultra thick,red,->] (0.2,0.3) -- (0.3,0.3); 
\node[red] at (0.2,0.4) {$\ee_1$};
\draw[thick,->] (0,-0.1)
\foreach \i in {1, 2, ..., 13} {-- ++({(0.5-0.5*(-1)^\i)*0.1}, {-(0.5+0.5*(-1)^\i)*0.1})};
\node at (0.7,-0.5) {$e_4^\eps$};
\draw[thick,->] (0,0.1)
\foreach \i in {1, 2, ..., 13} {-- ++({-(0.5-0.5*(-1)^\i)*0.1}, {(0.5+0.5*(-1)^\i)*0.1})};
\node at (-0.7,0.5) {$e_2^\eps$};
\draw[thick,->] (0,-0.1)
\foreach \i in {1, 2, ..., 13} {-- ++({-(0.5-0.5*(-1)^\i)*0.1}, {-(0.5+0.5*(-1)^\i)*0.1})};
\node at (-0.7,-0.5) {$e_3^\eps$};
\draw[ultra thick,red,<-] (-0.2,-0.2) -- (-0.2,-0.3); 
\node[red] at (-0.3,-0.2) {$\ee_3$};
\node at (0, 0.6) {$F_1^\eps$};\node at (0, -0.6) {$F_3^\eps$};
\node at (0.6,0) {$F_4^\eps$};\node at (-0.6,0) {$F_2^\eps$};
\end{tikzpicture}
\caption{The left picture illustrates part of a loop $l$ in a neighborhood of the crossing point $v$
as in Fig.~\ref{fig:DHKcase}.
The right picture is a lattice approximation of the loop $l$ with respect to  $v$. The edge $e^{\eps}$ can be viewed as the ``lattice approximation'' to  $v$. The picture also shows the bonds $\ee_1$, $\ee_3$ in red in \eqref{e:triple} (where $\ee$ that is not drawn here is a bond in $e^\eps$) which are used later in Lemmas \ref{lem:ms1} \ref{lem:ms2}.} \label{fig1eps}
\end{figure}

 We also have
 $$W_{l}^\eps=\tr(Q_{e^\eps}a_1^\eps\alpha^\eps (a_4^\eps)^{-1}  Q_{e^\eps} a_2^\eps\beta^\eps (a_3^\eps)^{-1}),$$
 and
 $$W_{l_1}^\eps=\tr(Q_{e^\eps}a_1^\eps\alpha^\eps (a_4^\eps)^{-1}),\qquad W_{l_2}^\eps=\tr(Q_{e^\eps} a_2^\eps\beta^\eps (a_3^\eps)^{-1}),$$
 where we write $a_i^\eps=Q_{e_i^\eps}$ and $\alpha^\eps=Q_{A^\eps}, \beta^\eps=Q_{B^\eps}$. 
  We then write the boundary of each face as 
 \begin{align*}
 	\partial F_1^\eps&=  e_1^\eps A_1^\eps (e_2^\eps)^{-1},
	\quad 
	\p F_2^\eps=e^\eps e_2^\eps A_2^\eps (e_3^\eps)^{-1}, 
 	\\ 
	\p F_3^\eps&=e_3^\eps A_3^\eps  (e_4^\eps)^{-1},
	\quad 
	\p F_4^\eps=e^\eps e_4^\eps A_4^\eps (e_1^\eps)^{-1} 
 \end{align*}
 for suitable sets of edges $A^\eps_1,\cdots,A^\eps_4$.
 Set $t_i(\eps)=|F_i^\eps|$, $i=1,2,3, 4$. Note that $t_i(\eps)/\eps^2$ is a positive integer.

We write $\alpha_i=Q_{A_i^\eps}$. By Remark \ref{re:ax} we choose 
$Q_{\ee}=I$ for all the bonds $\ee$ in $e^\eps$ 
in an axial gauge and apply Driver's formula \eqref{dri:dis1} to get, 
 \begin{equ}[eqc:WLd]
 	\E W_{l}^\eps=\int \tr(a_1\alpha a_4^{-1}  a_2\beta a_3^{-1})
\prod_{i=1}^4 S^\eps_{\frac{t_i(\eps)}{\eps^2}}(a_i\alpha_i a_{i+1}^{-1})
	\, \nu^\eps(\mathbf{b}) \, \dif \mathbf{a} \, \dif \mathbf{b} 
 \end{equ}
 where, similarly as in \eqref{eqc:WL}, 
 $\mathbf{a}=(a_1,a_2,a_3,a_4)\in G^4$, 
 $\mathbf{b}$ denotes the edge variables 
for the edges in $A^\eps$ and $B^\eps$,
  and $\nu^\eps(\mathbf{b})$ is the product of Wilson actions $S^\eps$ 
  arising from the faces other than $F_1^\eps,\cdots, F_4^\eps$, which depends only on the 
  $\mathbf{b}$ variables.
Here we recall Remark~\ref{rem:eps} that 
we hide the dependence on $\eps$ when writing the edge variables $Q_e, a_i, \alpha, \beta$ and $\alpha_i$.

\medskip

With the above lattice loops $\{l^\eps\}_{\eps>0}$, we now describe a general rule to choose the bonds in the discrete  loop equation \eqref{eq:wl}. Recall that a bond always has length $\eps$. To distinguish the two occurrences of $e$ in $l^{\eps}$, 
we introduce a placeholder $\underline{e}^\eps$ which just means  $\underline{e}^\eps = e^\eps$,
and rewrite  \eqref{e:genloop} as 
\begin{equ}[e:genloop-bar]
 l^\eps=e^\eps e_1^\eps A^\eps (e_4^{\eps})^{-1}\underline{e}^\eps e_2^\eps B^\eps (e_3^{\eps})^{-1}.
 \end{equ}

\begin{definition}\label{def:rule}
Let  $\{l^\eps\}_{\eps>0}$ be a lattice approximation of $l$
 with respect to $v$ as above.
We say that $(\ee,\ee',\ee'')$ is a {\it compatible triple of bonds} if either 
\begin{equ}[e:triple]
(\ee,\ee',\ee'') = (\ee,\ee_1,\ee_3)\qquad \mbox{where }\quad
 \ee\in e^\eps, \;\; \ee_1 \in e^\eps_1, \;\; \ee_3\in (e^\eps_3)^{-1}\;,
\end{equ}
or
\begin{equ}[e:triple1]
(\ee,\ee',\ee'') = (\ee,\ee_2,\ee_4)\qquad \mbox{where }\quad
 \ee\in \underline{e}^\eps, \;\; \ee_2 \in e^\eps_2, \;\; \ee_4\in (e^\eps_4)^{-1}\;.
\end{equ}
\end{definition}

Note that in the above definition, $\ee,\ee',\ee''$ always belong to three consecutive edges in $l^\eps$.

\begin{remark}\label{rem:simplest}
The simplest situation is that $e^\eps$ in $l^\eps$ is a bond, so $|e^\eps|=|\ee|=\eps$,
and $\ee_i$ ($i=1,3$) are the bonds adjacent to $\ee$.  This choice is sufficient to prove the convergence to the continuum loop equation. However we formulate our result in a  more general way, allowing a larger class of lattice approximations and more general rule of selecting the triple of bonds.
\end{remark}

We now turn to the proof of our main result. 
We will only consider compatible triples of type \eqref{e:triple}, since the proof for \eqref{e:triple1} is the same.
\footnote{In fact the Wilson loop expectation  does not change if we start $l^\eps$ from $\underline{e}^\eps$ and end it at $(e_4^\eps)^{-1}$. In this case $e_2^\eps$ plays the same role as $e_1^\eps$ and $e_4^\eps$ plays the same role as $e_3^\eps$ in the proof below.}
Below let $(\ee,\ee_1,\ee_3)$ be a compatible triple of bonds as in \eqref{e:triple}.
 In a first step (Proposition~\ref{prop:g}), we consider the impact of deformations along the bond $\ee$ in the edge $e^\eps$.  Due to its central location within the loop, area derivatives of all four faces arise in the continuum limit.  However, there are also correction terms arising from integration by parts.
To cancel these, we consider in a second step (Lemmas~\ref{lem:ms1} + \ref{lem:ms2}) the deformations along the bonds $\ee_i$ in the edges $e_{i}^\eps$, $i=1,3$, respectively.  In this case, only the area derivatives from the two faces sharing the edge $e_{i}$ appear.  In particular, deformations along $\ee_{1}$ produce $\partial_{t_{1}}, \partial_{t_{4}}$, while deformations along $\ee_{3}$ produce $\partial_{t_{2}},\partial_{t_{3}}$.  These also come with corrections in the continuum limit, which then turn out to cancel the correction terms from the loop equation at $\ee$ in $e^\eps$, yielding the continuum loop equation.
 
\medskip
 
Consider the bond  $\ee$   in the edge $e^\eps$ in the center of Fig.~\ref{fig1eps}.
We take this bond $\ee$ as the fixed bond in the discrete master loop equation \eqref{eq:wl}.
 By the definition of splitting, we have 
 \begin{equs}[eq:loop:g]
 	\E	W_l^\eps & =\frac{1}{2\eps^2} \!\!\! \sum_{l'\in \mathbb{D}^-_{\ee}(l)}\!\!\E W_{l'}^\eps
 	-\frac{1}{2\eps^2} \!\!\! \sum_{l'\in \mathbb{D}^+_{\ee}(l)} \!\!\E W_{l'}^\eps
 	-\E W_{l_1}^\eps W_{l_2}^\eps\;.
 \end{equs}

 We define the following quantity which will play an interesting role:
  \begin{equs}\label{def:Im}
 	I_m
 	\eqdef \int &
 	\<\nabla_{b_m} \tr(a_1\alpha a_4^{-1}  a_2\beta a_3^{-1}),
 	\nabla_{b_m}p_{t_m}(a_m\alpha_m a_{m+1}^{-1})\>
 	\\
 	&\prod_{k\in \{1,2,3,4\}\backslash\{m\}} \!\!\!\!
 	p_{t_k}(a_k\alpha_k a_{k+1}^{-1})
 	\,\nu(\mathbf{b})\,\dif \mathbf{a} \,\dif \mathbf{b}, 
 \end{equs}  
for $m=1,\dots,4$,  with $(b_1,b_2,b_3,b_4)\eqdef (a_1,a_3,a_3,a_1)$,
and  $k+1$ is understood as $1$ when $k=4$ as the convention set above.

 \bp\label{prop:g} 
 The discrete master loop equation \eqref{eq:loop:g} converges to 
 \begin{align}\label{e:loop:1}
 	&2(\p_{t_1}-\p_{t_2}+\p_{t_3}-\p_{t_4}) \E W_l+I_1+I_3
 	=\E W_{l_1}W_{l_2}+\E W_l,
 \end{align}
 where $I_1$, $I_3$ are as in \eqref{def:Im}. 
 \ep 
 
 \begin{proof} Using \eqref{e:conv-loop} and \eqref{e:con:loops} and Remark \ref{re:ax}  we know  that $\E	W_l^\eps$ on the LHS of  \eqref{eq:loop:g} and $\E W_{l_1}^\eps W_{l_2}^\eps$ on the RHS of  \eqref{eq:loop:g} converge to $\E W_l$ and $\E W_{l_1}W_{l_2}$, respectively. 
 It remains to prove that the  deformation terms   in \eqref{eq:loop:g} converge to the LHS of \eqref{e:loop:1}. There are $4$ deformed loops which we analyze below.
 	
Consider the loop 
\[
l_{F_4,-}^{\ee} =e'_{\neg} e_1^\eps A^\eps (e_4^\eps)^{-1}e^\eps e_2^\eps B^\eps (e_3^\eps)^{-1}
\]
obtained from the negative deformation of $l$ in \eqref{e:genloop} in the face $F_4$,
 where   $e'_{\neg}$ is an edge obtained by replacing 
the bond $\ee$ in $e^\eps$ by $\ee'$ as shown in Fig.~\ref{fig:deform-e}, where $|\ee'|=3\eps$.

 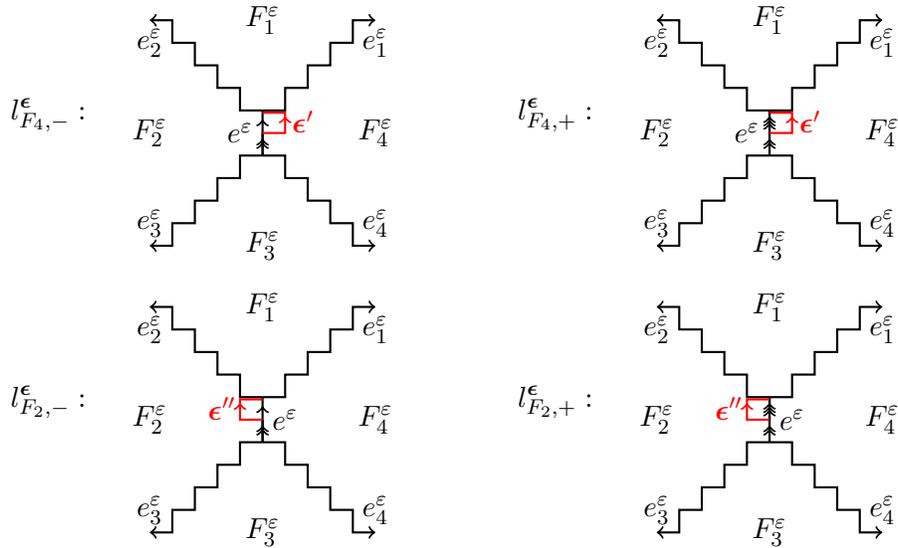
\begin{figure}[h] 
   \centering
\[
l_{F_4,-}^{\ee} :\quad
\begin{tikzpicture}[scale=3,baseline=5]
\draw[thick,midarrow] (0,-0.1) -- (0,0); \draw[thick,midarrow2] (0,-0.1) -- (0,0); 
\draw[thick,midarrow] (0,0) -- (0,0.1) ; \node at (-0.1,0) {$e^\eps$};
\draw[thick,->] (0,0.1)
\foreach \i in {1, 2, ..., 9} {-- ++({(0.5-0.5*(-1)^\i)*0.1}, {(0.5+0.5*(-1)^\i)*0.1})};
\node at (0.5,0.4) {$e_1^\eps$};
\draw[thick,->] (0,-0.1)
\foreach \i in {1, 2, ..., 9} {-- ++({(0.5-0.5*(-1)^\i)*0.1}, {-(0.5+0.5*(-1)^\i)*0.1})};
\node at (0.5,-0.4) {$e_4^\eps$};
\draw[thick,->] (0,0.1)
\foreach \i in {1, 2, ..., 9} {-- ++({-(0.5-0.5*(-1)^\i)*0.1}, {(0.5+0.5*(-1)^\i)*0.1})};
\node at (-0.5,0.4) {$e_2^\eps$};
\draw[thick,->] (0,-0.1)
\foreach \i in {1, 2, ..., 9} {-- ++({-(0.5-0.5*(-1)^\i)*0.1}, {-(0.5+0.5*(-1)^\i)*0.1})};
\node at (-0.5,-0.4) {$e_3^\eps$};
\draw[thick,red,midarrow] (0,0) -- (0.1,0) -- (0.1,0.09) -- (0,0.09);
\node[red] at (0.18,0.05) {$\ee'$};
\node at (0, 0.5) {$F_1^\eps$};\node at (0, -0.5) {$F_3^\eps$};
\node at (0.5,0) {$F_4^\eps$};\node at (-0.5,0) {$F_2^\eps$};
\end{tikzpicture}
\qquad\qquad
l_{F_4,+}^{\ee} :\quad
\begin{tikzpicture}[scale=3,baseline=5]
\draw[thick,midarrow] (0,-0.1) -- (0,0); \draw[thick,midarrow2] (0,-0.1) -- (0,0); 
\draw[thick,midarrow] (0,0) -- (0,0.1); \draw[thick,midarrow1] (0,0) -- (0,0.1); \draw[thick,midarrow2] (0,0) -- (0,0.1); 
\node at (-0.1,0) {$e^\eps$};
\draw[thick,->] (0,0.1)
\foreach \i in {1, 2, ..., 9} {-- ++({(0.5-0.5*(-1)^\i)*0.1}, {(0.5+0.5*(-1)^\i)*0.1})};
\node at (0.5,0.4) {$e_1^\eps$};
\draw[thick,->] (0,-0.1)
\foreach \i in {1, 2, ..., 9} {-- ++({(0.5-0.5*(-1)^\i)*0.1}, {-(0.5+0.5*(-1)^\i)*0.1})};
\node at (0.5,-0.4) {$e_4^\eps$};
\draw[thick,->] (0,0.1)
\foreach \i in {1, 2, ..., 9} {-- ++({-(0.5-0.5*(-1)^\i)*0.1}, {(0.5+0.5*(-1)^\i)*0.1})};
\node at (-0.5,0.4) {$e_2^\eps$};
\draw[thick,->] (0,-0.1)
\foreach \i in {1, 2, ..., 9} {-- ++({-(0.5-0.5*(-1)^\i)*0.1}, {-(0.5+0.5*(-1)^\i)*0.1})};
\node at (-0.5,-0.4) {$e_3^\eps$};
\draw[thick,red,midarrow] (0,0) -- (0.1,0) -- (0.1,0.09) -- (0,0.09);
\node[red] at (0.18,0.05) {$\ee'$};
\node at (0, 0.5) {$F_1^\eps$};\node at (0, -0.5) {$F_3^\eps$};
\node at (0.5,0) {$F_4^\eps$};\node at (-0.5,0) {$F_2^\eps$};
\end{tikzpicture}
\]
\[
l_{F_2,-}^{\ee} :\quad
\begin{tikzpicture}[scale=3,baseline=5]
\draw[thick,midarrow] (0,-0.1) -- (0,0); \draw[thick,midarrow2] (0,-0.1) -- (0,0); 
\draw[thick,midarrow] (0,0) -- (0,0.1) ; \node at (0.1,0) {$e^\eps$};
\draw[thick,->] (0,0.1)
\foreach \i in {1, 2, ..., 9} {-- ++({(0.5-0.5*(-1)^\i)*0.1}, {(0.5+0.5*(-1)^\i)*0.1})};
\node at (0.5,0.4) {$e_1^\eps$};
\draw[thick,->] (0,-0.1)
\foreach \i in {1, 2, ..., 9} {-- ++({(0.5-0.5*(-1)^\i)*0.1}, {-(0.5+0.5*(-1)^\i)*0.1})};
\node at (0.5,-0.4) {$e_4^\eps$};
\draw[thick,->] (0,0.1)
\foreach \i in {1, 2, ..., 9} {-- ++({-(0.5-0.5*(-1)^\i)*0.1}, {(0.5+0.5*(-1)^\i)*0.1})};
\node at (-0.5,0.4) {$e_2^\eps$};
\draw[thick,->] (0,-0.1)
\foreach \i in {1, 2, ..., 9} {-- ++({-(0.5-0.5*(-1)^\i)*0.1}, {-(0.5+0.5*(-1)^\i)*0.1})};
\node at (-0.5,-0.4) {$e_3^\eps$};
\draw[thick,red,midarrow] (0,0) -- (-0.1,0) -- (-0.1,0.09) -- (0,0.09);
\node[red] at (-0.18,0.05) {$\ee''$};
\node at (0, 0.5) {$F_1^\eps$};\node at (0, -0.5) {$F_3^\eps$};
\node at (0.5,0) {$F_4^\eps$};\node at (-0.5,0) {$F_2^\eps$};
\end{tikzpicture}
\qquad\qquad
l_{F_2,+}^{\ee} :\quad
\begin{tikzpicture}[scale=3,baseline=5]
\draw[thick,midarrow] (0,-0.1) -- (0,0); \draw[thick,midarrow2] (0,-0.1) -- (0,0); 
\draw[thick,midarrow] (0,0) -- (0,0.1); \draw[thick,midarrow1] (0,0) -- (0,0.1); \draw[thick,midarrow2] (0,0) -- (0,0.1); 
\node at (0.1,0) {$e^\eps$};
\draw[thick,->] (0,0.1)
\foreach \i in {1, 2, ..., 9} {-- ++({(0.5-0.5*(-1)^\i)*0.1}, {(0.5+0.5*(-1)^\i)*0.1})};
\node at (0.5,0.4) {$e_1^\eps$};
\draw[thick,->] (0,-0.1)
\foreach \i in {1, 2, ..., 9} {-- ++({(0.5-0.5*(-1)^\i)*0.1}, {-(0.5+0.5*(-1)^\i)*0.1})};
\node at (0.5,-0.4) {$e_4^\eps$};
\draw[thick,->] (0,0.1)
\foreach \i in {1, 2, ..., 9} {-- ++({-(0.5-0.5*(-1)^\i)*0.1}, {(0.5+0.5*(-1)^\i)*0.1})};
\node at (-0.5,0.4) {$e_2^\eps$};
\draw[thick,->] (0,-0.1)
\foreach \i in {1, 2, ..., 9} {-- ++({-(0.5-0.5*(-1)^\i)*0.1}, {-(0.5+0.5*(-1)^\i)*0.1})};
\node at (-0.5,-0.4) {$e_3^\eps$};
\draw[thick,red,midarrow] (0,0) -- (-0.1,0) -- (-0.1,0.09) -- (0,0.09);
\node[red] at (-0.18,0.05) {$\ee''$};
\node at (0, 0.5) {$F_1^\eps$};\node at (0, -0.5) {$F_3^\eps$};
\node at (0.5,0) {$F_4^\eps$};\node at (-0.5,0) {$F_2^\eps$};
\end{tikzpicture}
\]

\caption{Loops obtained by deforming a bond $\ee$ of $e^\eps$.
Here $|e^\eps|=2\eps$  and $|\ee|=\eps$ ($e^\eps$ has two bonds where $\ee$ is the upper one). The number of arrows on an edge or bond indicates how many times it shows up in the loop.}
\label{fig:deform-e}
\end{figure}
 	
By Driver's formula \eqref{dri:dis1}  
(recall that every bond variable on $e^\eps$ is identity)
 \begin{align*}
 \E W_{l_{F_4,-}^{\ee}}^\eps
 =\int &\tr(Q_{\ee'}a_1\alpha a_4^{-1}  a_2\beta a_3^{-1})
\prod_{i=1,2,3} S^\eps_{\frac{t_i(\eps)}{\eps^2}}(a_i\alpha_i a_{i+1}^{-1})
 \\
 &\times S^\eps_{\frac{t_4(\eps)-\eps^2}{\eps^2}}(Q_{\ee'}a_1\alpha_4^{-1} a_4^{-1})
 S^\eps(Q_{\ee'}^{-1})
\,\nu^\eps(\mathbf{b}) \,\dif \mathbf{a}\,\dif \mathbf{b}\,\dif Q_{\ee'}\;.
 \end{align*}
%
 Similarly for the loop 
 \[
 l_{F_4,+}^{\ee}=e'_{\pos}e_1^\eps A^\eps (e_4^\eps)^{-1}e^\eps e_2^\eps B^\eps (e_3^\eps)^{-1}
 \]
  from the positive deformation 
 of $l$ in \eqref{e:genloop} in the face $F_4$, where $e'_{\pos}$ is an edge obtained by replacing $\ee$ in $e^\eps $ by $\ee(\ee')^{-1}\ee$ with $\ee'$  the same as above,  we have 
 \begin{align*}
 \E W_{l_{F_4,+}^{\ee}}^\eps
 =\int &\tr(Q_{\ee'}^{-1}a_1\alpha a_4^{-1}  a_2\beta a_3^{-1})
\prod_{i=1,2,3} S^\eps_{\frac{t_i(\eps)}{\eps^2}}(a_i\alpha_i a_{i+1}^{-1})
 \\
 &\times S^\eps_{\frac{t_4(\eps)-\eps^2}{\eps^2}}(Q_{\ee'}a_1\alpha_4^{-1} a_4^{-1})S^\eps(Q_{\ee'}^{-1})
\,\nu^\eps(\mathbf{b}) \,\dif \mathbf{a}\,\dif \mathbf{b}\,\dif Q_{\ee'} \;.
 \end{align*}
We then have
 \begin{equs}\label{EE2}
 {}&\E W_{l_{F_4,-}^{\ee}}^\eps-\E W_{l_{F_4,+}^{\ee}}^\eps
 \\
 =\int &\tr\Big((Q_{\ee'}-Q_{\ee'}^{-1})a_1\alpha a_4^{-1}  a_2\beta a_3^{-1}\Big)
\prod_{i=1,2,3} S^\eps_{\frac{t_i(\eps)}{\eps^2}}(a_i\alpha_i a_{i+1}^{-1})
\\
&\times S^\eps_{\frac{t_4(\eps)-\eps^2}{\eps^2}}(Q_{\ee'}a_1\alpha_4^{-1} a_4^{-1})
S^\eps(Q_{\ee'}^{-1})
\,\nu^\eps(\mathbf{b})\,\dif \mathbf{a}\,\dif \mathbf{b}\,\dif Q_{\ee'} \;.
\end{equs}
 	
Using Lemma~\ref{lem:J1} for the integration in $Q_{\ee'}$, then applying \eqref{eq:a-cons} to replace $S^\eps_{\frac{t_i(\eps)}{\eps^2}}(\cdot)$ with $p_{\frac{t_{i}}N}(\cdot)$ for $i=1,2,3$, we find 
 \begin{equs}
 		{}&\lim_{\eps\to 0}\frac1{2\eps^2}\Big(\E W_{l_{F_4,-}^{\ee}}^\eps-\E W_{l_{F_4,+}^{\ee}}^\eps\Big)
 		\\
		&= \sum_{j=1}^{d(\mfg)}
		\int \cL_j\tr(a_1\alpha a_4^{-1}  a_2\beta a_3^{-1})
		\cL_j p_{t_4}( a_1\alpha_4^{-1} a_4^{-1})
\!\!\prod_{i=1,2,3}\!\! p_{t_i}(a_i\alpha_i a_{i+1}^{-1})
		\, \nu(\mathbf{b})\,\dif \mathbf{a}\,\dif \mathbf{b}
\\
&= I_4\;, \label{e:F4diff}
 	\end{equs}
where $I_4$ is as in \eqref{def:Im} and we use $p_t$ is a class function.

 We then use integration by parts in $a_{1}$ in the following form: for smooth functions $f, g$ and $h$ on $G$ 
 \begin{align}
 	&\int \<\nabla_{a_1} f,\nabla_{a_1} g\> h \, \dif a_1\no
 	\\
	=&-	\int  f \Delta_{a_1}g h \,\dif a_1
	-\int \<\nabla_{a_1} h,\nabla_{a_1} g\>f \dif a_1\label{eq:IBPa}
 	\\
	=& -\int  f \Delta_{a_1}g h \,\dif a_1+\int \Delta_{a_1} h g f \,\dif a_1
	+\int \<\nabla_{a_1} f,\nabla_{a_1} h\> g\,\dif a_1.\no
 \end{align}
 Substituting this formula \eqref{eq:IBPa}  
 into $I_4$ obtained in \eqref{e:F4diff} with 
 $$
 f=\tr(a_1\alpha a_4^{-1}  a_2\beta a_3^{-1}),\quad 
 g=p_{t_4}(  a_4\alpha_4 a_1^{-1}),\quad
 h=p_{t_1}(a_1\alpha_1 a_2^{-1}),
 $$
 we obtain
 \begin{equs}[e:F4d]
\eqref{e:F4diff}
 &=\int \tr(a_1\alpha a_4^{-1}  a_2\beta a_3^{-1})\prod_{i=2,3}\!\! p_{t_i}(a_i\alpha_i a_{i+1}^{-1})
 	\\&\qquad\times	\Big(\Delta_{a_1}p_{t_1}(a_1\alpha_1 a_2^{-1})p_{t_4}(  a_4\alpha_4 a_1^{-1})
 	\\&\qquad-p_{t_1}(a_1\alpha_1 a_2^{-1})\Delta_{a_1}p_{t_4}(  a_4\alpha_4 a_1^{-1})\Big)
 	\,\nu(\mathbf{b})\,\dif \mathbf{a}\,\dif \mathbf{b}
 	\; +\; I_1
 	\\&= 2(\p_{t_1}-\p_{t_4})\E W_l+I_1\,,
 \end{equs}
 where we used that $p_t$ is the heat kernel of $\frac12\Delta$ in the last step.

Next, consider 
 the loop   obtained from the negative deformation in the face $F_2$:
 \[
 l_{F_2,-}^{\ee}= e''_{\neg} e_1^\eps A^\eps (e_4^\eps)^{-1}e^\eps e_2^\eps B^\eps (e_3^\eps)^{-1}
 \]
where the edge  $e''_{\neg}$ is obtained by replacing the bond $\ee$ in $e^\eps $ by $\ee''$ as shown in Fig.~\ref{fig:deform-e}, with $|\ee''|=3\eps$. 
Also, consider the loop   obtained from the positive deformation in the face $F_2$:
 \[
 l_{F_2,+}^{\ee}= e''_{\pos} e_1^\eps A^\eps (e_4^\eps)^{-1}e^\eps e_2^\eps B^\eps(e_3^\eps)^{-1}
 \]
where the edge $e''_{\pos} $ is obtained by replacing the bond  $\ee$ in $e^\eps$ by $\ee(\ee'')^{-1}\ee$.

 We then have, similarly as above,
 \begin{equs}\label{EE3}
 	&\frac1{2\eps^2}\Big(\E W_{l_{F_2,-}^{\ee}}^\eps-\E W_{l_{F_2,+}^{\ee}}^\eps\Big)
 	\\&=\frac1{2\eps^2} \!\!\int \tr\Big((Q_{\ee''}-Q_{\ee''}^{-1})a_1\alpha a_4^{-1}  a_2\beta a_3^{-1}\Big)S^\eps_{\frac{t_2(\eps)-\eps^2}{\eps^2}}(Q_{\ee''}a_2\alpha_2 a_3^{-1})
 	\\&\qquad\times  \prod_{i=1,3,4}S^\eps_{\frac{t_i(\eps)}{\eps^2}}(a_i\alpha_i a_{i+1}^{-1})S^\eps(Q_{\ee''}^{-1})
	\,\nu^\eps(\mathbf{b})\,\dif \mathbf{a}\,\dif \mathbf{b}\,\dif Q_{\ee''}\;.
 \end{equs}
 We apply Lemma \ref{lem:J1} for the integral in $Q_{\ee''}$, then \eqref{eq:a-cons}, to obtain
 \begin{equs}
 	{}&\lim_{\eps\to 0}\frac1{2\eps^2}\Big(\E W_{l_{F_2,-}^{\ee}}^\eps-\E W_{l_{F_2,+}^{\ee}}^\eps\Big)
 	\\&= \sum_j\int \cL_j\tr(a_1\alpha a_4^{-1}  a_2\beta a_3^{-1})
	\cL_jp_{t_2}(a_2\alpha_2 a_3^{-1})
\!\!\prod_{i=1,3,4}\!\! p_{t_i}(a_i\alpha_i a_{i+1}^{-1})
	 \,\nu(\mathbf{b})\,\dif \mathbf{a}\,\dif \mathbf{b}
	 \\
	 &=I_2\,, \label{e:F2diff}
 \end{equs}
 where $I_2$ is as in \eqref{def:Im}. 
In the last step we  use the fact that $\mathcal L_{X,b} f(ba^{-1}) = -\mathcal L_{X,a} f(ba^{-1})$ for class function $f$ and $X\in\mfg, a,b\in G$, which is why we obtain two gradients in $a_3$ 
as required in the definition \eqref{def:Im} of $I_2$.

 Hence, we use integration by parts \eqref{eq:IBPa} 
 with 
 $$f=\tr(a_1\alpha a_4^{-1}  a_2\beta a_3^{-1}),\quad 
 g=p_{t_2}(a_2\alpha_2 a_3^{-1}),\quad
 h=p_{t_3}(a_3\alpha_3 a_4^{-1}),$$
 to obtain 
 \begin{equs}[e:F2d]
	\eqref{e:F2diff} 
	&=\int 
	\tr(a_1\alpha a_4^{-1}  a_2\beta a_3^{-1})
	\prod_{i=1,4}p_{t_i}(a_i\alpha_i a_{i+1}^{-1})
	\\&\qquad\times	\Big(p_{t_2}(a_2\alpha_2 a_3^{-1})
\Delta_{a_3}	p_{t_3}(a_3\alpha_3 a_4^{-1})
	\\&\qquad-\Delta_{a_3}p_{t_2}(a_2\alpha_2 a_3^{-1})
		p_{t_3}(a_3\alpha_3 a_4^{-1})\Big)
	\,\nu(\mathbf{b})\,\dif \mathbf{a}\,\dif \mathbf{b}
	\; + \;I_3
	\\&= 2(\p_{t_3}-\p_{t_2})\E W_l+I_3\;.
\end{equs}
 Combining  \eqref{eq:loop:g}, \eqref{e:F4d}, \eqref{e:F2d} the result \eqref{e:loop:1} follows. 
 \end{proof}

Now consider the bonds $\ee_i$  in the edge $e_i^\eps$ 
where $i=1,3$. See Fig.~\ref{fig1eps}. 
 We then choose the fixed bond $\ee_1$ in the discrete master loop equation \eqref{eq:wl} 
 so that we  have 
 \begin{equs}[eq:loop:g1]
 \E	W_l^\eps & =\frac{1}{2\eps^2} \!\!\! \sum_{l'\in \mathbb{D}^-_{\ee_1}(l)}\!\!\E W_{l'}^\eps
 -\frac{1}{2\eps^2} \!\!\! \sum_{l'\in \mathbb{D}^+_{\ee_1}(l)} \!\!\E W_{l'}^\eps\;.
 \end{equs}
 
 \bl\label{lem:ms1} 
 The discrete master loop equation \eqref{eq:loop:g1} converges to 
 \begin{equ}\label{loop:g:2}
 2(\p_{t_1}-\p_{t_4})\E W_l+2I_1=\E W_l. 
 \end{equ}
 \el 
 \begin{proof}
 We  consider the loop 
 \[
 l_{F_4,-}^{\ee_1}=e^\eps e_{1,\neg}'A^\eps (e_4^\eps)^{-1}e^\eps e_2^\eps B^\eps (e_3^\eps)^{-1}
 \]
  obtained from the negative deformation of \eqref{e:genloop} at $\ee_1$ in the face $F_4$, where $e_{1,\neg}'$ is given by replacing $\ee_1^\eps$ in $e_1^\eps$ by $\ee_1'$ as shown in Fig.~\ref{fig:deform-ee1}, which consists $3$ bonds. As stated in  Remark \ref{re:ax}, we select the edge variables to be $I$ on the tree, determined by the path extending from $e^\eps$ to $\ee_1$, exclusive of the bond $\ee_1$. Additionally, for the sake of brevity, we overload the notation $a_1$ and write $a_1=Q_{e_1\backslash \ee_1}$. We have 
 \begin{align*}
 	\E W_{l_{F_4,-}^{\ee_1}}^\eps=\int &\tr(Q_{\ee_1'}a_1\alpha a_4^{-1}  a_2\beta a_3^{-1})S^\eps_{\frac{t_1(\eps)+\eps^2}{\eps^2}}(Q_{\ee_1'}a_1\alpha_1 a_2^{-1})\prod_{i=2}^3S^\eps_{\frac{t_i(\eps)}{\eps^2}}(a_i\alpha_i a_{i+1}^{-1})
 	\\&\times S^\eps_{\frac{t_4(\eps)-\eps^2}{\eps^2}}(Q_{\ee_1'}a_1\alpha_4^{-1} a_4^{-1})
	\,\nu^\eps(\mathbf{b})\,\dif \mathbf{a}\,\dif \mathbf{b}\,\dif Q_{\ee_1'}.
 \end{align*}
 
 \begin{figure}[h] 
\[
l_{F_4,-}^{\ee_1} :
\qquad
\begin{tikzpicture}[scale=3.5,baseline=5]
\draw[thick,midarrow,darkgreen] (0,-0.1) -- (0,0.1);
\draw[thick,midarrow1,darkgreen] (0,-0.1) -- (0,0.1);
\node at (-0.1,0) {$e^\eps$};
\draw[thick,darkgreen] (0,0.1)
\foreach \i in {1, 2, ..., 4} {-- ++({(0.5-0.5*(-1)^\i)*0.1}, {(0.5+0.5*(-1)^\i)*0.1})};
\draw[dotted,thick] (0.2,0.3) -- (0.3,0.3);
\draw[thick,->] (0.3,0.3)
\foreach \i in {1, 2, ..., 4} {-- ++({(0.5+0.5*(-1)^\i)*0.1}, {(0.5-0.5*(-1)^\i)*0.1})};
\node at (0.5,0.4) {$e_1^\eps$};
\draw[thick,->] (0,-0.1)
\foreach \i in {1, 2, ..., 9} {-- ++({(0.5-0.5*(-1)^\i)*0.1}, {-(0.5+0.5*(-1)^\i)*0.1})};
\node at (0.5,-0.4) {$e_4^\eps$};
\draw[thick,->] (0,0.1)
\foreach \i in {1, 2, ..., 9} {-- ++({-(0.5-0.5*(-1)^\i)*0.1}, {(0.5+0.5*(-1)^\i)*0.1})};
\node at (-0.5,0.4) {$e_2^\eps$};
\draw[thick,->] (0,-0.1)
\foreach \i in {1, 2, ..., 9} {-- ++({-(0.5-0.5*(-1)^\i)*0.1}, {-(0.5+0.5*(-1)^\i)*0.1})};
\node at (-0.5,-0.4) {$e_3^\eps$};
\draw[thick,red,midarrow] (0.21,0.3) -- (0.21,0.2) -- (0.3,0.2) -- (0.3,0.3);
\node[red] at (0.28,0.12) {$\ee_1'$};
\node at (0, 0.5) {$F_1^\eps$};\node at (0, -0.5) {$F_3^\eps$};
\node at (0.5,0) {$F_4^\eps$};\node at (-0.5,0) {$F_2^\eps$};
\end{tikzpicture}
\]
\caption{Picture for $l_{F_4,-}^{\ee_1}$. The green part is the tree used in the axial gauge fixing. The dashed line is where $\ee_1$ was and it is replaced by $\ee_1'$ (obviously a backtracking is formed but we just erase it as usual). 
The variable $a_1$ is the holonomy along the black part of $e_1^\eps$.
The pictures for $l_{F_4,+}^{\ee_1}$, $l_{F_1,-}^{\ee_1}$ and $l_{F_1,+}^{\ee_1}$ are similar and we do not draw all of them here.}
\label{fig:deform-ee1}
\end{figure}
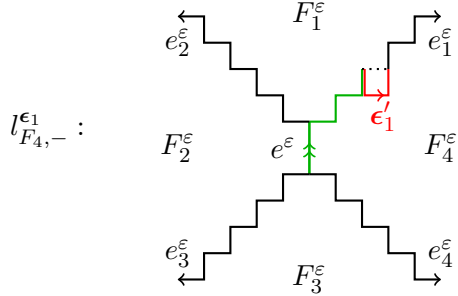

 For the loop 
 \[
 l_{F_4,+}^{\ee_1}=e^\eps e_{1,\pos}' A^\eps (e_4^\eps)^{-1}e^\eps e_2^\eps B^\eps (e_3^\eps)^{-1}
 \]
  obtained from the positive deformation  of \eqref{e:genloop} at $\ee_1$ in the face $F_4$, where $e_{1,\pos}'$ is given by replacing $\ee_1$ in $e_1^\eps$ by $\ee_1(\ee_1')^{-1}\ee_1$ with $\ee_1'$ the same as above, we write
 \begin{align*}
 	\E W_{l_{F_4,+}^{\ee_1}}^\eps=\int &\tr(Q_{\ee_1}Q_{\ee_1'}^{-1}Q_{\ee_1}a_1\alpha a_4^{-1} a_2\beta a_3^{-1})S^\eps_{\frac{t_1(\eps)}{\eps^2}}(Q_{\ee_1}a_1\alpha_1 a_2^{-1})
 	\\&\times \prod_{i=2}^3S^\eps_{\frac{t_i(\eps)}{\eps^2}}(a_i\alpha_i a_{i+1}^{-1})S^\eps_{\frac{t_4(\eps)-\eps^2}{\eps^2}}(Q_{\ee_1'}a_1\alpha_4^{-1} a_4^{-1})
 	\\&\times S^\eps(Q_{\ee_1}^{-1}Q_{\ee_1'})
	\,\nu^\eps(\mathbf{b})\,\dif \mathbf{a}\,\dif \mathbf{b}\,\dif Q_{\ee_1'}\dif Q_{\ee_1}.
 \end{align*}
Since 
 $$S^\eps_{\frac{t_1(\eps)+\eps^2}{\eps^2}}(Q_{\ee_1'}a_1\alpha_1 a_2^{-1})=\int S^\eps_{\frac{t_1(\eps)}{\eps^2}}(Q_{\ee_1} a_1\alpha_1 a_2^{-1})S^\eps(Q_{\ee_1}^{-1}Q_{\ee_1'})\dif Q_{\ee_1},$$
we have
 \begin{equs}[EE1]
 &\frac1{2\eps^2}(\E W_{l_{F_4,-}^{\ee_1}}^\eps	-\E W_{l_{F_4,+}^{\ee_1}}^\eps)
 \\=\frac1{2\eps^2}\int &\tr\Big((Q_{\ee_1'}Q_{\ee_1}^{-1}-Q_{\ee_1}Q_{\ee_1'}^{-1})Q_{\ee_1}a_1\alpha a_4^{-1} a_2\beta a_3^{-1}\Big)
 \\&\times S^\eps_{\frac{t_1(\eps)}{\eps^2}}(Q_{\ee_1}a_1\alpha_1 a_2^{-1})
  \prod_{i=2}^3S^\eps_{\frac{t_i(\eps)}{\eps^2}}(a_i\alpha_i a_{i+1}^{-1})
\\&\times  S^\eps_{\frac{t_4(\eps)-\eps^2}{\eps^2}}(Q_{\ee_1'}a_1\alpha_4^{-1} a_4^{-1})
 S^\eps(Q_{\ee_1}^{-1}Q_{\ee_1'})
 \,\nu^\eps(\mathbf{b})\,\dif \mathbf{a}\,\dif \mathbf{b}\,\dif Q_{\ee_1'}\dif Q_{\ee_1}.
 \end{equs}
 Changing variables in $Q_{\ee_1'}$ so that $Q=Q_{\ee_1'}Q_{\ee_1}^{-1}$ and changing $Q_{\ee_1}a_1$ to $a_1$, we find that \eqref{EE1} equals \eqref{EE2} 
 from the proof of Proposition \ref{prop:g}. 
 Hence by \eqref{e:F4d} we have
 \begin{equ}[e:F41d]
 \lim_{\eps\to 0}
 	\frac1{2\eps^2}\Big(\E W_{l_{F_4,-}^{\ee_1}}^\eps-\E W_{l_{F_4,+}^{\ee_1}}^\eps\Big)
	=
	 2(\p_{t_1}-\p_{t_4})\E W_l+I_1\;.
 \end{equ}

We also consider
 \[
 l_{F_1,-}^{\ee_1}=e^\eps e_{1,\neg}'' A^\eps (e_4^\eps)^{-1}e^\eps e_2^\eps B^\eps (e_3^\eps)^{-1}
 \]
  from the negative deformation  of \eqref{e:genloop} at $\ee_1$ in the face $F_1$,  
  where $e_{1,\neg}''$ is given by replacing $\ee_1$ in $e_1^\eps$ by  $\ee_1''$, which consists $3$ bonds. 
We also have the loop 
 \[
 l_{F_1,+}^{\ee_1}=e^\eps e_{1,\pos}''A^\eps (e_4^\eps)^{-1}e^\eps e_2^\eps B^\eps (e_3^\eps)^{-1}
 \]
  from the positive deformation  of \eqref{e:genloop} at $\ee_1$ in the face $F_1$, where $e_{1,\pos}''$  is given by replacing $\ee_1$ in $e_1^\eps$ by  $\ee_1(\ee_1'')^{-1}\ee_1$, which consists $3$ bonds. 
So we have
 \begin{align*}
 	&\E W_{l_{F_1,-}^{\ee_1}}^\eps-\E W_{l_{F_1,+}^{\ee_1}}^\eps
 	\\=&\int\tr\Big((Q_{\ee_1''}Q_{\ee_1}^{-1}-Q_{\ee_1}Q_{\ee_1''}^{-1})Q_{\ee_1}a_1\alpha a_4^{-1}  a_2\beta a_3^{-1}\Big)S^\eps_{\frac{t_1(\eps)-\eps^2}{\eps^2}}(Q_{\ee_1''}a_1\alpha_1 a_2^{-1})
 	\\&\times \prod_{i=2}^3S_{\frac{t_i(\eps)}{\eps^2}}(a_i\alpha_i a_{i+1}^{-1})S_{\frac{t_4(\eps)}{\eps^2}}(a_4\alpha_4 (Q_{\ee_1}a_{1})^{-1})
 	\\&\times S^\eps(Q_{\ee_1}^{-1}Q_{\ee_1''})
 	 \,\nu^\eps(\mathbf{b})\,\dif \mathbf{a}\,\dif \mathbf{b}\,\dif Q_{\ee_1''}\dif Q_{\ee_1}\;.
 \end{align*}
 We then use a change of variable $Q_{\ee_1''}Q_{\ee_1}^{-1}\to Q$ and $Q_{\ee_1}a_1\to a_1$ and $S^\eps(Q)=S^\eps(Q^{-1})$ to have
 \begin{align*}
&\E W_{l_{F_1,-}^{\ee_1}}^\eps-\E W_{l_{F_1,+}^{\ee_1}}^\eps
 	\\=&\int\tr\Big((Q-Q^{-1})a_1\alpha a_4^{-1}  a_2\beta a_3^{-1}\Big)
	S^\eps_{\frac{t_1(\eps)-\eps^2}{\eps^2}}(Qa_1\alpha_1 a_2^{-1})
 	\\&\times 
\prod_{i=2}^4S_{\frac{t_i(\eps)}{\eps^2}}(a_i\alpha_i a_{i+1}^{-1})
	S^\eps(Q)\,\nu^\eps(\mathbf{b})\,\dif \mathbf{a}\,\dif \mathbf{b}\,\dif Q.
 \end{align*}
 By Lemma \ref{lem:J1} and \eqref{eq:a-cons} as before we have
 \begin{equs}[e:F11d]
& \lim_{\eps\to 0} 
 	\frac1{2\eps^2}\Big(\E W_{l_{F_1,-}^{\ee_1}}^\eps-\E W_{l_{F_1,+}^{\ee_1}}^\eps\Big)
 	\\ 
	&= \sum_j\int\cL_j\tr(a_1\alpha a_4^{-1}  a_2\beta a_3^{-1})\cL_jp_{t_1}(a_1\alpha_1 a_2^{-1})
 		\\&\qquad\times \prod_{i=2}^4p_{t_i}(a_i\alpha_i a_{i+1}^{-1})\,
		\nu(\mathbf{b})\,\dif \mathbf{a}\,\dif \mathbf{b}=I_{1}.
 \end{equs}
 	Combining \eqref{eq:loop:g1}, \eqref{e:F41d}, \eqref{e:F11d}, the result \eqref{loop:g:2} follows.
 \end{proof}
 

 Similarly as above 
 we choose the fixed bond $\ee_3$ in the discrete master loop equation \eqref{eq:wl},  and  we have  
 \begin{equs}[eq:loop:g2]
 \E	W_l^\eps & =\frac{1}{2\eps^2} \!\!\! \sum_{l'\in \mathbb{D}^-_{\ee_3}(l)}\!\!\E W_{l'}^\eps
 -\frac{1}{2\eps^2} \!\!\! \sum_{l'\in \mathbb{D}^+_{\ee_3}(l)} \!\!\E W_{l'}^\eps.
 \end{equs}
 
 Hence, we can also obtain the following result. 
 
 \bl\label{lem:ms2} The discrete master loop equation \eqref{eq:loop:g2} converges to 
 \begin{align}\label{loop:g:3}
 2(\p_{t_2}-\p_{t_3})\E W_l+2I_3=\E W_l. 
 \end{align}
 \el 
 \begin{proof}
We consider the loop 
 \[
 l_{F_2,-}^{\ee_3}=e^\eps e_1^\eps A^\eps (e_4^\eps)^{-1}e^\eps e_2^\eps B^\eps  e_{3,\neg}'
 \]
  from the negative deformation of \eqref{e:genloop} at $\ee_3$ in the face $F_2$, where $e_{3,\neg}'$ is given by replacing $\ee_3$ in $e_3^\eps$ by $\ee_3'$, which consists $3$ bonds. 
   As in Remark \ref{re:ax}  we select the edge variables to be $I$ on the tree, determined by the path extending from $e^\eps$ to $\ee_3$, exclusive of the edge $\ee_3$. Additionally, for the sake of brevity, we overload the notation $a_3$ and denote $a_3=Q_{e_3\backslash \ee_3}$. We  have
 \begin{align*}
 	\E W_{l_{F_2,-}^{\ee_3}}^\eps=\int &\tr(a_1\alpha a_4^{-1}  a_2\beta a_3^{-1}Q_{\ee_3'}^{-1})\prod_{i=1,4}S^\eps_{\frac{t_i(\eps)}{\eps^2}}(a_i\alpha_i a_{i+1}^{-1})
 	\\&\times S^\eps_{\frac{t_2(\eps)-\eps^2}{\eps^2}}(a_2\alpha_2 a_3^{-1}Q_{\ee_3'}^{-1}) \\&\times S^\eps_{\frac{t_3(\eps)+\eps^2}{\eps^2}}(Q_{\ee_3'}a_3\alpha_3 a_4^{-1})
	\,\dif \mathbf{a}\,\nu^\eps(\mathbf{b})\,\dif \mathbf{b}\,\dif Q_{\ee_3'}.
 \end{align*}
We also have the loop
 \[
 l_{F_2,+}^{e_3}=e^\eps e_1^\eps A^\eps (e_4^\eps)^{-1}e^\eps e_2^\eps B^\eps e_{3,\pos}'
 \]
  from the positive deformation of \eqref{e:genloop} at $\ee_3$ in the face $F_2$ with 
  $e_{3,\pos}'$ given by replacing $\ee_3$ in $e_3^\eps$ by $\ee_3(\ee_3')^{-1}\ee_3$. We have
 \begin{align*}
 	\E W_{l_{F_2,+}^{\ee_3}}^\eps=\int &\tr(a_1\alpha a_4^{-1}  a_2\beta a_3^{-1}Q_{\ee_3}^{-1}Q_{\ee_3'}Q_{\ee_3}^{-1})\prod_{i=1,4}S^\eps_{\frac{t_i(\eps)}{\eps^2}}(a_i\alpha_i a_{i+1}^{-1})
 	\\&\times S^\eps_{\frac{t_2(\eps)-\eps^2}{\eps^2}}(a_2\alpha_2 a_3^{-1}Q_{\ee_3'}^{-1})
 	 S^\eps_{\frac{t_3(\eps)}{\eps^2}}(Q_{\ee_3}a_3\alpha_3 a_4^{-1})
	\\&\times S^\eps(Q_{\ee_3}^{-1}Q_{\ee_3'})\,\nu^\eps(\mathbf{b})\,\dif \mathbf{a}\,\dif \mathbf{b}.
 \end{align*}
 Using change of variable $Q_{\ee_3}Q_{\ee_3'}^{-1}\to Q$, $Q_{\ee_3}a_3\to a_3$ and recalling \eqref{EE3}, we find that
 \begin{align*}
 	\frac1{2\eps^2}\Big(\E W_{l_{F_2,-}^{\ee_3}}^\eps-\E W_{l_{F_2,+}^{\ee_3}}^\eps\Big)
=
 	\frac1{2\eps^2}\Big(\E W_{l_{F_2,-}^{\ee}}^\eps-\E W_{l_{F_2,+}^{\ee}}^\eps\Big).
 \end{align*}
 Hence, by \eqref{e:F2d} we have
 \begin{equ}
 	\lim_{\eps\to 0}\frac1{2\eps^2}\Big(\E W_{l_{F_2,-}^{\ee_3}}^\eps-\E W_{l_{F_2,+}^{\ee_3}}^\eps\Big) =  2(\p_{t_3}-\p_{t_2})\E W_l+I_3.
 \end{equ}
 
Consider the loop 
 \[
 l_{F_3,-}^{\ee_3}=e^\eps e_1^\eps A^\eps (e_4^\eps)^{-1}e^\eps e_2^\eps B^\eps e_{3,\neg}''
 \]
  from the negative deformation  of \eqref{e:genloop} at $\ee_3$ in the face $F_3$, where $e_{3,\neg}''$ is given by replacing $\ee_3$ in $e_3^\eps$ by $\ee_3''$, which consists $3$ bonds.

We also have the loop 
 \[
 l_{F_3,+}^{\ee_3}=e^\eps e_1^\eps A^\eps (e_4^\eps)^{-1}e^\eps e_2^\eps B^\eps  e_{3,\pos}''
 \]
  from the positive deformation  of \eqref{e:genloop} at $e_3^\eps$ in the face $F_3$, where $e_{3,\pos}''$ is given by replacing $\ee_3$ in $e_3^\eps$ by $\ee_3(\ee_3'')^{-1}\ee_3$, which consists $3$ bonds. 
 Using change of variable as before we find
  \begin{align*}
& \E W_{l_{F_3,-}^{\ee_3}}^\eps	-\E W_{l_{F_3,+}^{\ee_3}}^\eps
 	\\=\int &\tr\Big(a_1\alpha a_4^{-1}  a_2\beta a_3^{-1}(Q-Q^{-1})\Big)\prod_{i=1,2,4}S^\eps_{\frac{t_i(\eps)}{\eps^2}}(a_i\alpha_i a_{i+1}^{-1})
 	\\&\times S^\eps_{\frac{t_3(\eps)-\eps^2}{\eps^2}}(Q^{-1}a_3\alpha_3 a_4^{-1})S^\eps(Q)\,
	\nu^\eps(\mathbf{b})
	\,\dif \mathbf{a}\,\dif \mathbf{b}\,\dif Q.
 \end{align*}
 Similar as before and using Lemma \ref{lem:J1} we know 
 \begin{align*}
 	\lim_{\eps\to 0}\frac1{2\eps^2}\Big(\E W_{l_{F_3,-}^{\ee_3}}^\eps-\E W_{l_{F_3,+}^{\ee_3}}^\eps\Big) = I_3. 
 \end{align*}
 The result then follows. 
 \end{proof}

 \bt\label{th:g1} 
 The following linear combination of the discrete master loop equation
 \begin{align*}
 \eqref{eq:loop:g}-\frac12\eqref{eq:loop:g1}-\frac12 \eqref{eq:loop:g2}
 \end{align*}
 converges to 
 \begin{equ}[e:DHK1713]
 \Big(\frac{\p}{\p t_1}-\frac{\p}{\p t_2}+\frac{\p}{\p t_3}-\frac{\p}{\p t_4}\Big)\E W_l=\E W_{l_{1}}W_{l_{2}},
 \end{equ}
which coincides with \cite[(1.3)]{Driver17}. 
 \et 
 
\begin{proof}
This follows by combining Proposition~\ref{prop:g}, Lemma~\ref{lem:ms1} and Lemma~\ref{lem:ms2}.
\end{proof}
 
Once we have the above theorem,
we can also extend the convergence result to more general linear combinations of the discrete master loop equations. 
 Recalling the notation in \eqref{e:genloop-bar},
for $\underline{\ee}\in \underline{e}^\eps$ and $\ee_2\in e_2^\eps, \ee_4\in e_4^\eps$ we have 
  \begin{equs}[eq:loop:gn]
 	\E	W_l^\eps & =\frac{1}{2\eps^2} \!\!\! \sum_{l'\in \mathbb{D}^-_{\underline{\ee}}(l)}\!\!\E W_{l'}^\eps
 	-\frac{1}{2\eps^2} \!\!\! \sum_{l'\in \mathbb{D}^+_{\underline{\ee}}(l)} \!\!\E W_{l'}^\eps
 	-\E W_{l_1}^\eps W_{l_2}^\eps \;,
 \end{equs}
  \begin{equs}
 	\E	W_l^\eps & =\frac{1}{2\eps^2} \!\!\! \sum_{l'\in \mathbb{D}^-_{\ee_2}(l)}\!\!\E W_{l'}^\eps
 	-\frac{1}{2\eps^2} \!\!\! \sum_{l'\in \mathbb{D}^+_{\ee_2}(l)} \!\!\E W_{l'}^\eps\;,
	\label{eq:loop:g1n}
\\
 	\E	W_l^\eps & =\frac{1}{2\eps^2} \!\!\! \sum_{l'\in \mathbb{D}^-_{\ee_4}(l)}\!\!\E W_{l'}^\eps
 	-\frac{1}{2\eps^2} \!\!\! \sum_{l'\in \mathbb{D}^+_{\ee_4}(l)} \!\!\E W_{l'}^\eps\;.
	\label{eq:loop:g2n}
 \end{equs}
 
 \bc  \label{cor:lin-comb}
 For $a_i,b_j\in\R$, $i=1,\dots,4$, $j=1,2$, satisfying $\sum_{i=0}^4a_i=1$, $\sum_{j=1}^2b_j=1$, the following linear combination of the discrete loop equations
 \begin{align*}
 	b_1\eqref{eq:loop:g}+b_2\eqref{eq:loop:gn}-a_0\E W_l^\eps-a_1\eqref{eq:loop:g1}-a_2\eqref{eq:loop:g1n}-a_3 \eqref{eq:loop:g2}-a_4\eqref{eq:loop:g2n}
 \end{align*}
 converges to \eqref{e:DHK1713}. 
 \ec 
 \begin{proof} We set for $i=1,\dots,4$
 	\begin{align*}
 			\cD_i\eqdef\lim_{\eps\to0}\Big(\frac{1}{2\eps^2} \!\!\! \sum_{l'\in \mathbb{D}^-_{\ee_i}(l)}\!\!\E W_{l'}^\eps
 		-\frac{1}{2\eps^2} \!\!\! \sum_{l'\in \mathbb{D}^+_{\ee_i}(l)} \!\!\E W_{l'}^\eps\Big),
 	\end{align*}
 	and
 		\begin{align*}
 		\cD_{\ee} & \eqdef\lim_{\eps\to0}\Big(\frac{1}{2\eps^2} \!\!\! \sum_{l'\in \mathbb{D}^-_{\ee}(l)}\!\!\E W_{l'}^\eps
 		-\frac{1}{2\eps^2} \!\!\! \sum_{l'\in \mathbb{D}^+_{\ee}(l)} \!\!\E W_{l'}^\eps\Big),
\\
 		\cD_{\underline\ee} & \eqdef\lim_{\eps\to0}\Big(\frac{1}{2\eps^2} \!\!\! \sum_{l'\in \mathbb{D}^-_{\underline\ee}(l)}\!\!\E W_{l'}^\eps
 		-\frac{1}{2\eps^2} \!\!\! \sum_{l'\in \mathbb{D}^+_{\underline\ee}(l)} \!\!\E W_{l'}^\eps\Big).
 	\end{align*}
Recall \eqref{e:conv-loop} and \eqref{e:con:loops}.
Taking limits on both sides of \eqref{eq:loop:g1}, \eqref{eq:loop:g1n},  \eqref{eq:loop:g2} and \eqref{eq:loop:g2n}  we have 
 \begin{align*}
 	\E W_l=\cD_1=\cD_2=\cD_3=\cD_4.
 \end{align*}
 On the other hand, taking limits on both sides of \eqref{eq:loop:g}, \eqref{eq:loop:gn} we have
 \begin{align*}
 		\E W_l+\E W_{l_1}W_{l_2}=\cD_{\ee}=\cD_{\underline \ee}.
 \end{align*}
 Hence, 
  \begin{align*}
 	b_1\cD_{\ee}+b_2\cD_{\underline \ee}-a_0\E W_l-\sum_{i=1}^4a_i\cD_i=\cD_{\ee}-\frac12\cD_1-\frac12\cD_3.
 \end{align*}
By Theorem \ref{th:g1} the RHS is the desired combination to get the alternating sum of the area derivatives, so the result follows. 
 	\end{proof}
 
 \br\label{re:ee} 
 It is also possible to take the lattice approximation of the loop $l$ as 
 \begin{equ}
 	l^\eps=e^\eps e_1^\eps A^\eps (e_4^{\eps})^{-1}(e^\eps)^{-1} e_2^\eps B^\eps (e_3^{\eps})^{-1}
 \end{equ}
 where $e^\eps$ and $(e^\eps)^{-1}$ appear as shown in the following picture,
 instead of \eqref{e:genloop} where $e^\eps$ appears twice.
 This will affect the above proof, but will
eventually gives the same result as Theorem \ref{th:g1}. 
 \[
 \begin{tikzpicture}[scale=2]
\draw[thick,midarrow2] (-0.1,0) -- (0.1,0)   node[midway,below] {$e^\eps$};
\draw[thick,postaction={decorate,decoration={markings,mark=at position 0.85 with {\arrow{>}}}}] (0.1,0) -- (-0.1,0);
\draw[thick,->] (0.1,0)
\foreach \i in {1, 2, ..., 14} {-- ++({(0.5+0.5*(-1)^\i)*0.1}, {(0.5-0.5*(-1)^\i)*0.1})};
\node at (0.9,0.6) {$e_1^\eps$};
\draw[thick,->] (0.1,0)
\foreach \i in {1, 2, ..., 14} {-- ++({(0.5+0.5*(-1)^\i)*0.1}, {-(0.5-0.5*(-1)^\i)*0.1})};
\node at (0.9,-0.6) {$e_4^\eps$};
\draw[thick,->] (-0.1,0)
\foreach \i in {1, 2, ..., 14} {-- ++({-(0.5+0.5*(-1)^\i)*0.1}, {(0.5-0.5*(-1)^\i)*0.1})};
\node at (-0.9,0.6) {$e_2^\eps$};
\draw[thick,->] (-0.1,0)
\foreach \i in {1, 2, ..., 14} {-- ++({-(0.5+0.5*(-1)^\i)*0.1}, {-(0.5-0.5*(-1)^\i)*0.1})};
\node at (-0.9,-0.6) {$e_3^\eps$};
\node at (0, 0.7) {$F_1^\eps$};\node at (0, -0.7) {$F_3^\eps$};
\node at (0.7,0) {$F_4^\eps$};\node at (-0.7,0) {$F_2^\eps$};
\end{tikzpicture}
\]
Indeed, instead of the discrete master loop equation \eqref{eq:loop:g} we now have
  \begin{equs}[eq:loop:g1-n]
 	\E	W_l^\eps & =\frac{1}{2\eps^2} \!\!\! \sum_{l'\in \mathbb{D}^-_{\ee}(l)}\!\!\E W_{l'}^\eps
 	-\frac{1}{2\eps^2} \!\!\! \sum_{l'\in \mathbb{D}^+_{\ee}(l)} \!\!\E W_{l'}^\eps
 	+\E W_{l_1}^\eps W_{l_2}^\eps
 \end{equs}
where the sign in front of $\E W_{l_1}^\eps W_{l_2}^\eps$ becomes $+$,
since the positive splitting in \eqref{eq:loop:g} is changed to negative splitting. 
We then have the deformations w.r.t. the faces $F_1$, $F_3$ instead of the faces $F_2$, $F_4$. Exactly the same arguments as in Proposition~\ref{prop:g} show that
\begin{equs}[e:same5.5]
	\frac{1}{2\eps^2}(\E W_{l^{\ee}_{F_1,-}}^\eps
	-\E W_{l^{\ee}_{F_1,+}}^\eps) &\to I_1=2(\p_{t_4}-\p_{t_1})\E W_l+I_4,
\\
	\frac{1}{2\eps^2}(\E W_{l^{\ee}_{F_3,-}}^\eps
	-\E W_{l^{\ee}_{F_3,+}}^\eps) &\to I_3=2(\p_{t_2}-\p_{t_3})\E W_l+I_2,
\end{equs}
with  $I_m$ as in \eqref{def:Im}.
 As a result, \eqref{eq:loop:g1-n} converges to 
  \begin{align}\label{e:loop:1n}
 	&-2(\p_{t_1}-\p_{t_2}+\p_{t_3}-\p_{t_4}) \E W_l+I_2+I_4
 	=-\E W_{l_1}W_{l_2}+\E W_l.
 \end{align}
 Similarly  as Lemmas~\ref{lem:ms1}+\ref{lem:ms2},
again considering deformations at $\ee_1$ and $\ee_3$, the discrete master loop equations \eqref{eq:loop:g1} and \eqref{eq:loop:g2} converge to 
  \begin{equs}
 	2(\p_{t_4}-\p_{t_1})\E W_l+2I_4 &=\E W_l,
\\
 	2(\p_{t_2}-\p_{t_3})\E W_l+2I_2 &=\E W_l. 
 \end{equs}
 Hence,  
 $\eqref{eq:loop:g1-n}-\frac12\eqref{eq:loop:g1}-\frac12 \eqref{eq:loop:g2}$
 converges to  \eqref{e:DHK1713}, the same limiting loop equation as before.
 \er 
 
 \br\label{re:com}  
 From the above proof we have  that for $\ee $ in $e^\eps$  or $\ee_i$ in $e^\eps_i$
 \begin{equ}\label{con:defor}
 	\frac{1}{2\eps^2}(\E W_{l^{\ee}_{F_m,-}}^\eps
 	-\E W_{l^{\ee}_{F_m,+}}^\eps)\to I_m, 
 \end{equ}
 for $\ee\subset \p F_m$, with $I_m$ given in \eqref{def:Im}.
 Furthermore, if we track the proof in \cite{Driver17}, we find that  $I_m$ is just $A_{m}$ on Page 16 of \cite{Driver17}. $I_m$ also appears in \cite[(94)]{Levy11} and the proof of Proposition 6.4 there. 
 
 We also note that \eqref{loop:g:2} and \eqref{loop:g:3}   are also special master loop equations derived in (94) from \cite[Proposition 6.4]{Levy11} with $n=1$. In fact, we can use  \cite[Proposition 6.16, item 1.]{Levy11} to replace $\frac12\Delta^{e_1}f$ there by $-\frac12\E W_l$ and we get  \eqref{loop:g:2} and \eqref{loop:g:3}.
 \er

We also note that our proof of course never relies on 
the master loop equation in continuum.
In fact, if we had the continuum  loop equations 
at our disposal,
then by simply replacing $\E W_l$ and the splitting term in \eqref{eq:loop:g} by their limits 
using \eqref{e:conv-loop}\eqref{e:con:loops}, we would generally obtain that the sum of {\it all} deformations converges to some area derivatives and correction terms $I_m$.
However we  emphasize that our main contribution is to directly analyze the contribution of each deformation, which provides more informative limiting result, and also  as a corollary
we obtain a new proof of the continuum loop equation \eqref{e:DHK1713} from discrete approximations.
This approach also enables us to derive \eqref{con:defor}, which is a stronger result than simply proving the convergence of the sum of all deformations.

\subsection{Some degenerate cases}\label{sec:deg}

We conclude this section by showing how certain special and degenerate cases follow from our general results above. The first case is that the intersecting vertex $v$ above is  adjacent to less than $4$ faces: 
\[
\begin{tikzpicture}
    \draw[thick] (0,0) ellipse [x radius=1.6, y radius=1];
   \draw[thick,bend left =50] (-1.6,0) to (-0.3,0); 
    \draw[thick,bend right =50] (-1.6,0) to (-0.3,0); 
\draw[thick] (-0.3,0) .. controls (1.2,1) and (1.2,-1) .. (-0.3,0);
\node at (-0.3,-0.3) {$v$};
\end{tikzpicture}
\qquad\qquad
\begin{tikzpicture}
    \draw[thick] (0,0) ellipse [x radius=1.6, y radius=1];
   \draw[thick,bend left =50] (-1.6,0) to (-0.3,0); 
    \draw[thick,bend right =50] (-1.6,0) to (-0.3,0); 
\draw[thick] (-0.3,0) .. controls (1.2,1) and (1.2,-1) .. (-0.3,0);
\draw[thick,red] (1.6,0) -- (0.81,0); 
\node at (-0.3,-0.3) {$v$};
\node at (0,0.6) {$F_1$};\node at (0,-0.6) {$F_3$};
\node at (-0.9,0) {$F_2$};\node at (0.5,0) {$F_4$};
\end{tikzpicture}
\]
In this case
we can add a new  edge   (red  line in above picture) with an arbitrary orientation, so that $v$ is again adjacent to $4$ faces, to reduce it to the general case. 
Note that we only change the {\it graph} and do not change the {\it loop} $l$;
in particular $W_l$ remains the same.
Writing  $s=t_1+t_3 = |F_1|+|F_3|$,
 we have $p_s=p_{t_1}*p_{t_3}$,
where the convolution variable is the holonomy along the red edge,
so adding the edge simply amounts to  replacing the heat kernel  $p_s$ showing up in  Driver's formula for the Wilson loop  expectation by $p_{t_1}*p_{t_3}$.
We then apply Theorem \ref{th:g1} to get that suitable linear combination of discrete master loop equation converges to
\begin{align*}
	\Big(\frac{\p}{\p t_1}-\frac{\p}{\p t_2}+\frac{\p}{\p t_3}-\frac{\p}{\p t_4}\Big)\E W_l=\E W_{l_{1}}W_{l_{2}}.
\end{align*}
Since $\p_s \E W_l=\p_{t_1}\E W_l=\p_{t_3}\E W_l$, which follows from
 $\p_s p_s=\p_{t_1}p_{t_1}*p_{t_3} = p_{t_1}*\p_{t_3}p_{t_3}$
inside Driver's formula, we then get the following master loop equation:
\begin{align*}
	\Big(2\frac{\p}{\p s}-\frac{\p}{\p t_2}-\frac{\p}{\p t_4}\Big)\E W_l=\E W_{l_{1}}W_{l_{2}}.
\end{align*}

Another degenerate case is when some of the faces adjacent to $v$ is unbounded.
\[
\begin{tikzpicture}
\draw[thick,midarrow] (0,0) .. controls (-1,-1.2) and (1,-1.2) .. (0,0);
\draw[thick] (0,0) .. controls (1.5,0.5) and (1.5,-2)  .. (0,-2);
\draw[thick,midarrow] (0,0) .. controls (-1.5,0.5) and (-1.5,-2)  .. (0,-2);
\node at (0,0.2) {$v$};
\end{tikzpicture}
\qquad\qquad
\begin{tikzpicture}
\draw[thick] (0,0) .. controls (-1,-1.2) and (1,-1.2) .. (0,0);
\draw[thick] (0,0) .. controls (1.5,0.5) and (1.5,-2)  .. (0,-2);
\draw[thick] (0,0) .. controls (-1.5,0.5) and (-1.5,-2)  .. (0,-2);
\node at (0,-0.2) {$v$};
\draw[thick,blue] (0,-0.9) -- (0,-2);
\draw[thick,blue] (-0.64,0) .. controls (-0.64,0.7) and (0.64,0.7)  .. (0.64,0);
\node at (-0.6,-1) {$F_1$};\node at (0.6,-1) {$F_3$};
\node at (0,-0.6) {$F_2$};\node at (0,0.2) {$F_4$};
\end{tikzpicture}
\]
For this loop we can add two blue edges as in the right picture. 
By the same reason as above, the extra lines do not change the expectation of the Wilson loop. In fact, for the blue edge on the top, the Wilson loop does not depend on this edge and we can integrate it out in Driver's formula. 
For the blue edge in the bottom, for $s=t_1+t_3$ we can write $p_s=p_{t_1}*p_{t_3}$.
We then apply Theorem \ref{th:g1} to get that suitable linear combination of discrete master loop equation converges to
\begin{align*}
	\Big(\frac{\p}{\p t_1}-\frac{\p}{\p t_2}+\frac{\p}{\p t_3}-\frac{\p}{\p t_4}\Big)\E W_l=\E W_{l_{1}}W_{l_{2}}\;.
\end{align*}
As argued above $\p_s \E W_l=\p_{t_1}\E W_l=\p_{t_3}\E W_l$ and we also have $\p_{t_4}\E W_l=0$, we then get the following master loop equation:
\begin{align*}
	\Big(2\frac{\p}{\p s}-\frac{\p}{\p t_2}\Big)\E W_l=\E W_{l_{1}}W_{l_{2}}\;.
\end{align*} 
This coincides with \cite[Example 6.13]{Levy11}. In fact, it is 
$$
\text{\cite[(106)]{Levy11}}
-2\times \text{ the first equation in \cite[Example 6.13]{Levy11}}.
$$

 \section{Some extensions}\label{sec:ext}
 \subsection{Extension to strings (collections of loops)}\label{sec:ext1}

We extend the  result above on a single loop to a {\it string}  $s=(l_{1},\dots,l_{m})$, which means a collection of loops in $\R^2$, which only has simple crossings
(more precisely, each crossing is either only a simple  self-crossing of $l_i$ for some $i$, or only a simple  crossing between $l_i$ and $l_j$ for some $i\neq j$).
Define 
\begin{equ}[e:phi]
	W_s\eqdef W_{l_1}W_{l_2}\cdots W_{l_m}\;.
\end{equ}
Remark that $s$ can be viewed as a {\it graph} in the sense of Section~\ref{sec:Dri}.
Similarly as the discussion above Remark~\ref{rem:finer},
since in a graph the edges are required to meet only at their endpoints,
for each simple crossing $v$ of $s$, if $v$ belongs to an edge $e$ of this graph then $v$ must be an endpoint of $e$.
The function $W_s$ is gauge invariant for which Driver's formula applies.

Recalling the sets of mergers $\mathbb{M}^{\pm}_{U,e}$ defined below \eqref{e:merger2},
the lattice loop equation for strings with $G=U(N)$ is given by (see  \cite[Theorem~5.7]{CPS2023})
\begin{equs}[eq:ws]
	\E W_s^\eps 
	& =\frac{1}{2\eps^2} \!\!\! \sum_{s'\in \mathbb{D}^-_{\ee}(s)}\!\!\E W_{s'}^\eps
	-\frac{1}{2\eps^2} \!\!\! \sum_{s'\in \mathbb{D}^+_{\ee}(s)} \!\!\E W_{s'}^\eps
	+\!\!\!\sum_{s'\in \mS^{-}_{\ee}(s)}\!\!\E W_{s'}^\eps
	-\!\!\! \sum_{s'\in \mS^{+}_{\ee}(s)}\!\!\E W_{s'}^\eps
	\\&\qquad\qquad\qquad\qquad
	+\frac1{N^2} \!\!\!\sum_{s'\in \mM_{U,\ee}^{-}(s)}\!\!\E W_{s'}^\eps
	-\frac1{N^2}\!\!\! \sum_{s'\in \mM_{U,\ee}^{+}(s)}\!\!\E W_{s'}^\eps\,,
\end{equs}
where $\ee$ is a fixed bond in the string $s$. 
Note that the location of the bond $\ee$ is also fixed, but 
as in \eqref{e:ignore-loc} we hide these locations in our notation.

\medskip

We first consider the case of a simple self-crossing for one of the loops in $s$.

Suppose that $l_1$ in $s$  is given as $l$ in Section~\ref{sec:gen}, i.e. $l_1=e_1Ae_4^{-1}e_2Be_3^{-1}$, see Fig.~\ref{fig:DHKcase}. 
As explained above (around \eqref{e:phi}),
 the other loops $l_j$ for  $j\neq 1$ in $s$ do not cross $v$. 
 We can also assume that the edges $e_1, e_3$ do not intersect with any $l_j$ for $j\geq 2$ by choosing $e_1, e_3$ short enough as explained in Remark~\ref{rem:finer}.
The same argument as in Section \ref{sec:gen} implies that suitable linear combination of discrete master loop equation  converges to 
\begin{align}\label{ms:s}
	\Big(\frac{\p}{\p t_1}-\frac{\p}{\p t_2}+\frac{\p}{\p t_3}-\frac{\p}{\p t_4}\Big)\E W_s
	=\E \Big(W_{l_{11}}W_{l_{12}}\prod_{j=2}^m  W_{l_j}\Big)\;,
\end{align}
with $l_{11}=e_1Ae_4^{-1}$ and $l_{12}=e_2Be_3^{-1}$. 

Indeed, as before we can  find suitable lattice approximation to $s$, and an edge $e^\eps$ with $|e^\eps|\sim \eps$ to approximate the crossing point $v$, with $e_1^\eps, e_3^\eps$ as the discrete counterparts of $e_1,e_3$.  
We then apply the master loop equation \eqref{eq:ws} choosing $\ee$
to be a bond in $e^\eps$ and then bonds in $e_1^\eps, e_3^\eps$, as in Section~\ref{sec:gen}. 

Now the formula \eqref{eqc:WLd} for $W_l^\eps$ 
obviously extends to  $W_s^\eps$,
and all the formulas for $W_{l'}^\eps$ with  
$l' \in \mathbb{D}^\pm_{\ee}(l) \cup  \mathbb{D}^\pm_{\ee_1}(l) \cup   \mathbb{D}^\pm_{\ee_3}(l) $ 
in Section~\ref{sec:gen}
obviously extend to  $W_{s'}^\eps$ for 
$s' \in \mathbb{D}^\pm_{\ee}(s) \cup  \mathbb{D}^\pm_{\ee_1}(s) \cup   \mathbb{D}^\pm_{\ee_3}(s)$. 
The only differences  are:
\begin{itemize}
\item
one has an additional factor $\prod_{j\ge 2} W^\eps_{l_i}$ in the integrands,
\item
 the edge variables 
 $\alpha_i$ therein now may depend on the edges in $l_j$ for $j\ge 2$,
\item
there are additional $\mathbf{b}$ variables arising from $l_j$ for $j\ge 2$,
\item
there are additional Wilson actions in $\nu^\eps(\mathbf{b})$.
\end{itemize}
Note that whether the edge variables  $\alpha_i$ depend on the other loops
is irrelevant for all the arguments in Section~\ref{sec:gen};
in particular all the derivatives, gradients, Laplacian and integration by parts are with respect to the variables $a_i$, not $\alpha_i$.
Therefore, the other Wilson loops $W_{l_i}$ for  $i\ge 2$ do not affect the calculation in Section \ref{sec:gen}, and \eqref{ms:s} follows as before
in the limit as $\eps\to 0$ of  suitable linear combinations of discrete loop equations
as in Theorem~\ref{th:g1} or Corollary~\ref{cor:lin-comb}.

\medskip

Next, we consider the case of a simple crossing between two loops in $s$,
for which we will see a merger of these two loops.

We assume that $l_1$ and $l_2$ in $s$ have generic forms
 \[
 l_1=e_3^{-1}e_1A \;,\qquad
 l_2=e_4^{-1}e_2B
 \]
as illustrated in Fig.~\ref{fig:merger}. Here we are using similar notation as in Section~\ref{sec:gen}   where $A$ and $B$ are sequences of edges 
which do not involve 
$e_1,e_2,e_3,e_4$. 
Also  $e_i$ for $i=1,\dots,4$ do not belong to the other loops $l_j$ for $j\geq 3$ in $s$. 
The crossing  vertex $v$ belongs to every $e_i$ for $i=1,\dots,4$. 
 \begin{figure}[h] 
  \centering
 \begin{tikzpicture}[scale=1.3]
\filldraw [black] (0,0) circle (1pt); \node at (0, -0.2) {$v$};
\draw[thick,->] (0,0) -- (1,1) node[midway,right] {$e_1$};
\draw[thick,->] (0,0) -- (1,-1) node[midway,right] {$e_4$};
\draw[thick,->] (0,0) -- (-1,1) node[midway,left] {$e_2$};
\draw[thick,->] (0,0) -- (-1,-1) node[midway,left] {$e_3$};
\draw[dashed,->] (1.02,1.02) arc[start angle=35, end angle=235, radius=1.47];
\draw[dashed,->] (-1.02,1.02) arc[start angle=145, end angle=-55, radius=1.47];
\node at (-1.6,1.3) {$A$}; \node at (1.6,1.3) {$B$};
\node at (0, 0.7) {$F_1$};\node at (0, -0.7) {$F_3$};
\node at (0.7,0) {$F_4$};\node at (-0.7,0) {$F_2$};
\end{tikzpicture}
\qquad
 \begin{tikzpicture}[scale=1.3]
\filldraw [gray] (0,0) circle (1pt); 
\draw[thick,->] (0.1,0.1) -- (1,1) node[midway,right] {$e_1$};
\draw[thick,->] (0.1,-0.1) -- (1,-1) node[midway,right] {$e_4$};
\draw[thick,->] (-0.1,0.1) -- (-1,1) node[midway,left] {$e_2$};
\draw[thick,->] (-0.1,-0.1) -- (-1,-1) node[midway,left] {$e_3$};
\draw[thick] (0.1,0.1) arc[start angle=150, end angle=210, radius=0.2];
\draw[thick] (-0.1,0.1) arc[start angle=30, end angle=-30, radius=0.2];
\draw[dashed,->] (1.02,1.02) arc[start angle=35, end angle=235, radius=1.47];
\draw[dashed,->] (-1.02,1.02) arc[start angle=145, end angle=-55, radius=1.47];
\node at (-1.6,1.3) {$A$}; \node at (1.6,1.3) {$B$};
\node at (0, 0.7) {$F_1$};\node at (0, -0.7) {$F_3$};
\node at (0.7,0) {$F_4$};\node at (-0.7,0) {$F_2$};
\end{tikzpicture}
\qquad
\begin{tikzpicture}[scale=1.8]
\draw[thick] (0,-0.1) -- (0,0.1)   node[midway,right] {$e^\eps$};
\draw[thick,->] (0,0.1)
\foreach \i in {1, 2, ..., 15} {-- ++({(0.5-0.5*(-1)^\i)*0.1}, {(0.5+0.5*(-1)^\i)*0.1})};
\node at (0.7,0.5) {$e_1^\eps$};
\draw[thick,->] (0,-0.1)
\foreach \i in {1, 2, ..., 15} {-- ++({(0.5-0.5*(-1)^\i)*0.1}, {-(0.5+0.5*(-1)^\i)*0.1})};
\node at (0.7,-0.5) {$e_4^\eps$};
\draw[thick,->] (0,0.1)
\foreach \i in {1, 2, ..., 15} {-- ++({-(0.5-0.5*(-1)^\i)*0.1}, {(0.5+0.5*(-1)^\i)*0.1})};
\node at (-0.7,0.5) {$e_2^\eps$};
\draw[thick,->] (0,-0.1)
\foreach \i in {1, 2, ..., 15} {-- ++({-(0.5-0.5*(-1)^\i)*0.1}, {-(0.5+0.5*(-1)^\i)*0.1})};
\node at (-0.7,-0.5) {$e_3^\eps$};
\draw[dashed,->] (0.83,0.83) arc[start angle=35, end angle=235, radius=1.2];
\draw[dashed,->] (-0.83,0.83) arc[start angle=145, end angle=-55, radius=1.2];
\node at (-1.3,1) {$A^\eps$}; \node at (1.3,1) {$B^\eps$};
\node at (0, 0.7) {$F_1^\eps$};\node at (0, -0.7) {$F_3^\eps$};
\node at (0.7,0) {$F_4^\eps$};\node at (-0.7,0) {$F_2^\eps$};
\end{tikzpicture}
\caption{Illustrations for $\{l_1,l_2\}$, their 
merger $l_{12}$ in Theorem~\ref{th:g2}, and the  lattice approximations. 
Here the dashed lines $A,B$ only mean generic sequences of edges and can allow arbitrary shapes.} 
\label{fig:merger}
\end{figure}
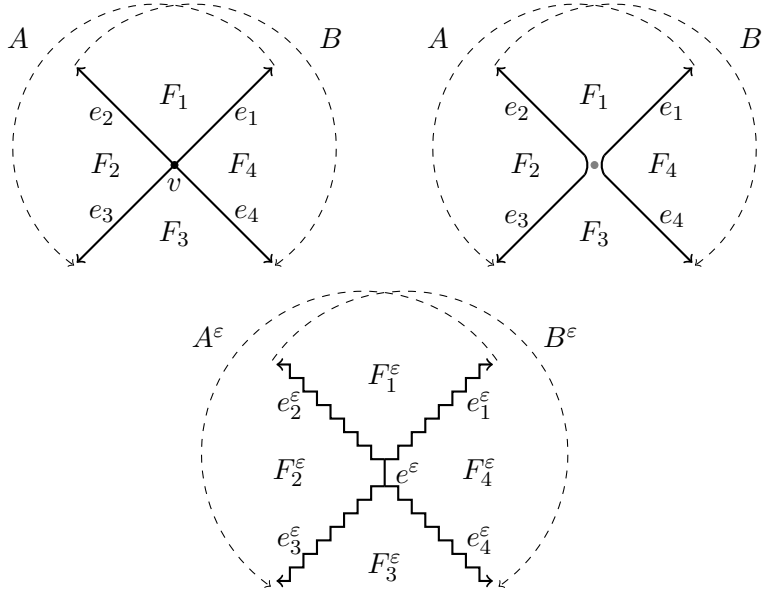

We now consider a class of  lattice approximation $\{l^\eps_1\}$ and $\{l^\eps_2\}$ to $l_1$ and $l_2$ ``with respect to $v$'', meaning that 
\begin{equ}[e:before-merge]
l^\eps_1=(e_3^{\eps})^{-1}e^\eps e_1^\eps A^\eps ,\qquad l_2^\eps=(e_4^{\eps})^{-1}e^\eps e_2^\eps B^\eps .
\end{equ}
The map $i_\eps$ as in \eqref{e:ij} takes
$(e_1,e_2,e_3,e_4,A,B)$ to $(e_1^\eps,e_2^\eps,e_3^\eps,e_4^\eps,A^\eps,B^\eps)$,
with  $e^\eps \notin \mbox{Image}(i_\eps)$.
Moreover, for $i\ge 3$, let  $\{l^\eps_i\}$
be lattice approximations of $l_i$.
We also write 
$$W_s^\eps=\prod_{i=1}^m W_{l_i}^\eps=\prod_{i=1}^m W_{l_i^\eps}.$$ 

As before,
we choose three bonds  $\ee\in e^\eps$, $\ee_1 \in e^\eps_1$,  $\ee_3\in (e^\eps_3)^{-1}$.
Applying  \eqref{eq:ws} for the bond $\ee \in e^\eps$ gives  the following master loop equation:
\begin{equs}[eq:ws-m]
	\E W_s^\eps 
	& =\frac{1}{2\eps^2} \!\!\! \sum_{s'\in \mathbb{D}^-_{\ee}(s)}\!\!\E W_{s'}^\eps
	-\frac{1}{2\eps^2} \!\!\! \sum_{s'\in \mathbb{D}^+_{\ee}(s)} \!\!\E W_{s'}^\eps
	-\frac1{N^2}\E \Big(W_{l_{12}}^\eps\prod_{i=3}^n W_{l_i}^\eps\Big)\;,
\end{equs}
with  $l_{12}^\eps$ being the merger
\[
l_{12}^\eps=(e_3^{\eps})^{-1} e^\eps e_2^\eps B^\eps (e_4^{\eps})^{-1} e^\eps e_1^\eps A^\eps\;.
\]
Moreover, applying \eqref{eq:ws} with $\ee_1$ and $\ee_3$ (both from $l_1^\eps$), gives the following master loop equations respectively:
\begin{equs}
	\E W_s^\eps 
	& =\frac{1}{2\eps^2} \!\!\! \sum_{s'\in \mathbb{D}^-_{\ee_1}(s)}\!\!\E W_{s'}^\eps
	-\frac{1}{2\eps^2} \!\!\! \sum_{s'\in \mathbb{D}^+_{\ee_1}(s)} \!\!\E W_{s'}^\eps\;,\label{eq:ws-m1}
\\
	\E W_s^\eps 
	& =\frac{1}{2\eps^2} \!\!\! \sum_{s'\in \mathbb{D}^-_{\ee_3}(s)}\!\!\E W_{s'}^\eps
	-\frac{1}{2\eps^2} \!\!\! \sum_{s'\in \mathbb{D}^+_{\ee_3}(s)} \!\!\E W_{s'}^\eps\;.\label{eq:ws-m2}
\end{equs}
Similar as in Section \ref{sec:gen} we can derive the following result. 

 \bt\label{th:g2} 
 The following linear combination of discrete master loop equation
\begin{align*}
	\eqref{eq:ws-m}-\frac12\eqref{eq:ws-m1}-\frac12 \eqref{eq:ws-m2}
\end{align*}
converges to 
\begin{equ}[e:LevyThm31]
	\Big(\frac{\p}{\p t_1}-\frac{\p}{\p t_2}+\frac{\p}{\p t_3}-\frac{\p}{\p t_4}\Big)\E W_s=\frac1{N^2}\E \Big(W_{l_{12}}\prod_{i=3}^m W_{l_{i}}\Big)\;,
\end{equ}
with $l_{12}=e_{3}^{-1}e_2Be_4^{-1}e_1A$.
\et 
The above result \eqref{e:LevyThm31} recovers \cite[Theorem 3.1]{LevyNotes}. 

\begin{proof} Using \eqref{e:conv-loop} and \eqref{e:con:loops} and Remark \ref{re:ax} we know  that $\E	W_s^\eps$ on the LHS  and $\E W_{l_{12}}^\eps \prod_{i=3}^m W_{l_i}^\eps$ on the RHS of  \eqref{eq:ws} converge to $\E W_s$ and $\E( W_{l_{12}} \prod_{i=3}^m W_{l_i})$, respectively. For the terms involving deformations in \eqref{eq:ws-m}, \eqref{eq:ws-m1} and \eqref{eq:ws-m2}, it gives the same contribution as the case in Section \ref{sec:gen}. More precisely, by Driver's formula \eqref{dri:dis} 
\begin{equs}[eqc:WLd-m]
	\E W_{s}^\eps
	=\int &\tr(Q_ea_1\alpha a_3^{-1})  
	\tr(Q_ea_2\beta a_4^{-1})
	\prod_{i=1}^4S^\eps_{\frac{t_i(\eps)}{\eps^2}}(a_i\alpha_ia_{i+1}^{-1})
	\\&\times\Big(\prod_{i=3}^m W^\eps_{l_i}\Big)
	 \nu^\eps(\mathbf{b})\, \dif \mathbf{a}\,\dif \mathbf{b},
	\end{equs} 
	 where $\mathbf{a}=(a_1,a_2,a_3,a_4)\in G^4$, $\mathbf{b}$ denote the edge variables not involving $\{a_1,a_2,a_3,a_4\}$ and $\nu^\eps(\mathbf{b})$ is a product of Wilson actions in the $\mathbf{b}$ variables.
	 Since the boundaries of the faces adjacent to the edge $e^\eps$ do not change, 
	 namely they are still $a_i$ as before,
	 based on the same discussion after \eqref{ms:s}, 
 the deformation for the bond $\ee$ in $e^\eps$ in the loop $l_1^\eps$ gives the same contribution as that in Proposition \ref{prop:g}.	 
 Hence, the discrete master loop equation \eqref{eq:ws-m} converges to  
	  \begin{align*}
	 	&2(\p_{t_1}-\p_{t_2}+\p_{t_3}-\p_{t_4}) \E W_s+I_1^s+I_3^s
	 	=\frac1{N^2}\E \Big(W_{l_{12}}\prod_{i=3}^m W_{l_{i}}\Big)+\E W_s,
	 \end{align*}
 where 
	 \begin{align*}
	 	I_m^s=\int &\<\nabla_{b_m}\tr(a_1\alpha a_3^{-1})  ,\nabla_{b_m}p_{t_m}(a_m\alpha_m a_{m+1}^{-1})\>\tr(a_2\beta a_4^{-1})
	 	\\&\times \prod_{i\in\{1,\dots,4\}\backslash\{m\}} p_{t_i}(a_i\alpha_i a_{i+1}^{-1})
		\Big(\prod_{i\ge 3} W_{l_i}\Big)
		\,\nu(\mathbf{b})\,\dif \mathbf{a}\,\dif \mathbf{b},
	 \end{align*}
	for $m=1,\dots,4$,  with $(b_1,b_2,b_3,b_4)\eqdef (a_1,a_3,a_3,a_1)$. 
	 Using exactly the same calculation as in Lemma \ref{lem:ms1} and Lemma \ref{lem:ms2} we obtain that 
	 the discrete master loop equations \eqref{eq:ws-m1} and  \eqref{eq:ws-m2} converge to   
	  \begin{align*}
	 	&2(\p_{t_1}-\p_{t_4}) \E W_s+2I_1^s
	 	=\E W_s\;,
\\
	  	&2(\p_{t_2}-\p_{t_3}) \E W_s+2I_3^s
	  	=\E W_s\;.
	  \end{align*} 
	  The result then follows.
	\end{proof}

Recall $l^\eps_1,l^\eps_2$ in \eqref{e:before-merge} but now write 
$l_2^\eps=(e_4^{\eps})^{-1}\underline{e}^\eps e_2^\eps B^\eps $
where $\underline{e}^\eps= e^\eps$.
Let us also use a shorthand notation 
$\mathrm{\bf MM}^\eps(\ee)$ to refer to the lattice Makeenko--Migdal equation  \eqref{eq:ws} 
with a particular bond $\ee$.

Once we have Theorem~\ref{th:g2}  we can again extend it to general linear combinations. The same argument as in Corollary~\ref{cor:lin-comb} together with Theorem~\ref{th:g2} gives:
 \bc \label{cor:merger}
Assuming $\sum_{i=0}^4a_i=1$, $\sum_{j=1}^2b_j=1$, then
 \begin{align*}
 	b_1\, \mathrm{\bf MM}^\eps(\ee)+b_2\, \mathrm{\bf MM}^\eps(\underline{\ee})
	-a_0\E W_s^\eps
	-\sum_{i=1}^4 a_i \,\mathrm{\bf MM}^\eps (\ee_i)
	\;\;\stackrel{\eps\to 0}{\longrightarrow}\;\; \eqref{e:LevyThm31}.
 \end{align*}
 \ec 

 \subsection{Extension to $SU(N)$ and $SO(N)$}\label{sec:e2}

We now turn to a more general formulation which includes the group $G=U(N)$ together with $SU(N)$, and  $SO(N)$.  The argument is essentially the same; the main difference being the multiplicative constants arising in the master loop equation together with two additional operations at the discrete level: twisting and extension.  To this end, 
we introduce \footnote{We borrow this notation from L\'evy in \cite{Levy11} but also caution the reader that we are deviating here from some references on lattice Yang-Mills, where $\beta$ denotes an arbitrary couping constant.}
\[
\beta=
\begin{cases}
    1, & \text{if } G=SO(N) \\
    2, & \text{if } G\in \{SU(N),U(N)\}
\end{cases}
\qquad
\gamma=
\begin{cases}
    1, & \text{if } G=SU(N) \\
    0, & \text{if } G\in \{SO(N),U(N)\}\;.
\end{cases}
\]
   To write the Makeenko--Migdal equations in a unified way, c.f. \cite[Proposition 7.3]{Dah2022II}, it is convenient to incorporate this parameter into the action and consider instead 
\begin{equation}
S^{\eps}(Q)=e^{- \frac{\beta N}{2 \eps^{2}} \text{Tr}(I-Q)} \label{EE41},
\end{equation}  
together with the corresponding Yang Mills measure.  The associated inner product is 
\begin{equation}\label{def:inn1}
\langle X,Y  \rangle= \frac{\beta N }{2}\text{Tr}(X^{*}Y).
\end{equation}

As in Remark~\ref{rem}, with the above choice  \eqref{def:inn1} of inner product,
for the standard representation  $\tau$ of  $SU(N)$ on $\C^N$, $d_\tau=N$ and $c_\tau = -1+\frac{1}{N^2}$. 
Regarding the standard representation  $\tau$ of  $SO(N)$ on $\R^N$, 
one can extend it to a representation of $SO(N)$ on $\C^N$ viewed as the complexification of $\R^N$, namely we apply $SO(N)$ matrices to the real and imaginary parts of $\C^N$ vectors separately. This is a unitary representation since it preserves the standard Hermitian inner product on $\C^N$.
One has 
$d_\tau=N$ and $c_\tau = -1+\frac{1}{N}$.

We recall the  two additional operations.

{\it Twisting.} Given a loop $l=aebec$ (where $e$ is a bond appearing twice at locations $x$ and $y$), the {\it negative twisting} of $l$  is a loop
\[
\propto_{x,y} l \eqdef ab^{-1}c\;.
\]
For  $l=aebe^{-1}c$ (where $e$ and $e^{-1}$ appear at locations $x$ and $y$ respectively),
the {\it positive twisting} of $l$ is a loop
\[
\propto_{x,y} l \eqdef aeb^{-1}e^{-1}c \;.
\]
We write  $\mT^{+}_e((x,y);l)$ (resp. $\mT^{-}_e((x,y);l)$)
for the set of  loops 
obtained from positive (resp. negative) twisting of $l$ with respect to the bond $e$ at locations $(x,y)$.
In fact, once we fix $e$ and the locations $x,y$,  then each of
$\mT^{+}_e((x,y);l)$  and $\mT^{-}_e((x,y);l)$ only contains one possible loop.

{\it Expansion.}
Finally, a positive expansion of $l$ at location $x$ by a plaquette  $p$ passing through the bond $e^{-1}$ replaces $l$ with the pair of loops $(l,p)$.  
A negative expansion of $l$ at location $x$ by a plaquette  $p$ passing through the bond $e$ replaces $l$ with the pair of loops $(l,p)$.  
The sets $\mE^{+}_e(x;l)$ and $\mE^{-}_e(x;l)$ consist of all loops obtained from positive or negative expansions of $l$ with respect to $e$ at location $x$, respectively.

As in \eqref{e:ignore-loc} we write 
$ \mT^{\pm}_e(s) = \mT^{\pm}_e((x,y);s)$, 
$\mE^{\pm}_e(s) = \mE^{\pm}_e(x;s)$ 
for simplicity.

\medskip

We consider the same loops $l$, $l_1$, $l_2$ in \eqref{e:genl}\eqref{e:genl12}
and their lattice approximations  \eqref{e:genloop}\eqref{e:genloop12}
 as in Section \ref{sec:gen}. \footnote{If instead of \eqref{e:genloop} we consider lattice loop in which  $e$ and $e^{-1}$  appear, then as in Remark~\ref{re:ee} we will find the same limiting equation.}
We will apply the lattice loop equations proved in \cite[Theorem 6.104]{CPS2023} 
with choices of bonds similarly as in Section \ref{sec:gen}.

We start by applying the single location  master loop equation at the bond $\ee$ in the edge $e^\eps$, which leads to
\begin{align}
 	  &\frac{1}{2 \eps^2} \!\!\! \sum_{l'\in \mathbb{D}^-_{\ee}(l)}\!\!\E W_{l'}^\eps-\frac{1}{2\eps^2} \!\!\! \sum_{l'\in \mathbb{D}^+_{\ee}(l)} \!\!\E W_{l'}^\eps \nonumber\\
	  &=\big (1-\frac{(2-\beta ) }{\beta N}-\frac{2\gamma}{N^{2}} \big ) \E	W_l^\eps+\E W_{l_1}^\eps W_{l_2}^\eps
 	-\frac{(2-\beta )}{\beta N} 
	\!\!\!\sum_{l'\in \mathbb{T}^-_{\ee}(l)}\!\!\E W_{l'}^\eps
	\nonumber \\
	& \qquad + \frac{\gamma}{2\eps^2} \!\!\! \sum_{l'\in \mathbb{E}^+_{\ee}(l)}\!\!\E W_{l'}^\eps
 	-\frac{\gamma}{2\eps^2} \!\!\! \sum_{l'\in \mathbb{E}^-_{\ee}(l)} \!\!\E W_{l'}^\eps \;.\label{EE30}
 \end{align} 
Note that based on the above choice of action, the coefficient of deformation is $\frac{\beta}{2 \eps^{2}}$ for $SO(N)$ and $\frac{\beta}{4 \eps^{2}}$ for $U(N)$ or $SU(N)$, and hence by definition of $\beta$ this gives $\frac{1}{2 \eps^{2}}$ in both cases. 
By definition of twisting, we simply have 
$$
\sum_{l'\in \mathbb{T}^-_{\ee}(l)}\!\!\E W_{l'}^\eps=\E W_{l_{1}l_{2}^{-1} }^{\eps}\;,
$$
where the loop $l_{1}l_{2}^{-1}$ is the negative  twisting of $l$.
Taking the continuum limit $\eps \to 0$, we find that the above equation converges to 
\begin{equs}
 	{}&2(\p_{t_1}-\p_{t_2}+\p_{t_3}-\p_{t_4}) \E W_l+(I_1+I_3)  \\
 	&=\big (1-\frac{(2-\beta ) }{\beta N}-\frac{2\gamma}{N^{2}} \big ) \E	W_l+\E W_{l_1} W_{l_2} 	-\frac{(2-\beta ) }{\beta N}\E W_{l_{1}l_{2}^{-1}} \;,
\end{equs}
where $I_1,I_3$ are defined in \eqref{def:Im}.
Indeed, the above claim follows as in the arguments in Proposition \ref{prop:g} 
except that now the metric is given by \eqref{def:inn1} (in particular  the gradient, Laplace--Beltrami operator, the heat kernel and inner product in $I_1$, $I_3$ are defined with respect to the new metric \eqref{def:inn1}). 
Moreover, when we apply Driver's formula to expansion terms as in Section \ref{sec:gen}, we can reduce the contribution from expansion terms 
$$\frac{\gamma}{2\eps^2} \!\!\! \sum_{l'\in \mathbb{E}^+_{\ee}(l)}\!\!\E W_{l'}^\eps
-\frac{\gamma}{2\eps^2} \!\!\! \sum_{l'\in \mathbb{E}^-_{\ee}(l)} \!\!\E W_{l'}^\eps $$
 to the following integral: 
 \begin{equs}[e:expansions]
	&\frac{\gamma}{2\eps^2} \int \tr((Q-Q^{-1}))\tr(Q_l) 
	S^\eps_{\frac{t_i(\eps)-\eps^2}{\eps^2}}(Qa_i\alpha_ia_{i+1}^{-1})S^\eps(Q)\dif Q
	\\&\times\prod_{k\in\{1,2,3,4\}\backslash\{i\}}
	S^\eps_{\frac{t_k(\eps)}{\eps^2}}(a_k\alpha_ka_{k+1}^{-1})\nu^\eps(\mathbf{b})\dif \mathbf{a}\dif \mathbf{b} ,
\end{equs} 
where  $Q_l$ is a product of variables $\mathbf{a}, \mathbf{b}$. 
Here $Q$ and $Q^{-1}$ are the holonomies around the new plaquettes in the positive and negative expansions, and $i=2,4$ since we can expand a plaquette into the face $F_2$ or $F_4$, see Fig.~\ref{fig:expansions}.
 As $\eps \to 0$ we apply Lemma \ref{lem:J1} to find that the first line of \eqref{e:expansions} goes to
 \begin{equ}
 \sum_{j=1}^{d(\mfg)}\tr(L_j )\tr(Q_l)\cL_jp_{t_i}( a_i\alpha_ia_{i+1}^{-1})=0,
 \end{equ}
thanks to the fact that $\tr(L_j)=0$  since $L_j$ is in the Lie algebra of $SO(N)$ and $SU(N)$.
Remark that $\tr(L_j)=0$ does not hold for $U(N)$, but the lattice loop equations in the $U(N)$ case studied in the previous sections do not have expansion terms.
Since the integral over the other variables in \eqref{e:expansions} is obviously bounded, 
\eqref{e:expansions} vanishes as $\eps\to 0$.

 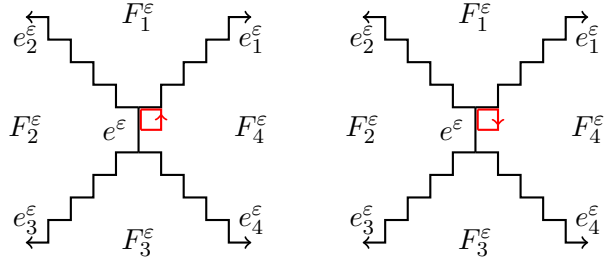
\begin{figure}[h] 
   \centering
\[
\begin{tikzpicture}[scale=3,baseline=5]
\draw[thick] (0,-0.1) -- (0,0.1)   node[midway,left] {$e^\eps$};
\draw[thick,->] (0,0.1)
\foreach \i in {1, 2, ..., 9} {-- ++({(0.5-0.5*(-1)^\i)*0.1}, {(0.5+0.5*(-1)^\i)*0.1})};
\node at (0.5,0.4) {$e_1^\eps$};
\draw[thick,->] (0,-0.1)
\foreach \i in {1, 2, ..., 9} {-- ++({(0.5-0.5*(-1)^\i)*0.1}, {-(0.5+0.5*(-1)^\i)*0.1})};
\node at (0.5,-0.4) {$e_4^\eps$};
\draw[thick,->] (0,0.1)
\foreach \i in {1, 2, ..., 9} {-- ++({-(0.5-0.5*(-1)^\i)*0.1}, {(0.5+0.5*(-1)^\i)*0.1})};
\node at (-0.5,0.4) {$e_2^\eps$};
\draw[thick,->] (0,-0.1)
\foreach \i in {1, 2, ..., 9} {-- ++({-(0.5-0.5*(-1)^\i)*0.1}, {-(0.5+0.5*(-1)^\i)*0.1})};
\node at (-0.5,-0.4) {$e_3^\eps$};
\draw[thick,red,midarrow] (0.01,0) -- (0.1,0) -- (0.1,0.09) -- (0.01,0.09);
\draw[thick,red]  (0.01,0.09) -- (0.01,0);
\node at (0, 0.5) {$F_1^\eps$};\node at (0, -0.5) {$F_3^\eps$};
\node at (0.5,0) {$F_4^\eps$};\node at (-0.5,0) {$F_2^\eps$};
\end{tikzpicture}
\qquad
\begin{tikzpicture}[scale=3,baseline=5]
\draw[thick] (0,-0.1) -- (0,0.1)   node[midway,left] {$e^\eps$};
\draw[thick,->] (0,0.1)
\foreach \i in {1, 2, ..., 9} {-- ++({(0.5-0.5*(-1)^\i)*0.1}, {(0.5+0.5*(-1)^\i)*0.1})};
\node at (0.5,0.4) {$e_1^\eps$};
\draw[thick,->] (0,-0.1)
\foreach \i in {1, 2, ..., 9} {-- ++({(0.5-0.5*(-1)^\i)*0.1}, {-(0.5+0.5*(-1)^\i)*0.1})};
\node at (0.5,-0.4) {$e_4^\eps$};
\draw[thick,->] (0,0.1)
\foreach \i in {1, 2, ..., 9} {-- ++({-(0.5-0.5*(-1)^\i)*0.1}, {(0.5+0.5*(-1)^\i)*0.1})};
\node at (-0.5,0.4) {$e_2^\eps$};
\draw[thick,->] (0,-0.1)
\foreach \i in {1, 2, ..., 9} {-- ++({-(0.5-0.5*(-1)^\i)*0.1}, {-(0.5+0.5*(-1)^\i)*0.1})};
\node at (-0.5,-0.4) {$e_3^\eps$};
\draw[thick,red,midarrow]  (0.01,0.09) -- (0.1,0.09) -- (0.1,0) -- (0.01,0);
\draw[thick,red]  (0.01,0.09) -- (0.01,0);
\node at (0, 0.5) {$F_1^\eps$};\node at (0, -0.5) {$F_3^\eps$};
\node at (0.5,0) {$F_4^\eps$};\node at (-0.5,0) {$F_2^\eps$};
\end{tikzpicture}
\]
\caption{Two of the possible expansions: one positive  and one negative.}\label{fig:expansions}
\end{figure}

\medskip

Next we apply the single location 
master loop equation at the bonds $\ee_1,\ee_3$  to find
  \begin{align}
 	  &\frac{1}{2\eps^2} \!\!\! \sum_{l'\in \mathbb{D}^-_{\ee_{1}}(l)}\!\!\E W_{l'}^\eps-\frac{1}{2\eps^2} \!\!\! \sum_{l'\in \mathbb{D}^+_{\ee_{1}}(l)} \!\!\E W_{l'}^\eps \nonumber\\
	  &=\big (1-\frac{(2-\beta ) }{\beta N}-\frac{\gamma}{N^{2}} \big ) \E	W_l^\eps	 + \frac{\gamma}{2\eps^2} \!\!\! \sum_{l'\in \mathbb{E}^+_{\ee_{1}}(l)}\!\!\E W_{l'}^\eps-\frac{\gamma}{2\eps^2} \!\!\! \sum_{l'\in \mathbb{E}^-_{\ee_{1}}(l)} \!\!\E W_{l'}^\eps\;, \label{EE31}\\
	   &\frac{1}{2\eps^2} \!\!\! \sum_{l'\in \mathbb{D}^-_{\ee_{3}}(l)}\!\!\E W_{l'}^\eps-\frac{1}{2\eps^2} \!\!\! \sum_{l'\in \mathbb{D}^+_{\ee_{3}}(l)} \!\!\E W_{l'}^\eps \nonumber\\
	  &=\big (1-\frac{(2-\beta ) }{\beta N}-\frac{\gamma}{N^{2}} \big ) \E	W_l^\eps	 + \frac{\gamma}{2\eps^2} \!\!\! \sum_{l'\in \mathbb{E}^+_{\ee_{3}}(l)}\!\!\E W_{l'}^\eps-\frac{\gamma}{2\eps^2} \!\!\! \sum_{l'\in \mathbb{E}^-_{\ee_{3}}(l)} \!\!\E W_{l'}^\eps\;.\label{EE32}
 \end{align}
In the continuum limit, following the arguments in Lemmas \ref {lem:ms1} and  \ref {lem:ms2}, these converge towards
\begin{equs}
 2(\p_{t_1}-\p_{t_4})\E W_l+2I_1&=\big (1-\frac{(2-\beta ) }{\beta N}-\frac{\gamma}{N^{2}} \big ) \E	W_l \;,  \\
 2(\p_{t_3}-\p_{t_2})\E W_l+2I_3&=\big (1-\frac{(2-\beta ) }{\beta N}-\frac{\gamma}{N^{2}} \big ) \E	W_l \;. 
\end{equs}
In particular, the expansion terms vanish, by the same argument below \eqref{e:expansions}.

\medskip
 
 Combining these observations, we obtain the following result
 \bt \label{thm:SUSO}
 For the Yang-Mills measure associated to the action \eqref{EE41}, the following linear combination of the discrete master loop equation
 \begin{align*}
\eqref{EE30}-\frac{1}{2}\eqref{EE31}-\frac{1}{2}\eqref{EE32} 
 \end{align*}
 converges to 
\begin{align}\label{eq:Wlg}
 	(\p_{t_1}-\p_{t_2}+\p_{t_3}-\p_{t_4}) \E W_l=\E W_{l_1} W_{l_2} -\frac{(2-\beta ) }{\beta N}\E W_{l_{1}l_{2}^{-1} }-\frac{\gamma}{N^{2}}\E	W_l .
\end{align}
 \et 
Notice that \eqref{eq:Wlg} corresponds to \cite[Propostion 7.3]{Dah2022II} \footnote{Note that the signs before the last two terms on the RHS of \eqref{eq:Wlg} do not align  with the master loop equation in \cite[Propostion 7.3]{Dah2022II}. This is merely a minor typo there: concerning the proof of master loop equation in \cite{Driver17}, by  substituting \cite[(67)]{Dah2022II} into \cite[(2.13)]{Driver17} we get \eqref{eq:Wlg}.}.

Using the same shorthand notation
$\mathrm{\bf MM}^\eps(\ee)$ as in Corollary~\ref{cor:merger} to refer to the lattice Makeenko--Migdal equation  
with a particular bond $\ee$,
the same argument as in Corollary~\ref{cor:lin-comb} together with Theorem~\ref{thm:SUSO} gives:
 \bc 
Assuming $\sum_{i=0}^4a_i=1$, $\sum_{j=1}^2b_j=1$, then
 \begin{align*}
 	b_1\, \mathrm{\bf MM}^\eps(\ee)+b_2\, \mathrm{\bf MM}^\eps(\underline{\ee})
	-a_0 \big (1-\frac{(2-\beta ) }{\beta N}-\frac{\gamma}{N^{2}} \big ) \E W_l^\eps
	-\sum_{i=1}^4 a_i \,\mathrm{\bf MM}^\eps (\ee_i)
 \end{align*}
 converges to \eqref{eq:Wlg}.
 \ec

\appendix
\renewcommand{\appendixname}{Appendix~\Alph{section}}
\renewcommand{\theequation}{A.\arabic{equation}}

\section{Proof of \eqref{eq:A2}}
The idea is to decompose the $\sum_{\tau\notin\cF}$ to the sum over three regions:
$$|c_\tau|\leq  \eps^{-2},\qquad \eps^{-2}<|c_\tau|\leq  (\eps^{-2})^{1+\delta},\qquad |c_\tau|>  (\eps^{-2})^{1+\delta},$$
for some fixed $\delta\in (0,1)$. We use $\cA_{2i}, i=1, 2, 3$ to denote these sums, respectively. 
Using 
\cite[Corollary A.4]{BS83} we have that for $|c_\tau|\leq \eps^{-2}$, 
\begin{align}\label{bd:atau}
	a_\tau(\eps)\leq e^{\frac{3c_\tau \eps^2}{8}+b\eps^2}\;,
\end{align}
for some constant $b>0$. 
Hence, we use \eqref{bd:atau} to obtain
\begin{align*}
	\cA_{21} &\lesssim		\sum_{\tau\notin \cF, |c_\tau|\leq \eps^{-2} }d_\tau^2a_{\tau}(\eps)^{\frac{t(\eps)}{\eps^2}}(|c_\tau|^{1/2}+(1+|c_\tau|^{2})\eps^2)
	\\& \lesssim  e^{b t(\eps)}\sum_{\tau\notin \cF,  }d_\tau^2e^{\frac{3c_\tau t(\eps)}{8}}(1+|c_\tau|)\lesssim e^{b t}\sum_{\tau\notin \cF }d_\tau^2e^{\frac{3c_\tau ( t+1)}{8}}(1+|c_\tau|)\;.
\end{align*}
This can be made arbitrarily small by choosing $\cF$ large enough since 
\begin{align*}
	\sum_{\tau  }d_\tau^2e^{\frac{3c_\tau ( t+1)}{8}}(1+|c_\tau|)=(1-\Delta)e^{\frac{3(t+1)}{8}\Delta }I<\infty\;.
\end{align*}
For $|c_\tau|\leq (\eps^{-2})^{1+\delta}$ we use \cite[(A.41)]{BS83} to  choose $\eps$ small enough such that
\begin{align}\label{bd:atau1}
	a_\tau(\eps)\leq e^{-c}<1\;,
\end{align}
for some $c>0$. By \cite[(A.42)]{BS83} we also have
\begin{align}\label{bd:dq}
	\sum_{\eps^{-2}<|c_\tau|\leq  (\eps^{-2})^{1+\delta}}d_\tau^2\lesssim (\eps^{-2})^{\frac{p}2(1+\delta)},
\end{align}
with $p=d(\mfg)$. 
Hence, we use \eqref{bd:atau1} and \eqref{bd:dq} to obtain that
\begin{align*}
	\cA_{22}&\lesssim  \sum_{\eps^{-2}<|c_\tau|\leq  (\eps^{-2})^{1+\delta} }d_\tau^2a_{\tau}(\eps)^{\frac{t(\eps)}{\eps^2}}(|c_\tau|^{1/2}+(1+|c_\tau|^{2})\eps^2)
	\\&\lesssim e^{-c \frac{t(\eps)}{\eps^2}}(\eps^{-2})^{1+2\delta}\sum_{\eps^{-2}<|c_\tau|\leq  (\eps^{-2})^{1+\delta}}d_\tau^2
	\\&\lesssim e^{-c \frac{t(\eps)}{\eps^2}}(\eps^{-2})^{(\frac{p}2+1)(1+2\delta)},
\end{align*}
which goes to zero by choosing $\eps$ small enough. 

For $|c_\tau|>(\eps^{-2})^{1+\delta}$ we use \cite[(A.54)]{BS83} to obtain
\begin{align*}
	d_\tau^2a_\tau(\eps)^2\leq e^{-c'\sqrt{|c_\tau |\eps^2}}.
\end{align*}
Hence, we obtain
\begin{align*}
	\cA_{23}\lesssim&  \sum_{|c_\tau|> (\eps^{-2})^{1+\delta} }d_\tau^2a_{\tau}(\eps)^{\frac{t(\eps)}{\eps^2}}(|c_\tau|^{1/2}+(1+|c_\tau|^{2})\eps^2)
	\\\lesssim& \sum_{|c_\tau|> (\eps^{-2})^{1+\delta} }e^{-c't(\eps)\sqrt{|c_\tau|/(\eps^2)} }(|c_\tau|^{1/2}+(1+|c_\tau|^{2})\eps^2)\;.
\end{align*}
Using \eqref{bd:dq} we know that the number of representations $\tau$ obeying $K-1<|c_\tau|\leq K$ is bounded by $C K^{p/2}$ for some universal constant $C>0$. 
Hence, 
\begin{align*}
	\cA_{23}\to0, \qquad \eps\to0.
\end{align*}
Hence, \eqref{eq:A2} follows.

 
\bibliographystyle{./Martin}
\bibliography{./refs}

\def\cprime{$'$} \def\polhk#1{\setbox0=\hbox{#1}{\ooalign{\hidewidth
  \lower1.5ex\hbox{`}\hidewidth\crcr\unhbox0}}}
\begin{thebibliography}{DGHK17}
\def\myhref#1#2{\href{#2}{\nolinkurl{#1}}}

\bibitem[AN23]{OmarRon}
\textsc{O.~Abdelghani} and \textsc{R.~Nissim}.
\newblock Geometric derivation of the finite {$N$} master loop equation.
\newblock \emph{arXiv preprint arXiv:2309.07399} (2023).

\bibitem[BCSK24]{borga2024surface}
\textsc{J.~Borga}, \textsc{S.~Cao}, and \textsc{J.~Shogren-Knaak}.
\newblock Surface sums for lattice {Yang-Mills} in the large-{$N$} limit.
\newblock \emph{arXiv preprint arXiv:2411.11676} (2024).

\bibitem[BG18]{MR3861073}
\textsc{R.~Basu} and \textsc{S.~Ganguly}.
\newblock {${\rm SO}(N)$} lattice gauge theory, planar and beyond.
\newblock \emph{Comm. Pure Appl. Math.} \textbf{71}, no.~10, (2018),
  2016--2064.
\newblock \myhref{doi:10.1002/cpa.21788}{https://dx.doi.org/10.1002/cpa.21788}.

\bibitem[BS83]{BS83}
\textsc{C.~Borgs} and \textsc{E.~Seiler}.
\newblock Lattice {Y}ang-{M}ills theory at nonzero temperature and the
  confinement problem.
\newblock \emph{Comm. Math. Phys.} \textbf{91}, no.~3, (1983), 329--380.

\bibitem[CCHS22]{CCHS2d}
\textsc{A.~{Chandra}}, \textsc{I.~{Chevyrev}}, \textsc{M.~{Hairer}}, and
  \textsc{H.~{Shen}}.
\newblock {Langevin dynamic for the {2D Yang--Mills} measure}.
\newblock \emph{Publ. Math. IH{\'E}S} (2022).
\newblock \myhref{arXiv:2006.04987}{https://arxiv.org/abs/2006.04987}.
\newblock
  \myhref{doi:10.1007/s10240-022-00132-0}{https://dx.doi.org/10.1007/s10240-022-00132-0}.

\bibitem[Cha19a]{Cha}
\textsc{S.~Chatterjee}.
\newblock Rigorous solution of strongly coupled {$SO(N)$} lattice gauge theory
  in the large {$N$} limit.
\newblock \emph{Comm. Math. Phys.} \textbf{366}, no.~1, (2019), 203--268.
\newblock
  \myhref{doi:10.1007/s00220-019-03353-3}{https://dx.doi.org/10.1007/s00220-019-03353-3}.

\bibitem[Cha19b]{Chatterjee18}
\textsc{S.~Chatterjee}.
\newblock Yang-{M}ills for probabilists.
\newblock In \emph{Probability and analysis in interacting physical systems},
  vol. 283 of \emph{Springer Proc. Math. Stat.},  1--16. Springer, Cham, 2019.

\bibitem[Che19]{Chevyrev19YM}
\textsc{I.~Chevyrev}.
\newblock Yang-{M}ills measure on the two-dimensional torus as a random
  distribution.
\newblock \emph{Comm. Math. Phys.} \textbf{372}, no.~3, (2019), 1027--1058.
\newblock
  \myhref{doi:10.1007/s00220-019-03567-5}{https://dx.doi.org/10.1007/s00220-019-03567-5}.

\bibitem[CJ16]{ChatterjeeJafar}
\textsc{S.~Chatterjee} and \textsc{J.~Jafarov}.
\newblock {The $1/N $ expansion for $SO (N)$ lattice gauge theory at strong
  coupling}.
\newblock \emph{arXiv preprint arXiv:1604.04777} (2016).

\bibitem[CPS23]{CPS2023}
\textsc{S.~Cao}, \textsc{M.~Park}, and \textsc{S.~Sheffield}.
\newblock Random surfaces and lattice {Yang--Mills}.
\newblock \emph{arXiv preprint arXiv:2307.06790} (2023).

\bibitem[CS23]{Chevyrev2023}
\textsc{I.~Chevyrev} and \textsc{H.~Shen}.
\newblock Invariant measure and universality of the {2D Yang--Mills Langevin}
  dynamic.
\newblock \emph{arXiv preprint arXiv:2302.12160} (2023).

\bibitem[Dah16]{MR3554890}
\textsc{A.~Dahlqvist}.
\newblock Free energies and fluctuations for the unitary {B}rownian motion.
\newblock \emph{Comm. Math. Phys.} \textbf{348}, no.~2, (2016), 395--444.
\newblock
  \myhref{doi:10.1007/s00220-016-2756-y}{https://dx.doi.org/10.1007/s00220-016-2756-y}.

\bibitem[DGHK17]{MR3631396}
\textsc{B.~K. Driver}, \textsc{F.~Gabriel}, \textsc{B.~C. Hall}, and
  \textsc{T.~Kemp}.
\newblock The {M}akeenko-{M}igdal equation for {Y}ang-{M}ills theory on compact
  surfaces.
\newblock \emph{Comm. Math. Phys.} \textbf{352}, no.~3, (2017), 967--978.
\newblock
  \myhref{doi:10.1007/s00220-017-2857-2}{https://dx.doi.org/10.1007/s00220-017-2857-2}.

\bibitem[DHK17]{Driver17}
\textsc{B.~K. Driver}, \textsc{B.~C. Hall}, and \textsc{T.~Kemp}.
\newblock Three proofs of the {M}akeenko-{M}igdal equation for {Y}ang-{M}ills
  theory on the plane.
\newblock \emph{Comm. Math. Phys.} \textbf{351}, no.~2, (2017), 741--774.
\newblock
  \myhref{doi:10.1007/s00220-016-2793-6}{https://dx.doi.org/10.1007/s00220-016-2793-6}.

\bibitem[DL22]{Dah2022II}
\textsc{A.~Dahlqvist} and \textsc{T.~Lemoine}.
\newblock Large {N} limit of the {Yang-Mills} measure on compact surfaces {II:
  Makeenko-Migdal} equations and planar master field.
\newblock \emph{arXiv preprint arXiv:2201.05886} (2022).

\bibitem[DL23]{dahlqvist2023large}
\textsc{A.~Dahlqvist} and \textsc{T.~Lemoine}.
\newblock Large {$N$} limit of {Y}ang-{M}ills partition function and {W}ilson
  loops on compact surfaces.
\newblock \emph{Probab. Math. Phys.} \textbf{4}, no.~4, (2023), 849--890.
\newblock
  \myhref{doi:10.2140/pmp.2023.4.849}{https://dx.doi.org/10.2140/pmp.2023.4.849}.

\bibitem[DN20]{dahlqvist2020yang}
\textsc{A.~Dahlqvist} and \textsc{J.~R. Norris}.
\newblock Yang-{M}ills measure and the master field on the sphere.
\newblock \emph{Comm. Math. Phys.} \textbf{377}, no.~2, (2020), 1163--1226.
\newblock
  \myhref{doi:10.1007/s00220-020-03773-6}{https://dx.doi.org/10.1007/s00220-020-03773-6}.

\bibitem[Dri89]{Driver89}
\textsc{B.~K. Driver}.
\newblock Y{M{${}_2$}}: continuum expectations, lattice convergence, and
  lassos.
\newblock \emph{Comm. Math. Phys.} \textbf{123}, no.~4, (1989), 575--616.
\newblock
  \myhref{doi:10.1007/BF01218586}{https://dx.doi.org/10.1007/BF01218586}.

\bibitem[Dri19]{MR3982691}
\textsc{B.~K. Driver}.
\newblock A functional integral approaches to the {M}akeenko-{M}igdal
  equations.
\newblock \emph{Comm. Math. Phys.} \textbf{370}, no.~1, (2019), 49--116.
\newblock
  \myhref{doi:10.1007/s00220-019-03492-7}{https://dx.doi.org/10.1007/s00220-019-03492-7}.

\bibitem[Fol95]{Folland}
\textsc{G.~B. Folland}.
\newblock \emph{A course in abstract harmonic analysis}.
\newblock Studies in Advanced Mathematics. CRC Press, Boca Raton, FL, 1995,
  x+276.

\bibitem[Gaw82]{Gawedzki1982}
\textsc{K.~Gaw{\c e}dzki}.
\newblock {Yang-Mills theory as Schr{\"o}dinger} quantum mechanics on the space
  of gauge-group orbits.
\newblock \emph{Physical Review D} \textbf{26}, no.~12, (1982), 3593.

\bibitem[GKS89]{MR1015789}
\textsc{L.~Gross}, \textsc{C.~King}, and \textsc{A.~Sengupta}.
\newblock Two-dimensional {Y}ang-{M}ills theory via stochastic differential
  equations.
\newblock \emph{Ann. Physics} \textbf{194}, no.~1, (1989), 65--112.
\newblock
  \myhref{doi:10.1016/0003-4916(89)90032-8}{https://dx.doi.org/10.1016/0003-4916(89)90032-8}.

\bibitem[Hal15]{Hall15}
\textsc{B.~Hall}.
\newblock \emph{Lie groups, {L}ie algebras, and representations: An elementary
  introduction}, vol. 222 of \emph{Graduate Texts in Mathematics}.
\newblock Springer, Cham, second ed., 2015,  xiv+449.
\newblock
  \myhref{doi:10.1007/978-3-319-13467-3}{https://dx.doi.org/10.1007/978-3-319-13467-3}.

\bibitem[Hal18]{hall2018large}
\textsc{B.~C. Hall}.
\newblock The large-{$N$} limit for two-dimensional {Y}ang-{M}ills theory.
\newblock \emph{Comm. Math. Phys.} \textbf{363}, no.~3, (2018), 789--828.
\newblock
  \myhref{doi:10.1007/s00220-018-3262-1}{https://dx.doi.org/10.1007/s00220-018-3262-1}.

\bibitem[Jaf16]{Jafar}
\textsc{J.~Jafarov}.
\newblock Wilson loop expectations in {$ SU (N) $} lattice gauge theory.
\newblock \emph{arXiv preprint arXiv:1610.03821} (2016).

\bibitem[L{\'e}v03]{Levy03}
\textsc{T.~L{\'e}vy}.
\newblock Yang-{M}ills measure on compact surfaces.
\newblock \emph{Mem. Amer. Math. Soc.} \textbf{166}, no. 790, (2003), xiv+122.
\newblock \myhref{arXiv:math/0101239}{https://arxiv.org/abs/math/0101239}.
\newblock \myhref{doi:10.1090/memo/0790}{https://dx.doi.org/10.1090/memo/0790}.

\bibitem[L{\'e}v10]{MR2667871}
\textsc{T.~L{\'e}vy}.
\newblock Two-dimensional {M}arkovian holonomy fields.
\newblock \emph{Ast\'erisque} , no. 329, (2010), 172.

\bibitem[L{\'e}v17]{Levy11}
\textsc{T.~L{\'e}vy}.
\newblock The master field on the plane.
\newblock \emph{Ast\'{e}risque} , no. 388, (2017), ix+201.

\bibitem[L{\'e}v20]{LevyNotes}
\textsc{T.~L{\'e}vy}.
\newblock Two-dimensional quantum {Y}ang-{M}ills theory and the
  {M}akeenko-{M}igdal equations.
\newblock In \emph{Frontiers in analysis and probability---in the spirit of the
  {S}trasbourg-{Z}\"{u}rich meetings},  275--325. Springer, Cham, [2020]
  \copyright 2020.
\newblock
  \myhref{doi:10.1007/978-3-030-56409-4\\_7}{https://dx.doi.org/10.1007/978-3-030-56409-4\%5C_7}.

\bibitem[MM79]{MM1979}
\textsc{Y.~M. Makeenko} and \textsc{A.~A. Migdal}.
\newblock Exact equation for the loop average in multicolor {QCD}.
\newblock \emph{Physics Letters B} \textbf{88}, no. 1-2, (1979), 135--137.

\bibitem[PPSY23]{PPSY2023}
\textsc{M.~Park}, \textsc{J.~Pfeffer}, \textsc{S.~Sheffield}, and
  \textsc{P.~Yu}.
\newblock Wilson loop expectations as sums over surfaces on the plane.
\newblock \emph{arXiv preprint arXiv:2305.02306} (2023).

\bibitem[Sen97]{Sengupta97}
\textsc{A.~Sengupta}.
\newblock Gauge theory on compact surfaces.
\newblock \emph{Mem. Amer. Math. Soc.} \textbf{126}, no. 600, (1997), viii+85.
\newblock \myhref{doi:10.1090/memo/0600}{https://dx.doi.org/10.1090/memo/0600}.

\bibitem[SSZ24]{SSZloop}
\textsc{H.~Shen}, \textsc{S.~A. Smith}, and \textsc{R.~Zhu}.
\newblock A new derivation of the finite {$N$} master loop equation for lattice
  {Y}ang-{M}ills.
\newblock \emph{Electron. J. Probab.} \textbf{29}, (2024), Paper No. 29, 18.
\newblock
  \myhref{doi:10.1214/24-ejp1090}{https://dx.doi.org/10.1214/24-ejp1090}.

\bibitem[SZZ23]{SZZ23}
\textsc{H.~Shen}, \textsc{R.~Zhu}, and \textsc{X.~Zhu}.
\newblock {Large $ N $ limit and $1/N $ expansion of invariant observables in
  $O(N)$ linear $\sigma$-model via SPDE}.
\newblock \emph{arXiv preprint arXiv:2306.05166} (2023).

\bibitem[SZZ24]{SZZ2024Higgs}
\textsc{H.~Shen}, \textsc{R.~Zhu}, and \textsc{X.~Zhu}.
\newblock Langevin dynamics of lattice {Yang-Mills-Higgs} and applications.
\newblock \emph{arXiv preprint arXiv:2401.13299} (2024).

\end{thebibliography}

\end{document}